\newcommand{\sabove}[2]{\stackrel{\scriptstyle {\text{#2}}}{#1}}
\newcommand{\cdp}{\mkern1mu\cdotp}
\newcounter{Lcount}
\newenvironment{conditions}
{%
  \begin{list}{\rm (\theenumi)}%
  {\noindent%
    \usecounter{enumi}%
    \setlength{\topsep}{2pt}%
    \setlength{\partopsep}{0pt}%
                \setlength{\itemsep}{2pt}%
    \setlength{\parsep}{0pt}%
    \setlength{\leftmargin}{2.5em}%
    \setlength{\labelwidth}{1.5em}%
    \setlength{\labelsep}{0.5em}%
    \setlength{\listparindent}{0pt}%
    \setlength{\itemindent}{0pt}%
  }%
}%
{\end{list}}%
\begin{document}

\title{Quantum Finite Automata and Probabilistic Reversible Automata: R-trivial Idempotent Languages
\thanks{Supported by the Latvian Council of
Science, grant No. 09.1570 and by the European Social Fund, contract
No. 2009/0216/1DP/1.1.1.2.0/09/APIA/VIAA/044.}}

\author{
Marats Golovkins
 \and Maksim Kravtsev
 \and Vasilijs Kravcevs}

\institute{Faculty of Computing, University of Latvia, Rai\c na
bulv. 19, Riga LV-1586, Latvia
\\
\email{marats AT latnet DOT lv, maksims DOT kravcevs AT lu DOT lv,\\
kvasilijs AT gmail DOT com}}

\maketitle

\begin{abstract}
We study the recognition of $\mathcal{R}$-trivial idempotent
($\mathcal{R}_1$) languages by various models of "decide-and-halt"
quantum finite automata (QFA) and probabilistic reversible automata
(DH-PRA). We introduce {\em bistochastic} QFA (MM-BQFA), a model
which generalizes both Nayak's enhanced QFA and DH-PRA. We apply
tools from algebraic automata theory and systems of linear
inequalities to give a complete characterization of $\mathcal{R}_1$
languages recognized by all these models. We also find that
"forbidden constructions" known so far do not include all of the
languages that cannot be recognized by measure-many QFA.
\end{abstract}
\section{Introduction}\label{sec:Introduction}
Measure-many quantum finite automata (MM-QFA) were defined in 1997
\cite{KW97} and since then, their language class characterization
problem remains open. The difficulties arise because the language
class is not closed under Boolean operations like union and
intersection \cite{AV00}. Later on, a probabilistic reversible
("decide-and-halt" probabilistic reversible automaton, DH-PRA) and a
more general model of quantum finite automata (enhanced quantum
finite automaton, EQFA) were defined as well, which remarkably share
with MM-QFA the same property of non-closure \cite{GK09,M08}.

Nevertheless, other probabilistic reversible and quantum models of
finite automata are known as well ("classical" probabilistic
reversible automata, C-PRA, and Latvian quantum finite automata,
LQFA), closed under Boolean operations \cite{GK02,AT04}. The
language class characterization problem for these models were solved
by help of algebraic automata theory \cite{AT04}. As a matter of
fact, the language classes of both models form the same language
variety, corresponding to the $\mathbf{EJ}$ monoid variety.

In \cite{AT04}, it is also stated that MM-QFA recognize any regular language
corresponding to the monoid variety $\mathbf{EJ}$. Since any
syntactic monoid of a unary regular language belongs to
$\mathbf{EJ}$, the results in \cite{AT04} imply that MM-QFA
recognize any unary regular language. In \cite{BP10}, MM-QFA
recognizing unary languages are studied in detail, the authors give
a new proof of this result by explicitly constructing MM-QFA recognizing unary languages.

The results by Brodsky and Pippenger \cite{BP99} combined with the
non-closure property imply that the class of languages recognized by
MM-QFA is a proper subclass of the language variety corresponding to
the $\mathbf{ER}$ monoid variety. The same holds for DH-PRA and for
EQFA \cite{GK09,M08}. In the paper, we consider a sub-variety of $\mathbf{ER}$, the variety of
$\mathcal{R}$-trivial idempotent monoids $\mathbf{R_1}$ and
determine which are the $\mathcal{R}$-trivial idempotent languages ($\mathcal{R}_1$ languages)
that are recognizable by DH-PRA, MM-QFA, EQFA and MM-BQFA ("decide-and-halt" models). Since
$\mathbf{R_1}$ shares a lot of the characteristic
properties with $\mathbf{ER}$, the obtained results may serve as an
insight to solve the general problem relevant to $\mathbf{ER}$.

The paper is structured as follows. Section \ref{sec_prelim} gives definitions used throughout the paper. Section \ref{variet} describes
the algebraic tools - monoids, morphisms and varieties. Section \ref{sec_cp_maps} considers completely positive maps. We apply von Neumann-Halperin theorem and the result
by Kuperberg to obtain Theorem \ref{theor_bist_EJ}, which is essential to prove the limitations of QFA in terms of language recognition.
Sections \ref{sec_automata}, \ref{Lin_Ineq}, \ref{construct_DH_PRA}, \ref{section_construct_MM-QFA}, \ref{section_forb_constr} present the main results of the paper:
\begin{conditions}
\item Introduction of MM-BQFA, a model which generalizes the earlier
"decide-and-halt" automata models (Section \ref{sec_automata}, Definition \ref{def_BQFA});
\item Definition of systems of linear inequalities corresponding to $\mathcal{R}_1$ languages. Proof that any $\mathcal{R}_1$ language cannot be
recognized by the "decide-and-halt" models, if its system of linear inequalities is not consistent.
(Section \ref{Lin_Ineq}, Definition \ref{def_inequalities}, Theorem \ref{theor_sys_not_consist});
\item The construction of DH-PRA (this presumes also EQFA and MM-BQFA) and MM-QFA for any $\mathcal{R}_1$ language having a consistent system of inequalities.
Consequently, we obtain that all four "decide-and-halt" models recognize exactly the same $\mathcal{R}_1$ languages. An $\mathcal{R}_1$ language is recognizable
by any of these models if and only if the corresponding system of linear inequalities is consistent.
(Sections \ref{construct_DH_PRA}, \ref{section_construct_MM-QFA}, Theorems \ref{theor_sys_consist}, \ref{theor_sys_consist_QFA}, \ref{theor_iff1}, \ref{theor_iff2});
\item The proof that the "forbidden constructions" known from \cite{AV00} do not give all of the languages that cannot be recognized by MM-QFA (Section \ref{section_forb_constr}, Theorem \ref{theor_forb_constr}).
\end{conditions}
Among other results, we obtain the language class recognized by MO-BQFA (Theorem \ref{theor_EJ_MO-BQFA}) and give some closure properties of MM-BQFA (Theorems \ref{theor_MM_BQFA_subset_ER}, \ref{theor_closed_compl}
and Corollaries \ref{cor_subset}, \ref{cor_not_closed}).
\section{Preliminaries}\label{sec_prelim}
Given an alphabet $A$, let $A^*$ be the set of words over alphabet
$A$. Given a word $\mathbf{x}$, let $|\mathbf{x}|$ be the length of
$\mathbf{x}$. Introduce a partial order $\leqslant$ on $A^*$, let
$\mathbf{x}\leqslant\mathbf{y}$ if and only if there exists
$\mathbf{z}\in A^*$ such that $\mathbf{xz}=\mathbf{y}$.

Let $\mathcal{P}(A)$ be the set of subsets of $A$, including the
empty set $\emptyset$. Note that there is a
natural partial order on $\mathcal{P}(A)$, i.e., the subset order. Given a word $\mathbf{s}\in A^*$, let
$\mathbf{s}\omega$ be the set of letters of the word $\mathbf{s}$.
We say that $\mathbf{u},\mathbf{v}\in A^*$ are equivalent with
respect to $\omega$, $\mathbf{u}\sim_\omega\mathbf{v}$, if
$\mathbf{u}\omega=\mathbf{v}\omega$ (that is, $\mathbf{u}$ and
$\mathbf{v}$ consist of the same set of letters). Note that
$\sim_\omega$ is an equivalence relation.
The function $\omega$ is a morphism;
$(\mathbf{uv})\omega=\mathbf{u}\omega\cup\mathbf{v}\omega$.
Moreover, $\omega$ preserves the order relation since
$\mathbf{u}\leqslant\mathbf{v}$ implies
$\mathbf{u}\omega\subseteq\mathbf{v}\omega$.

Let $\mathcal{F}(A)$ be the set of all words over the alphabet $A$
that do not contain any repeated letters. The empty word
$\varepsilon$ is an element of $\mathcal{F}(A)$. Let $\tau$ be a
function such that for every $\mathbf{s}\in A^*$, any repeated
letters in $\mathbf{s}$ are deleted, leaving only the first
occurrence. We say that $\mathbf{u},\mathbf{v}\in A^*$ are
equivalent with respect to $\tau$, $\mathbf{u}\sim_\tau\mathbf{v}$,
if $\mathbf{u}\tau=\mathbf{v}\tau$. Note that $\sim_\tau$ is an
equivalence relation. Introduce a partial order $\leqslant$ on
$\mathcal{F}(A)$, let $\mathbf{v_1}\leqslant\mathbf{v_2}$ if and
only if there exists $\mathbf{v}\in\mathcal{F}(A)$ such that
$\mathbf{v_1}\mathbf{v}=\mathbf{v_2}$.

A deterministic finite automaton $\mathcal{A}$ is a tuple
$(Q,A,q_0,\cdp)$, where $Q$ - a set of states, $A$ - a finite
alphabet, $q_0$ - an initial state and $\cdp$ is a transition
function, that is, an everywhere defined function from $Q\times A$
to $Q$.  We say that a state $q$ of the
automaton $\mathcal{A}$ {\em accepts} a word $\mathbf{x}\in A^*$, if
the input $\mathbf{x}$ sets $\mathcal{A}$ into the state $q$. Given an automaton $(Q,A,q_0,\cdp)$, one may assign to it a
set of final states $Q_F$, a subset of $Q$. The resulting automaton
is denoted by $(Q,A,q_0,\cdp, Q_F)$.
\section{Monoids and Varieties}\label{variet}
A general overview on varieties of finite semigroups, monoids as
well as operations on them is given in \cite{P86}. It can also serve as a source for the definitions of morphisms and word quotients.

Unless specified otherwise, the monoids discussed in this section
are assumed to be finite.

An element $e$ of a monoid $\mathcal{M}$ is called an {\em idempotent}, if $e^2=e$.
It is a well-known fact that for any monoid $\mathcal{M}$ there exists $k>0$ such that for any element $x\in\mathcal{M}$ $x^k$ is idempotent. Moreover, if $x^k$ and $x^l$ both are idempotents, then $x^k=x^l$.
If $x$ is an element of a monoid $\mathcal{M}$, the unique
idempotent of the subsemigroup of $\mathcal{M}$ generated by $x$ is
denoted by $x^\omega$. The set of idempotents of the monoid
$\mathcal{M}$ is denoted by $E(\mathcal{M})$.

Given a regular language $L\subseteq A^*$, words $\mathbf{u},\mathbf{v}\in A^*$ are called {\em syntactically congruent}, $\mathbf{u}\sim_L\mathbf{v}$, if for all
$\mathbf{x},\mathbf{y}\in A^*$ $\mathbf{xuy}\in L$ if and only if $\mathbf{xvy}\in L$. The set of equivalence classes $A^*/\sim_L$ is a monoid, called {\em syntactic monoid} of $L$ and denoted $\mathcal{M}(L)$.
The morphism $\varphi$ from $A^*$ to $A^*/\sim_L$ is called {\em syntactic morphism}.

Given a monoid variety $\mathbf{V}$, the corresponding language
variety is denoted by $\boldsymbol{\mathcal{V}}$. The set of
languages over alphabet $A$ recognized by monoids in $\mathbf{V}$ is
denoted by $A^*\boldsymbol{\mathcal{V}}$.
\subsection{Varieties Definitions}\label{vardefs}
The monoid varieties used in this paper may be defined by some simple identities. For example, a monoid $\mathcal{M}$ belongs to the variety defined by an identity $[\![xy=yx]\!]$ if and only if for any
$x,y\in\mathcal{M}$ $xy=yx$.
In this paper, we shall refer to the following monoid varieties:

\begin{list}{(\arabic{Lcount})}{\usecounter{Lcount}\setlength{\itemindent}{\leftmargin}}
    \item $\mathbf{G} = [\![x^\omega=1]\!]$, the variety of groups.\\ The respective language variety is
    denoted $\boldsymbol{\mathcal{G}}$;

    \item $\mathbf{J_1} = [\![x^2=x,\ xy=yx]\!]$, the variety of
    commutative and idempotent monoids, also known as semilattice
    monoids.\\
    The respective language variety - $\boldsymbol{\mathcal{J}_1}$ (semilattice languages);

    \item $\mathbf{R_1}=[\![xyx=xy]\!]$, the variety of
    $\mathcal{R}$-trivial idempotent monoids, also known as left regular band
    monoids.
    The respective language variety - $\boldsymbol{\mathcal{R}_1}$ ($\mathcal{R}$-trivial
    idempotent languages, or $\mathcal{R}_1$ languages);

    \item $\mathbf{ER_1} = [\![x^\omega y^\omega x^\omega=x^\omega y^\omega]\!]$, the variety
    of such monoids $\mathcal{M}$ that $E(\mathcal{M})$ is an $\mathcal{R}$-trivial idempotent monoid.
    This
    variety is equal to $\mathbf{R_1*G}$ \cite{GP10}, the variety generated by semi\-direct products of $\mathcal{R}$-trivial
    idempotent monoids by groups. The respective language variety -
    $\boldsymbol{\mathcal{ER}_1}$;

    \item $\mathbf{J} = [\![x^\omega x=x^\omega,\
    (xy)^\omega=(yx)^\omega]\!] = [\![(xy)^\omega x=(xy)^\omega,\
    x(yx)^\omega=(yx)^\omega]\!]$, the variety of
    $\mathcal{J}$-trivial monoids. The respective language variety - $\boldsymbol{\mathcal{J}}$;

    \item $\mathbf{R}=[\![(xy)^\omega x=(xy)^\omega]\!]$, the variety of
    $\mathcal{R}$-trivial monoids.\\ The respective language variety -
    $\boldsymbol{\mathcal{R}}$;

    \item $\mathbf{EJ} = [\![(x^\omega y^\omega)^\omega=(y^\omega x^\omega)^\omega]\!]=
    [\![(x^\omega y^\omega)^\omega x^\omega=(x^\omega y^\omega)^\omega,\
    x^\omega(y^\omega x^\omega)^\omega=(y^\omega x^\omega)^\omega]\!]$, the variety of such monoids
    $\mathcal{M}$ that $E(\mathcal{M})$ generates a $\mathcal{J}$-trivial monoid. This
    variety is equal to $\mathbf{J*G}$, the variety generated by semi\-direct products of $\mathcal{J}$-trivial monoids
    by groups \cite{P95}.\\ The respective language variety - $\boldsymbol{\mathcal{EJ}}$;

    \item $\mathbf{ER} = [\![(x^\omega y^\omega)^\omega x^\omega=(x^\omega y^\omega)^\omega]\!]$, the variety considered
    in \cite{E76}. It is the variety
    of such monoids $\mathcal{M}$ that $E(\mathcal{M})$ generates an $\mathcal{R}$-trivial monoid \cite[p.132]{A94}. This
    variety is equal to $\mathbf{R*G}$, the variety generated by semi\-direct products of $\mathcal{R}$-trivial
    monoids by groups \cite[p.344]{A94}.\\ The respective language variety -
    $\boldsymbol{\mathcal{ER}}$.
\end{list}

It is possible to check that
$\mathbf{J_1}\subset\mathbf{J}\subset\mathbf{EJ}$,
$\mathbf{R_1}\subset\mathbf{R}\subset\mathbf{ER}$,
$\mathbf{R_1}\subset\mathbf{ER_1}\subset\mathbf{ER}$,
$\mathbf{J_1}\subset\mathbf{R_1}$, $\mathbf{J}\subset\mathbf{R}$ and
$\mathbf{G}\subset\mathbf{EJ}\subset\mathbf{ER}$.
\subsection{Semilattice Languages and Free Semilattices}
\label{J1subsec} We need some characterizations for semilattice
languages.

\begin{definition}
A free semilattice over an alphabet $A$ is a monoid
$(\mathcal{P}(A),\cup)$, where $\cup$ is the ordinary set union.
\end{definition}

For any alphabet $A$, the free semilattice $\mathcal{P}(A)$
satisfies the identities of $\mathbf{J_1}$, therefore
$\mathcal{P}(A)\in\mathbf{J_1}$.

For the sake of completeness, we give a proof for the following
\begin{proposition}
Given a language $L\in A^*\boldsymbol{\mathcal{J}_1}$, the free
semilattice $\mathcal{P}(A)$ is divided by $\mathcal{M}(L)$.
\end{proposition}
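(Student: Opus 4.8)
The plan is to exhibit $\mathcal{M}(L)$ as a homomorphic image of the free semilattice $\mathcal{P}(A)$, which immediately gives that $\mathcal{M}(L)$ divides $\mathcal{P}(A)$ (taking the full submonoid as the dividing submonoid). First I would observe that since $L\in A^*\boldsymbol{\mathcal{J}_1}$, the language $L$ is recognized by some monoid in $\mathbf{J_1}$; because the syntactic monoid $\mathcal{M}(L)$ divides every monoid recognizing $L$ and $\mathbf{J_1}$ is a variety (hence closed under division), it follows that $\mathcal{M}(L)\in\mathbf{J_1}$. Thus $\mathcal{M}(L)$ is a finite commutative idempotent monoid, and this is the single structural fact the rest of the argument rests on.

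Next I would use the two morphisms already available: the surjective morphism $\omega\colon A^*\to\mathcal{P}(A)$, $\mathbf{w}\mapsto\mathbf{w}\omega$, sending a word to its set of letters, and the syntactic morphism $\varphi\colon A^*\to\mathcal{M}(L)$, which is also surjective. The goal reduces to factoring $\varphi$ through $\omega$, that is, producing a morphism $\psi\colon\mathcal{P}(A)\to\mathcal{M}(L)$ with $\varphi=\psi\circ\omega$. By the universal property of the quotient morphism $\omega$, such a $\psi$ exists precisely when $\sim_\omega$ refines $\sim_L$, i.e.\ when $\mathbf{u}\omega=\mathbf{v}\omega$ implies $\varphi(\mathbf{u})=\varphi(\mathbf{v})$.

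The key step is verifying this refinement, and here the identities of $\mathbf{J_1}$ do all the work. Writing $\mathbf{u}=a_1\cdots a_n$, we have $\varphi(\mathbf{u})=\varphi(a_1)\cdots\varphi(a_n)$; since $\mathcal{M}(L)$ is commutative I may reorder these factors and since it is idempotent I may discard duplicated ones, so $\varphi(\mathbf{u})$ equals the product of the $\varphi(a)$ taken over the distinct letters $a\in\mathbf{u}\omega$. Consequently $\varphi(\mathbf{u})$ depends only on the set $\mathbf{u}\omega$, and $\mathbf{u}\omega=\mathbf{v}\omega$ yields $\varphi(\mathbf{u})=\varphi(\mathbf{v})$. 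This produces the desired $\psi$, which is surjective because $\varphi$ is. Hence $\mathcal{M}(L)$ is a quotient of $\mathcal{P}(A)$ and therefore divides it.

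I do not anticipate a serious obstacle: the argument is essentially the observation that $\mathcal{P}(A)$ is the free object of $\mathbf{J_1}$ on the alphabet $A$, so every $A$-generated member of $\mathbf{J_1}$ — in particular $\mathcal{M}(L)$ — is one of its quotients. The only point requiring care is confirming that the order-independence and repetition-independence used to define $\psi$ are exactly the commutativity and idempotency identities of $\mathbf{J_1}$, and that $\psi$ is genuinely a monoid morphism, with the empty word and empty set mapping to the identity and $\psi(S\cup T)=\psi(S)\psi(T)$ following from those same two identities.
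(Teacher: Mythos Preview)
Your proposal is correct and follows essentially the same route as the paper: both arguments produce a surjective morphism $\mathcal{P}(A)\to\mathcal{M}(L)$ by observing that the $\mathbf{J_1}$ identities force $\varphi(\mathbf{u})$ to depend only on $\mathbf{u}\omega$, so that $\varphi$ factors through $\omega$ (the paper writes the resulting map as $\omega^{-1}\varphi$, your $\psi$). Your framing via the universal property of the quotient and the remark that $\mathcal{P}(A)$ is the free $\mathbf{J_1}$-object on $A$ is slightly more conceptual, but the verification is identical.
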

\begin{proof}
Let $\varphi$ be the syntactic morphism from $A^*$ to
$\mathcal{M}(L)$. It suffices to prove that $\omega^{-1}\varphi$ is
a surjective morphism.

Let $\mathbf{s_1},\mathbf{s_2}\in A^*$. Since
$\mathcal{M}(L)\in\mathbf{J_1}$,
$\mathbf{s_1}\sim_\omega\mathbf{s_2}$ implies
$\mathbf{s_1}\varphi=\mathbf{s_2}\varphi$. Let $p\in
\mathcal{P}(A)$. Let $\mathbf{t_1},\mathbf{t_2}\in p\omega^{-1}$.
Now, since $\mathbf{t_1}\sim_\omega\mathbf{t_2}$,
$\mathbf{t_1}\varphi=\mathbf{t_2}\varphi$. Hence
$\omega^{-1}\varphi$ is a function.

Let $p_1,p_2\in\mathcal{P}(A)$. Let $\mathbf{s_1}\in
(p_1p_2)\omega^{-1}$ and let $\mathbf{s_2}\in
(p_1\omega^{-1})(p_2\omega^{-1})$. The words $\mathbf{s_1}$ and
$\mathbf{s_2}$ consist of the same set of letters, so
$\mathbf{s_1}\sim_\omega\mathbf{s_2}$. Therefore
$\mathbf{s_1}\varphi=\mathbf{s_2}\varphi$, hence
$((p_1p_2)\omega^{-1})\varphi$ $=$
$((p_1\omega^{-1})(p_2\omega^{-1}))\varphi$ $=$
$p_1(\omega^{-1}\varphi)p_2(\omega^{-1}\varphi)$. So
$\omega^{-1}\varphi$ is a morphism.

The morphism $\varphi$ is surjective and $\omega$ is everywhere
defined, therefore $\omega^{-1}\varphi$ is surjective. \qed
\end{proof}
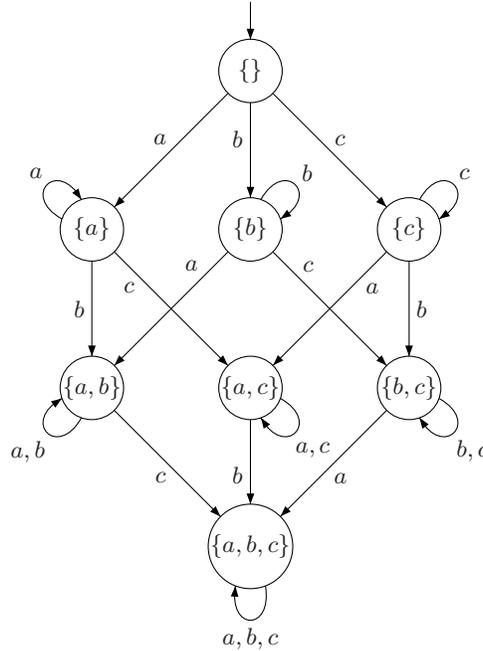
\begin{figure}[H]
    \begin{center}
        \unitlength=4pt
        \begin{picture}(30,54)(0,-6)
        \gasset{Nw=6,Nh=6,Nmr=6,curvedepth=0}
        \thinlines
        \node[Nmarks=i,iangle=90](E)(15,45){$\{\}$}
        \node(A)(0,30){$\{a\}$}
        \node(B)(15,30){$\{b\}$}
        \node(C)(30,30){$\{c\}$}
        \node(AB)(0,15){$\{a,b\}$}
        \node(AC)(15,15){$\{a,c\}$}
        \node(BC)(30,15){$\{b,c\}$}
        \gasset{Nw=8,Nh=8,Nmr=6,curvedepth=0}
        \node(ABC)(15,0){$\{a,b,c\}$}
        \drawloop[loopdiam=3,loopangle=135](A){$a$}
        \drawloop[loopdiam=3,loopangle=45](B){$b$}
        \drawloop[loopdiam=3,loopangle=45](C){$c$}
        \drawloop[loopdiam=3,loopangle=-135](AB){$a,b$}
        \drawloop[loopdiam=3,loopangle=-45](AC){$a,c$}
        \drawloop[loopdiam=3,loopangle=-45](BC){$b,c$}
        \drawloop[loopdiam=3,loopangle=-90](ABC){$a,b,c$}
        \drawedge[ELside=r,curvedepth=0](E,A){$a$}
        \drawedge[ELpos=43,ELside=r,curvedepth=0](E,B){$b$}
        \drawedge[ELside=l,curvedepth=0](E,C){$c$}
        \drawedge[ELside=r,curvedepth=0](A,AB){$b$}
        \drawedge[ELpos=30,ELside=r,curvedepth=0](A,AC){$c$}
        \drawedge[ELpos=30,ELside=r,curvedepth=0](B,AB){$a$}
        \drawedge[ELpos=30,ELside=l,curvedepth=0](B,BC){$c$}
        \drawedge[ELpos=30,ELside=l,curvedepth=0](C,AC){$a$}
        \drawedge[ELside=l,curvedepth=0](C,BC){$b$}
        \drawedge[ELside=r,curvedepth=0](AB,ABC){$c$}
        \drawedge[ELpos=55,ELside=r,curvedepth=0](AC,ABC){$b$}
        \drawedge[ELside=l,curvedepth=0](BC,ABC){$a$}
        \end{picture}
    \end{center}
    \caption{Free semilattice over $\{a,b,c\}$.}\label{fig.J1}
\end{figure}
An immediate consequence \cite[p.17, Prop. 2.7]{P86} is that
$\mathcal{P}(A)$ recognizes any language $L$ in
$A^*\boldsymbol{\mathcal{J}_1}$. Moreover, $L$ is a disjoint union
of some languages $X_1\omega^{-1}$,$...$, $X_n\omega^{-1}$, where
$X_1$,$...$, $X_n\in\mathcal{P}(A)$.

Thus, taking into account \cite[p.40, Prop. 3.10]{P86}, the following characterizations have been established:

\begin{theorem}\label{J1lang}
Let $L$ be a language over alphabet $A$. The following conditions are
equivalent:
\begin{conditions}
\item The syntactic monoid of $L$ belongs to the variety
$\mathbf{J_1}$;
\item $L$ is a Boolean combination of languages of the form
$A^*aA^*$, where $a\in A$;
\item $L$ is a Boolean combination of languages of the form
$B^*$, where $B\subseteq A$;
\item \label{J1disj} $L$ is a disjoint union of languages of the form
$X_1\omega^{-1},...,X_n\omega^{-1}$, where $X_1,...$,
$X_n\in\mathcal{P}(A)$.
\end{conditions}
\end{theorem}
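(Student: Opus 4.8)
The plan is to prove the four conditions equivalent through the cycle of implications $(1)\Rightarrow(4)\Rightarrow(3)\Rightarrow(2)\Rightarrow(1)$. The first implication is essentially already in hand: by the Proposition just proved, the syntactic monoid $\mathcal{M}(L)$ divides the free semilattice $\mathcal{P}(A)$, so by \cite[p.17, Prop. 2.7]{P86} the morphism $\omega$ recognizes $L$. Since $\omega$ has finite image, $L$ is the union of the preimages $X\omega^{-1}$ over those $X\in\mathcal{P}(A)$ that lie in the accepting set, and these preimages are pairwise disjoint because $\omega$ is a function. This is precisely condition (4), as already noted in the remark preceding the theorem.

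For $(4)\Rightarrow(3)$ I would rewrite a single class $X\omega^{-1}$, the words whose letter set is exactly $X$, using only star languages $B^{*}$. A word lies in $X\omega^{-1}$ if and only if it uses only letters of $X$ and omits none of them, so
$$X\omega^{-1}=X^{*}\setminus\bigcup_{a\in X}(X\setminus\{a\})^{*}.$$
This is a Boolean combination of languages of the form $B^{*}$, and a finite disjoint union of such combinations is again one, yielding (3). For $(3)\Rightarrow(2)$ I would observe that $B^{*}$ consists of the words avoiding every letter outside $B$, that is, $B^{*}=\bigcap_{a\in A\setminus B}\overline{A^{*}aA^{*}}$, because $A^{*}aA^{*}$ is exactly the set of words containing at least one $a$. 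Hence every $B^{*}$, and thus every Boolean combination of them, is a Boolean combination of the languages $A^{*}aA^{*}$.

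Finally, $(2)\Rightarrow(1)$ is the step that genuinely invokes the algebra. The syntactic monoid of each $A^{*}aA^{*}$ is the two-element semilattice monoid $\{1,e\}$ with $e^{2}=e$, which lies in $\mathbf{J_1}$. Since a Boolean combination of finitely many languages is recognized by the direct product of monoids recognizing the components, $\mathcal{M}(L)$ divides a finite product of copies of this semilattice; and $\mathbf{J_1}$, being a monoid variety, is closed under direct products and division. Therefore $\mathcal{M}(L)\in\mathbf{J_1}$, which is condition (1). This closes the cycle.

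The set-theoretic identities in the middle steps are routine, so the main obstacle is isolating the last implication cleanly. It rests on the general fact that the syntactic monoid of a Boolean combination divides the direct product of the syntactic monoids of the constituent languages, together with the closure of $\mathbf{J_1}$ under products and division. These facts I would import from \cite[p.40, Prop. 3.10]{P86} rather than reprove by hand, exactly as the statement anticipates.
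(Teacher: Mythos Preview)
Your proof is correct and follows the paper's line: the implication $(1)\Rightarrow(4)$ is obtained exactly as the paper does, from the preceding Proposition together with \cite[p.17, Prop.~2.7]{P86}, and the remaining equivalences that you spell out via the elementary set identities $X\omega^{-1}=X^{*}\setminus\bigcup_{a\in X}(X\setminus\{a\})^{*}$ and $B^{*}=\bigcap_{a\in A\setminus B}\overline{A^{*}aA^{*}}$ are precisely what the paper imports wholesale by citing \cite[p.40, Prop.~3.10]{P86}. The only difference is granularity, not substance.
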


Therefore, in order to specify a particular language $L\in
A^*\boldsymbol{\mathcal{J}_1}$, one may identify it by indicating a
particular subset of $\mathcal{P}(A)$.

Given a free semilattice $\mathcal{P}(A)$, one may represent it as a
deterministic finite automaton $(\mathcal{P}(A),A,\emptyset,\cdp)$,
where for every $X\in\mathcal{P}(A)$ and for every $a\in A$, $X\cdp
a=X\cup\{a\}$. By Theorem \ref{J1lang} (\ref{J1disj}), for any
semilattice language $L$ over alphabet $A$, $L\omega$ is a set of
final states, such that the automaton recognizes the language.

A free semilattice over $\{a,b,c\}$ represented as a finite
automaton is depicted in Figure \ref{fig.J1}.

The states of $(\mathcal{P}(A),A,\emptyset,\cdp)$ can be separated
into several levels, i.e., a state is at level $k$ if it corresponds
to an element in $\mathcal{P}(A)$ of cardinality $k$.
\subsection{$\mathcal{R}_1$ languages and Free Left Regular Bands}
\label{subsec_R1} We also need some characterizations for $\mathcal{R}_1$ languages.

\begin{definition}
A free left regular band over an alphabet $A$ is a monoid
$(\mathcal{F}(A),\cdp)$, where
$\mathbf{x}\cdp\mathbf{y}=(\mathbf{xy})\tau$, i.e., concatenation
followed by the application of $\tau$.
\end{definition}

The function $\tau$ is a morphism; for any $\mathbf{u},\mathbf{v}\in
A^*$ $(\mathbf{uv})\tau=\mathbf{u}\tau\cdp\mathbf{v}\tau$. Moreover,
$\tau$ preserves the order relation since
$\mathbf{u}\leqslant\mathbf{v}$ implies
$\mathbf{u}\tau\leqslant\mathbf{v}\tau$.

For any alphabet $A$, the free left regular band $\mathcal{F}(A)$
satisfies the identities of $\mathbf{R_1}$, therefore
$\mathcal{F}(A)\in\mathbf{R_1}$.

Characterizations of $\mathcal{R}_1$ languages are
established in \cite{PT84}:
\begin{theorem}\label{R1lang}
Let $L$ be a language over alphabet $A$. The following conditions are
equivalent:
\begin{conditions}
\item The syntactic monoid of $L$ belongs to the variety
$\mathbf{R_1}$;
\item $L$ is a Boolean combination of languages of the form
$B^*aA^*$, where $a\in A$ and $B\subseteq A$;
\item \label{R1disj} $L$ is a disjoint union of languages
of the form
$$a_1a_1^*a_2\{a_1,a_2\}^*a_3\{a_1,a_2,a_3\}^*...a_m\{a_1,a_2,...,a_m\}^*,$$
where the $a_i$'s are distinct letters of $A$.
\end{conditions}
\end{theorem}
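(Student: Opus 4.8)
The plan is to mirror the argument already carried out for semilattice languages in Theorem~\ref{J1lang}, replacing the free semilattice $\mathcal{P}(A)$ and the morphism $\omega$ by the free left regular band $\mathcal{F}(A)$ and the morphism $\tau$. The conceptual core is that, for finite $A$, the monoid $\mathcal{F}(A)$ is the \emph{free} object of $\mathbf{R_1}$ over $A$, with canonical morphism $\tau\colon A^*\to\mathcal{F}(A)$, and that it is finite. First I would prove the absorption law $\mathbf{u}a=\mathbf{u}$ whenever the letter $a$ already occurs in $\mathbf{u}$: writing $\mathbf{u}=\mathbf{p}a\mathbf{q}$ and applying $[\![xyx=xy]\!]$ with $x=a$, $y=\mathbf{q}$ gives $a\mathbf{q}a=a\mathbf{q}$, whence $\mathbf{u}a=\mathbf{p}(a\mathbf{q}a)=\mathbf{p}a\mathbf{q}=\mathbf{u}$. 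Iterating this law collapses every word to its $\tau$-normal form (its sequence of first letter occurrences), so $\mathbf{u}\sim_\tau\mathbf{v}$ holds exactly when $\mathbf{u}=\mathbf{v}$ is forced by the identities of $\mathbf{R_1}$; since $\mathcal{F}(A)\in\mathbf{R_1}$ separates distinct $\tau$-images, this makes $\tau$ the projection onto the free $\mathbf{R_1}$-monoid.

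Granting freeness, the standard recognizability argument (cited for the $\mathbf{J_1}$ case) gives that $\mathcal{M}(L)\in\mathbf{R_1}$ if and only if $\tau$ recognizes $L$, i.e.\ $L=\tau^{-1}(\tau(L))$, i.e.\ $L$ is a union of $\sim_\tau$-classes. Indeed, if $\tau$ recognizes $L$ then $\mathcal{M}(L)$ divides $\mathcal{F}(A)\in\mathbf{R_1}$; conversely, if $\mathcal{M}(L)\in\mathbf{R_1}$ then the syntactic morphism factors through the free object $\mathcal{F}(A)$, so $\tau$ recognizes $L$. Because $\mathcal{F}(A)$ is finite there are finitely many $\sim_\tau$-classes, so any such union is automatically a finite \emph{disjoint} union, which reduces condition~(1) to a description of the individual classes.

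Next I would identify each $\sim_\tau$-class combinatorially. A word $\mathbf{x}$ satisfies $\mathbf{x}\tau=a_1a_2\cdots a_m$ with the $a_i$ distinct exactly when its first letter is $a_1$, the first letter lying outside $\{a_1,\dots,a_{k-1}\}$ is $a_k$ for each $k$, and no letter outside $\{a_1,\dots,a_m\}$ occurs. These conditions translate verbatim into membership in
$$a_1a_1^*a_2\{a_1,a_2\}^*a_3\{a_1,a_2,a_3\}^*\cdots a_m\{a_1,\dots,a_m\}^*,$$
so the $\sim_\tau$-classes are precisely the languages of condition~(3); with the previous paragraph this yields (1)$\Leftrightarrow$(3).

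For condition~(2) I would argue both directions through $\tau$. Each $B^*aA^*$ is a union of $\sim_\tau$-classes, since membership of $\mathbf{x}$ is determined by the order of first occurrences of its letters and hence by $\mathbf{x}\tau$; therefore any Boolean combination of such languages is again of the form $\tau^{-1}(P)$, so $\mathcal{M}(L)$ divides $\mathcal{F}(A)$, giving (2)$\Rightarrow$(1). Conversely, each class of~(3) is the Boolean combination
$$\Bigl(\bigcap_{k=1}^{m}\{a_1,\dots,a_{k-1}\}^*a_kA^*\Bigr)\setminus\bigcup_{b\notin\{a_1,\dots,a_m\}}A^*bA^*,$$
where every $\{a_1,\dots,a_{k-1}\}^*a_kA^*$ and every $A^*bA^*$ is of the form $B^*aA^*$; as $L$ is a finite union of such classes this gives (1)$\Rightarrow$(2). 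The main obstacle is the first paragraph: establishing the absorption law and thereby the finiteness and universal property of $\mathcal{F}(A)$, so that $\sim_\tau$ is genuinely the free $\mathbf{R_1}$-congruence. Once that is in place, the remaining steps are routine reformulations parallel to the $\mathbf{J_1}$ argument.
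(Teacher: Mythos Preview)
Your argument is correct. Note, however, that the paper does not actually prove Theorem~\ref{R1lang}: it merely cites \cite{PT84} for the result and then proceeds to use it. So there is no ``paper's own proof'' to compare against; you have supplied a self-contained argument where the paper only gives a reference.

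Your route---show that $\tau$ realizes $\mathcal{F}(A)$ as the free $\mathbf{R_1}$-object on $A$ via the absorption law derived from $[\![xyx=xy]\!]$, deduce that $\mathcal{M}(L)\in\mathbf{R_1}$ iff $L$ is $\sim_\tau$-saturated, identify the $\sim_\tau$-classes combinatorially as the languages in~(3), and then express each class as a Boolean combination of languages $B^*aA^*$---is exactly the natural one, parallel to the paper's treatment of $\mathbf{J_1}$ in Theorem~\ref{J1lang}. The verification that each $B^*aA^*$ is $\sim_\tau$-saturated (needed for (2)$\Rightarrow$(1)) is straightforward once one notes that membership in $B^*aA^*$ depends only on the relative order of first occurrences, hence only on $\mathbf{x}\tau$; your Boolean expression for (1)$\Rightarrow$(2) is correct, including the edge case $m=0$ where the empty intersection is $A^*$ and the class $\{\varepsilon\}$ is obtained by subtracting all $A^*bA^*$.
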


Let $L$ be a single language from the disjoint union specified in
Theorem \ref{R1lang} (\ref{R1disj}). There exists a single
element $\mathbf{x}\in\mathcal{F}(A)$ such that
$\mathbf{x}\tau^{-1}=L$, therefore $\mathcal{F}(A)$ recognizes any
language in $A^*\boldsymbol{\mathcal{R}_1}$. Hence by \cite[p.17,
Prop. 2.7]{P86}, $\mathcal{M}(L)$ divides $\mathcal{F}(A)$.

Therefore, in order to specify a particular language $L\in
A^*\boldsymbol{\mathcal{R}_1}$, one may identify it by indicating a
particular subset of $\mathcal{F}(A)$. For example, the semilattice
language $A^*aA^*$ may also be denoted as $\bf{\{a,ab,ba,ac,ca,abc,acb,bac,bca,cab,}$ $\bf{cba\}}$.

It is also self-evident that $\mathcal{P}(A)$ is a quotient of
$\mathcal{F}(A)$. Indeed, let $\sigma$ be a restriction of $\omega$
to $\mathcal{F}(A)$. The function $\sigma$ is a surjective morphism from
$\mathcal{F}(A)$ to $\mathcal{P}(A)$ which preserves the order
relation.

Given a free left regular band $\mathcal{F}(A)$, one may represent
it as a deterministic finite automaton
$(\mathcal{F}(A),A,\varepsilon,\cdp_{\mathcal{F}(A)})$. By Theorem
\ref{R1lang} (\ref{R1disj}), for any $\mathcal{R}_1$ language $L$ over alphabet $A$, $L\tau$ is a set of final
states, such that the automaton recognizes the language.

A free left regular band over $\{a,b,c\}$ represented as a finite
automaton is depicted in Figure \ref{fig.R1}.

Free left regular bands and free semilattices are key elements to prove that a quantum automaton may recognize a particular
$\mathcal{R}_1$ language if and only if its system of linear inequalities is consistent.
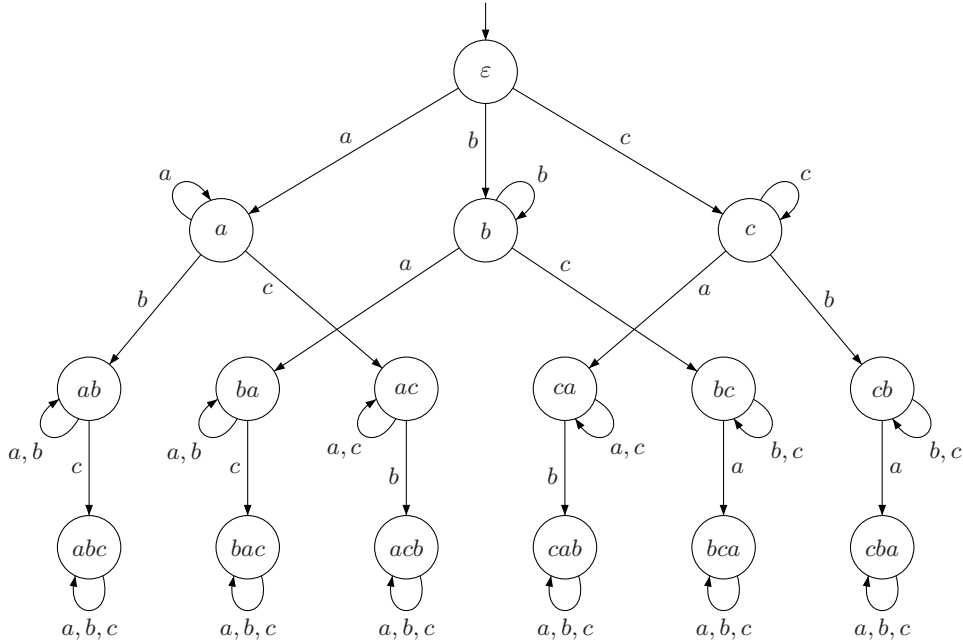
\begin{figure}[H]
    \begin{center}
        \unitlength=4pt
        \begin{picture}(70,54)(2,-4)
        \gasset{Nw=6,Nh=6,Nmr=6,curvedepth=0}
        \thinlines
        \node[Nmarks=i,iangle=90](E)(37.5,45){$\varepsilon$}
        \node(A)(12.5,30){$a$}
        \node(B)(37.5,30){$b$}
        \node(C)(62.5,30){$c$}
        \node(AB)(0,15){$ab$}
        \node(BA)(15,15){$ba$}
        \node(AC)(30,15){$ac$}
        \node(CA)(45,15){$ca$}
        \node(BC)(60,15){$bc$}
        \node(CB)(75,15){$cb$}
        \node(ABC)(0,0){$abc$}
        \node(BAC)(15,0){$bac$}
        \node(ACB)(30,0){$acb$}
        \node(CAB)(45,0){$cab$}
        \node(BCA)(60,0){$bca$}
        \node(CBA)(75,0){$cba$}
        \drawloop[loopdiam=3,loopangle=135](A){$a$}
        \drawloop[loopdiam=3,loopangle=45](B){$b$}
        \drawloop[loopdiam=3,loopangle=45](C){$c$}
        \drawloop[loopdiam=3,loopangle=-135](AB){$a,b$}
        \drawloop[loopdiam=3,loopangle=-135](BA){$a,b$}
        \drawloop[loopdiam=3,loopangle=-135](AC){$a,c$}
        \drawloop[loopdiam=3,loopangle=-45](CA){$a,c$}
        \drawloop[loopdiam=3,loopangle=-45](BC){$b,c$}
        \drawloop[loopdiam=3,loopangle=-45](CB){$b,c$}
        \drawloop[loopdiam=3,loopangle=-90](ABC){$a,b,c$}
        \drawloop[loopdiam=3,loopangle=-90](BAC){$a,b,c$}
        \drawloop[loopdiam=3,loopangle=-90](ACB){$a,b,c$}
        \drawloop[loopdiam=3,loopangle=-90](CAB){$a,b,c$}
        \drawloop[loopdiam=3,loopangle=-90](BCA){$a,b,c$}
        \drawloop[loopdiam=3,loopangle=-90](CBA){$a,b,c$}
        \drawedge[ELside=r,curvedepth=0](E,A){$a$}
        \drawedge[ELpos=43,ELside=r,curvedepth=0](E,B){$b$}
        \drawedge[ELside=l,curvedepth=0](E,C){$c$}
        \drawedge[ELside=r,curvedepth=0](A,AB){$b$}
        \drawedge[ELpos=30,ELside=r,curvedepth=0](A,AC){$c$}
        \drawedge[ELpos=30,ELside=r,curvedepth=0](B,BA){$a$}
        \drawedge[ELpos=30,ELside=l,curvedepth=0](B,BC){$c$}
        \drawedge[ELpos=30,ELside=l,curvedepth=0](C,CA){$a$}
        \drawedge[ELside=l,curvedepth=0](C,CB){$b$}
        \drawedge[ELside=r,curvedepth=0](AB,ABC){$c$}
        \drawedge[ELside=r,curvedepth=0](BA,BAC){$c$}
        \drawedge[ELpos=55,ELside=r,curvedepth=0](AC,ACB){$b$}
        \drawedge[ELpos=55,ELside=r,curvedepth=0](CA,CAB){$b$}
        \drawedge[ELside=l,curvedepth=0](BC,BCA){$a$}
        \drawedge[ELside=l,curvedepth=0](CB,CBA){$a$}
        \end{picture}
    \end{center}
    \caption{Free left regular band over $\{a,b,c\}$.}\label{fig.R1}
\end{figure}
\section{Completely Positive Maps} \label{sec_cp_maps}
In this section, we establish some facts about completely positive
maps with certain properties, i.e., completely positive maps that
describe the evolution of bistochastic quantum finite automata,
defined in the next section. A comprehensive account on quantum
computation can be found in \cite{NC00}.

Following \cite{NC00}, we call a matrix $M\in\mathbb{C}^{n\times n}$ {\em positive}, if
for any vector $X\in\mathbb{C}^n$, $X^*MX$ is real and nonnegative.
In literature, positive matrices sometimes are called positive
semi-definite. For arbitrary matrices $M,N$ we may write $M\geqslant
N$ if $M-N$ is positive. This defines a partial ordering on
$\mathbb{C}^{n\times n}$. Also note that the set of all positive
matrices in $\mathbb{C}^{n\times n}$ is an additive monoid. A matrix
is positive if and only if it is Hermitian and all of its
eigenvalues are nonnegative (\cite[Exercises 2.21,2.24]{NC00}). A
matrix $M$ is positive if and only if exists a matrix
$S\in\mathbb{C}^{n\times n}$ such that $M=S^*S$ (\cite[Section
6.1]{Z99}). Let $\operatorname{Tr}(A)$ be a trace of a matrix $A$.
The inner product of two matrices $A$ and $B$ is defined as $\langle
A,B\rangle=\operatorname{Tr}(A^*B)$. Consequently, the norm of a
matrix $A$ (the Frobenius norm) is defined as
$\|A\|=\sqrt{\operatorname{Tr}(A^*A)}$.
\begin{proposition}\label{posit_crit}
A matrix $M$ is positive if and only if for any positive $A$
$\operatorname{Tr}(MA)\geqslant0$.
\end{proposition}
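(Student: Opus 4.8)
The plan is to prove the two implications separately, relying on the characterization stated earlier that a matrix is positive if and only if it factors as $S^*S$, together with the cyclic invariance of the trace.

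For the forward direction I would assume $M$ is positive and fix an arbitrary positive $A$. Since both are positive, write $A = S^*S$ and $M = T^*T$. Cyclicity of the trace then gives
\[
\operatorname{Tr}(MA) = \operatorname{Tr}(T^*T S^*S) = \operatorname{Tr}\bigl((ST^*)^*(ST^*)\bigr) = \|ST^*\|^2 \geqslant 0.
\]
An equivalent route, if one prefers to invoke the definition of positivity of $M$ directly, is to use the spectral decomposition $A = \sum_i \lambda_i\, x_i x_i^*$ with all $\lambda_i \geqslant 0$ (available because a positive matrix is Hermitian), so that $\operatorname{Tr}(MA) = \sum_i \lambda_i\, x_i^* M x_i \geqslant 0$, each summand being nonnegative.

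For the converse I would specialize the hypothesis to rank-one test matrices. For an arbitrary $X \in \mathbb{C}^n$, the matrix $A = XX^*$ is positive, since $Y^*(XX^*)Y = |X^*Y|^2 \geqslant 0$ for every $Y$. Applying the hypothesis together with the cyclic property of the trace yields
\[
0 \leqslant \operatorname{Tr}(M XX^*) = \operatorname{Tr}(X^* M X) = X^* M X,
\]
the last equality holding because $X^*MX$ is a scalar. As $X$ was arbitrary, this is precisely the defining condition for $M$ to be positive.

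The argument is essentially routine; the only point requiring care is the bookkeeping in the converse. One must observe that the relation $\operatorname{Tr}(MA) \geqslant 0$ already presupposes that the quantity is real, so that restricting to $A = XX^*$ simultaneously recovers both the reality and the nonnegativity of $X^*MX$ demanded by the definition of a positive matrix. With that observation the two directions close without further difficulty.
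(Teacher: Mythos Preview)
Your proof is correct and essentially matches the paper's. The converse direction is identical (specialize to $A=XX^*$ and use cyclicity of the trace), and your ``equivalent route'' for the forward direction via the spectral decomposition of $A$ is exactly the argument the paper gives; your primary forward argument via the factorizations $M=T^*T$, $A=S^*S$ and $\operatorname{Tr}(MA)=\|ST^*\|^2$ is a minor cosmetic variant that avoids the spectral theorem but uses the same ingredients already recalled in the text.
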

\begin{proof}
Assume that for any positive $A$ $\operatorname{Tr}(MA)\geqslant0$.
Take $A=XX^*$, where $X$ is an arbitrary vector. Now
$X^*MX=\operatorname{Tr}(MXX^*)\geqslant0$, hence for any $X$
$X^*MX\geqslant0$. So $M$ is positive.

Assume $M$ is positive. Let $A$ be a positive matrix. So $A$ admits
spectral decomposition, $A=\sum\limits_{i=1}^n
\lambda_iX_iX_i^*$, where $\lambda_i$ are nonnegative eigenvalues and $X_i$ -
eigenvectors of $A$. Now
$\operatorname{Tr}(MA)=\operatorname{Tr}(M\sum\limits_{i=1}^n
\lambda_iX_iX_i^*)=\sum\limits_{i=1}^n\lambda_i\operatorname{Tr}(MX_iX_i^*)=\sum\limits_{i=1}^n\lambda_iX_i^*MX_i\geqslant0$.
\qed
\end{proof}

A linear map $\Phi:\mathbb{C}^{n\times
n}\longrightarrow\mathbb{C}^{m\times m}$ is called {\em positive},
if for any $n\times n$ positive matrix $M$ $\Phi(M)$ is positive.
Any linear map from $\mathbb{C}^{n\times n}$ to $\mathbb{C}^{m\times
m}$ may be regarded as a linear operator in $\mathbb{C}^{n^2\times
m^2}$. The norm of a linear map $\Phi$ from $\mathbb{C}^{n\times n}$
to $\mathbb{C}^{m\times m}$ is defined as
$\sup\limits_{M\in\mathbb{C}^{n\times n}}\frac{\|\Phi(M)\|}{\|M\|}$.
A linear map $\Phi$ is called a {\em contraction}, if
$\|\Phi\|\leqslant1$.

Let $I_s$ be the identity map over $\mathbb{C}^{s\times s}$. Given
two linear maps $\Phi$ and $\Psi$, let $\Phi\bigotimes\Psi$ be the
tensor product of those maps. A positive linear map $\Phi$ is called
{\em completely positive (CP)}, if for any $s\geqslant1$,
$\Phi\bigotimes I_s$ is positive. By Choi's theorem \cite{C75}, a
linear map is completely positive if and only if it admits a {\em
Kraus decomposition}, meaning that there exist matrices
$V_1,\dots,V_l\in\mathbb{C}^{m\times n}$, $l\leqslant nm$, such that
for any matrix $M\in\mathbb{C}^{n\times n}$
$\Phi(M)=\sum\limits_{i=1}^l V_iMV_i^*$. So any CP map may be
identified by a set of its Kraus operators $\{V_1,\dots,V_l\}$.

A completely positive map $\Phi$ is called {\em trace-preserving},
if for any positive $M$,
$\operatorname{Tr}(\Phi(M))=\operatorname{Tr}(M)$. A CP map
$\Phi=\{V_1,\dots,V_l\}$ from $\mathbb{C}^{n\times n}$ to
$\mathbb{C}^{m\times m}$ is trace preserving if and only if
$\sum\limits_{i=1}^l V_i^*V_i=I_n$ \cite[\textsection8.2.3]{NC00}.

A completely positive map $\Phi$ is called {\em sub-tracial} iff for
any positive $M$ we have
$\operatorname{Tr}(\Phi(M))\leqslant\operatorname{Tr}(M)$.
\begin{theorem}
A completely positive map $\Phi=\{V_1,\dots,V_l\}$ from
$\mathbb{C}^{n\times n}$ to $\mathbb{C}^{m\times m}$ is sub-tracial
if and only if $\sum\limits_{i=1}^l V_i^*V_i\leqslant I_n$.
\end{theorem}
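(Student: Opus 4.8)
The plan is to characterize the sub-tracial property through the trace inner product, exactly mirroring the characterization of trace-preserving maps via $\sum_i V_i^*V_i = I_n$. The key observation is that the Kraus decomposition lets me rewrite the trace of $\Phi(M)$ in a form where the operator $\sum_i V_i^*V_i$ appears explicitly. Indeed, for any positive $M$,
\[
\operatorname{Tr}(\Phi(M)) = \operatorname{Tr}\Bigl(\sum_{i=1}^l V_iMV_i^*\Bigr) = \sum_{i=1}^l \operatorname{Tr}(V_iMV_i^*) = \sum_{i=1}^l \operatorname{Tr}(V_i^*V_iM) = \operatorname{Tr}\Bigl(\bigl(\sum_{i=1}^l V_i^*V_i\bigr)M\Bigr),
\]
using linearity of the trace and the cyclic property $\operatorname{Tr}(V_iMV_i^*)=\operatorname{Tr}(V_i^*V_iM)$. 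Writing $W=\sum_{i=1}^l V_i^*V_i$, the sub-tracial condition $\operatorname{Tr}(\Phi(M))\leqslant\operatorname{Tr}(M)$ becomes $\operatorname{Tr}(WM)\leqslant\operatorname{Tr}(M)=\operatorname{Tr}(I_nM)$ for all positive $M$, i.e. $\operatorname{Tr}((I_n-W)M)\geqslant0$ for every positive $M$.

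At this point the statement reduces to an application of Proposition~\ref{posit_crit}. For the forward direction, suppose $\Phi$ is sub-tracial. Then $\operatorname{Tr}((I_n-W)M)\geqslant0$ holds for all positive $M$, and Proposition~\ref{posit_crit} immediately gives that $I_n-W$ is positive, which is precisely $W\leqslant I_n$, that is, $\sum_{i=1}^l V_i^*V_i\leqslant I_n$. For the converse, suppose $\sum_{i=1}^l V_i^*V_i\leqslant I_n$, i.e. $I_n-W$ is positive. Then by the same Proposition~\ref{posit_crit} (or directly, since $I_n-W$ positive means $\operatorname{Tr}((I_n-W)M)\geqslant0$ for every positive $M$), we recover $\operatorname{Tr}(WM)\leqslant\operatorname{Tr}(M)$ for all positive $M$, which unwinds back to $\operatorname{Tr}(\Phi(M))\leqslant\operatorname{Tr}(M)$. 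Thus both directions follow.

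The only subtlety worth checking is the matching of dimensions and the well-definedness of the comparison $W\leqslant I_n$: here $W=\sum_i V_i^*V_i$ lives in $\mathbb{C}^{n\times n}$ because each $V_i\in\mathbb{C}^{m\times n}$ forces $V_i^*V_i\in\mathbb{C}^{n\times n}$, so $W$ and $I_n$ are both $n\times n$ and the ordering $\leqslant$ from the partial order on $\mathbb{C}^{n\times n}$ applies. I do not expect any genuine obstacle; the entire argument is essentially a one-line reduction to Proposition~\ref{posit_crit} once the cyclic-trace rewriting is in place, and it parallels the standard proof for the trace-preserving case where the inequality is replaced by equality and $\operatorname{Tr}((I_n-W)M)\geqslant0$ for all positive $M$ is replaced by $\operatorname{Tr}((I_n-W)M)=0$ for all $M$.
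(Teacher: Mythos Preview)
Your proof is correct. For the direction assuming $\Phi$ is sub-tracial, your argument coincides with the paper's: both rewrite $\operatorname{Tr}(\Phi(M))=\operatorname{Tr}(WM)$ via cyclicity and then invoke Proposition~\ref{posit_crit}. For the direction assuming $\sum_i V_i^*V_i\leqslant I_n$, however, your route differs. You simply apply the other implication of Proposition~\ref{posit_crit} (positive $I_n-W$ forces $\operatorname{Tr}((I_n-W)M)\geqslant0$ for all positive $M$) and unwind. The paper instead argues constructively: it writes $I_n-\sum_i V_i^*V_i=P\geqslant0$, spectrally decomposes $P$, and fashions additional Kraus operators $V_{l+1},\dots,V_{l+n}$ so that the enlarged family is trace-preserving; sub-traciality of the original map then follows because the discarded piece has nonnegative trace. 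Your argument is shorter and more symmetric; the paper's has the side benefit of exhibiting an explicit trace-preserving dilation of any sub-tracial CP map, which is a fact of independent interest even if not used later in the paper.
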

\begin{proof}
Assume that $\sum\limits_{i=1}^l V_i^*V_i\leqslant I_n$. So exists a
positive matrix $P$ such that $\sum\limits_{i=1}^l V_i^*V_i+P=I_n$.
Moreover, $P=\sum\limits_{i=1}^n
\lambda_iX_iX_i^*$, where $\lambda_i$ are nonnegative eigenvalues
and $X_i$ - eigenvectors of $P$. By adding $m-1$ zero columns to each
vector $X_i$ one respectively obtains matrices
$W_i\in\mathbb{C}^{n\times m}$ such that $X_iX_i^*=W_iW_i^*$. For
each $i$, $1\leqslant i\leqslant n$, let
$V_{l+i}=\sqrt{\lambda_i}W_i^*$. So $\sum\limits_{i=1}^{l+n}
V_i^*V_i=I_n$. Hence $\{V_1,\dots,V_l,\dots,V_{l+n}\}$ is a
trace-preserving CP map, so for any positive $M$
$\operatorname{Tr}\left(\sum\limits_{i=1}^{l+n}V_iMV_i^*\right)=\operatorname{Tr}(M)$.
The matrix $\sum\limits_{i=1}^nV_{l+i}MV_{l+i}^*$ is positive,
therefore
$\operatorname{Tr}\left(\sum\limits_{i=1}^nV_{l+i}MV_{l+i}^*\right)\geqslant0$.
Hence
$\operatorname{Tr}\left(\sum\limits_{i=1}^lV_iMV_i^*\right)\leqslant\operatorname{Tr}(M)$.

Assume that for all positive $M$
$\operatorname{Tr}(\Phi(M))\leqslant\operatorname{Tr}(M)$. Since
$\operatorname{Tr}\left(\sum\limits_{i=1}^l
V_iMV_i^*\right)=\operatorname{Tr}\left(\sum\limits_{i=1}^l
V_i^*V_iM\right)$, for all positive $M$
$\operatorname{Tr}\left(\sum\limits_{i=1}^l
V_i^*V_iM\right)\leqslant\operatorname{Tr}\left(M\right)$. So for
any positive $M$
$\operatorname{Tr}\left(\left(I_n-\sum\limits_{i=1}^l
V_i^*V_i\right)M\right)\geqslant0$. Now by Proposition \ref{posit_crit},
$I_n-\sum\limits_{i=1}^l V_i^*V_i$ is positive, therefore
$\sum\limits_{i=1}^l V_i^*V_i\leqslant I_n$. \qed
\end{proof}
A CP map $\Phi=\{V_1,\dots,V_l\}$ from $\mathbb{C}^{n\times n}$ to
$\mathbb{C}^{m\times m}$ is called {\em unital} if $\Phi(I_n)=I_m$,
i.e., $\sum\limits_{i=1}^l V_iV_i^*=I_m$. A CP map from
$\mathbb{C}^{n\times n}$ to $\mathbb{C}^{m\times m}$
$\Phi=\{V_1,\dots,V_l\}$ is called {\em sub-unital} if
$\Phi(I_n)\leqslant I_m$, i.e., $\sum\limits_{i=1}^l
V_iV_i^*\leqslant I_m$.

A {\em composition} of CP maps $\Phi_0,...,\Phi_m$ from $\mathbb{C}^{n\times n}$ to $\mathbb{C}^{n\times n}$ is a CP map $\Phi=\Phi_0\circ\dots\circ\Phi_m$ such that for any $M\in\mathbb{C}^{n\times n}$
$\Phi(M)=\Phi_0(\Phi_1(...(\Phi_m(M))...)$.

A CP map $\Phi=\{V_1,\dots,V_l\}$ from $\mathbb{C}^{n\times n}$ to
$\mathbb{C}^{n\times n}$ is called {\em bistochastic}, if it is both
trace preserving and unital, i.e.,  $\sum\limits_{i=1}^l
V_iV_i^*=\sum\limits_{i=1}^l V_i^*V_i=I_n$.

{\em Examples of bistochastic CP maps}.
\begin{list}{(\arabic{Lcount})}{\usecounter{Lcount}\setlength{\itemindent}{\leftmargin}}
\item A map defined by unitary matrix $U$, i.e., a CP map
 $\Phi(M)=UMU^*$, called {\em unitary operation};
\item A collection of projection matrices $\{P_i\}$ such that
$\sum\limits_{i=1}^lP_i=I$, i.e., a CP map
$\Phi(M)=\sum\limits_{i=1}^lP_iMP_i^*$, called {\em orthogonal
measurement};
\item A CP map $\Phi(M)=\sum\limits_{i=1}^lp_iU_iMU_i^*$, where
$\sum\limits_{i=1}^lp_i=1$ and for all $i$ $U_i $ are unitary. Such
a map is called {\em random unitary operation};
\item Any composition of the maps above.
\end{list}

A CP map $\mathbb{C}^{n\times n}$ to $\mathbb{C}^{n\times n}$ is
called {\em sub-bistochastic}\footnote{Sometimes in quantum physics and quantum computation literature, a CP map is sub-tracial by definition. In such cases, sub-bistochastic CP maps are called sub-unital
CP.}, if it is both sub-unital and
sub-tracial. A composition of two sub-bistochastic CP maps
is a sub-bistochastic CP map.

We are interested about
some properties of the asymptotic dynamics resulting from iterative
application of a CP sub-bistochastic map.

A CP map $\Phi$ from $\mathbb{C}^{n\times n}$ to
$\mathbb{C}^{n\times n}$ is called {\em idempotent} if
$\Phi\circ\Phi=\Phi$.

\begin{definition}
A CP map $\Phi$ from $\mathbb{C}^{n\times n}$ to
$\mathbb{C}^{n\times n}$ generates a unique idempotent, denoted
$\Phi^\omega$, if there exists a sequence of positive integers $n_s$
such that 1) exists the limit
$\Phi^\omega=\lim\limits_{s\to\infty}\Phi^{n_s}$; 2) the CP map
$\Phi^\omega$ is idempotent; 3) for any sequence of positive
integers $m_s$ such that the limit
$\lim\limits_{s\to\infty}\Phi^{m_s}$ exists and is idempotent,
$\lim\limits_{s\to\infty}\Phi^{m_s}=\Phi^\omega$.
\end{definition}

For example, if $\Phi$ is a unitary operation then $\Phi^\omega$ is
the identity map. (Theorem \ref{theor_idemp_exists}.)

Note that any CP map from $\mathbb{C}^{n\times n}$ to
$\mathbb{C}^{n\times n}$ may be regarded as a linear operator in
$\mathbb{C}^{n^2\times n^2}$. In this sense, the conjugate transpose of $\Phi=\{V_1,\dots,V_l\}$ is $\Phi^*=\{V_1^*,\dots,V_l^*\}$.
Kuperberg has provided a sketch of the proof \cite{K03} that for any CP
sub-bistochastic map $\Phi$ from $\mathbb{C}^{n\times n}$ to
$\mathbb{C}^{n\times n}$, its idempotent $\Phi^\omega$ exists and it
is a linear projection operator in $\mathbb{C}^{n^2\times n^2}$. We reconstruct a full proof of this result below.  The
first step in that direction is the following theorem.
\begin{theorem}\label{theor_contract}
Any CP sub-bistochastic map $\Phi$ is a contraction.\footnote{The special case dealing with bistochastic maps was proved in \cite{PG06}.}
\end{theorem}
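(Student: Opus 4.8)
The statement asserts $\|\Phi\|\leqslant 1$, that is, $\|\Phi(M)\|\leqslant\|M\|$ for every $M\in\mathbb{C}^{n\times n}$; squaring and using the definition of the Frobenius norm, this is the inequality $\operatorname{Tr}\bigl(\Phi(M)^*\Phi(M)\bigr)\leqslant\operatorname{Tr}(M^*M)$. The plan is to obtain this from two ingredients, each using one half of the sub-bistochastic hypothesis: a Kadison--Schwarz-type operator inequality $\Phi(M^*M)\geqslant\Phi(M)^*\Phi(M)$ coming from sub-unitality, followed by the scalar bound $\operatorname{Tr}\bigl(\Phi(M^*M)\bigr)\leqslant\operatorname{Tr}(M^*M)$ coming from sub-traciality. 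This is precisely the route that handles the bistochastic special case (there $\Phi(M^*M)\geqslant\Phi(M)^*\Phi(M)$ is the usual Kadison--Schwarz inequality and the trace is preserved outright), specialised so that both steps are equalities; the point is that the two weaker inequalities still compose.

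First I would establish $\Phi(M^*M)\geqslant\Phi(M)^*\Phi(M)$. Since $\Phi$ is completely positive, $\Phi\bigotimes I_2$ is positive, so it sends the positive block matrix $\begin{pmatrix} M^*M & M^*\\ M & I_n\end{pmatrix}=\begin{pmatrix}M^*\\ I_n\end{pmatrix}\begin{pmatrix}M & I_n\end{pmatrix}$ to a positive matrix. Using that a CP map is $*$-preserving, i.e.\ $\Phi(M^*)=\Phi(M)^*$, the image is $\begin{pmatrix}\Phi(M^*M) & \Phi(M)^*\\ \Phi(M) & \Phi(I_n)\end{pmatrix}\geqslant 0$. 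Sub-unitality gives $\Phi(I_n)\leqslant I_n$, so adding the positive matrix $\begin{pmatrix}0&0\\0& I_n-\Phi(I_n)\end{pmatrix}$ preserves positivity while replacing the lower-right corner by $I_n$; a Schur-complement argument with this invertible corner then yields $\Phi(M^*M)-\Phi(M)^*\Phi(M)\geqslant 0$.

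The trace step is then immediate: taking traces of this operator inequality gives $\operatorname{Tr}\bigl(\Phi(M)^*\Phi(M)\bigr)\leqslant\operatorname{Tr}\bigl(\Phi(M^*M)\bigr)$, and since $M^*M$ is positive and $\Phi$ is sub-tracial, $\operatorname{Tr}\bigl(\Phi(M^*M)\bigr)\leqslant\operatorname{Tr}(M^*M)$. Chaining the two gives $\|\Phi(M)\|^2\leqslant\|M\|^2$, hence $\|\Phi\|\leqslant 1$.

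The delicate point—the genuine obstacle—is the Kadison--Schwarz inequality in the sub-unital rather than unital setting: one must verify that enlarging the corner to $I_n$ keeps the block matrix positive and legitimises the Schur complement (this avoids inverting a possibly singular $\Phi(I_n)$), and that it is complete positivity, not mere positivity, that makes $\Phi\bigotimes I_2$ positive and $\Phi$ $*$-preserving. Should one prefer to bypass Kadison--Schwarz, an alternative is to observe that $\|\Phi\|$ is the norm of $\Phi$ as an operator on the Hilbert space $(\mathbb{C}^{n\times n},\langle\cdot,\cdot\rangle)$, so $\|\Phi\|^2=\|\Phi^*\!\circ\Phi\|$ with $\Phi^*=\{V_1^*,\dots,V_l^*\}$; the composite $\Phi^*\!\circ\Phi$ is CP, self-adjoint and sub-tracial, its Frobenius-norm equals its spectral radius $r$, and Perron--Frobenius for the cone of positive matrices supplies a positive eigenvector $P$ for $r$, whence $r\operatorname{Tr}(P)=\operatorname{Tr}\bigl((\Phi^*\!\circ\Phi)(P)\bigr)\leqslant\operatorname{Tr}(P)$ forces $r\leqslant 1$.
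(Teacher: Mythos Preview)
Your main argument is correct and takes a genuinely different route from the paper's. The paper argues spectrally: it writes $\|\Phi\|^2=\lambda_{\max}(\Phi^*\!\circ\Phi)$, observes that the eigenspace for $\lambda_{\max}$ is closed under $M\mapsto M^*$ and hence contains a Hermitian eigenvector $M$, and then compares $\operatorname{Tr}\bigl((\Phi^*\!\circ\Phi)(M)\bigr)=\lambda_{\max}\operatorname{Tr}(M)$ with $\operatorname{Tr}(M)$ using sub-bistochasticity of $\Phi^*\!\circ\Phi$. Your approach instead proves $\|\Phi(M)\|\leqslant\|M\|$ directly for every $M$ via the $2$-positivity/Schur-complement derivation of the Kadison--Schwarz inequality $\Phi(M^*M)\geqslant\Phi(M)^*\Phi(M)$, followed by sub-traciality applied to the positive matrix $M^*M$. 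This is arguably cleaner: it makes explicit use of complete positivity (indeed $2$-positivity suffices), invokes sub-traciality only on a positive argument where the definition applies, and avoids any eigenvector analysis. The paper's route is shorter but leans on the trace inequality for a merely Hermitian eigenvector $M$, which is not immediate from the definition of sub-tracial and is delicate when $\operatorname{Tr}(M)\leqslant 0$; your alternative sketch is essentially the paper's strategy with this point repaired, since the Perron--Frobenius argument on the cone of positive matrices supplies a \emph{positive} eigenvector for the spectral radius, where sub-traciality applies cleanly.
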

\begin{proof}
We need to prove that $\|\Phi\|\leqslant1$. Let $\sigma_{\max}(\Phi)$ - the largest singular value of $\Phi$ and
$\lambda_{\max}(\Phi^*{\circ}\,\Phi)$ - the largest eigenvalue of $\Phi^*{\circ}\,\Phi$.
Note that $\|\Phi\|=\sigma_{\max}(\Phi)=\sqrt{\lambda_{\max}(\Phi^*{\circ}\,\Phi)}$.
Let $M$ an eigenvector of $\Phi^*{\circ}\,\Phi$ corresponding to $\lambda_{\max}$. So $\Phi^*{\circ}\,\Phi(M)=\lambda_{\max}M$.
Suppose $M$ is not Hermitian.
Let $V_1,...,V_l$ be the Kraus operators corresponding to $\Phi^*{\circ}\,\Phi$. So $\Phi^*{\circ}\,\Phi(M^*)=\sum\limits_{i=1}^l V_iM^*V_i^*=(\sum\limits_{i=1}^l V_iMV_i^*)^*=(\lambda_{\max}M)^*=\lambda_{\max}M^*$.
Hence $M^*$ is an eigenvector corresponding to $\lambda_{\max}$ as well. Therefore $M+M^*$ is an eigenvector also corresponding to $\lambda_{\max}$, and it is Hermitian.
So without loss of generality, we may assume that $M$ is Hermitian.  Note that $\operatorname{Tr}(\Phi^*{\circ}\,\Phi(M))=\lambda_{\max}\operatorname{Tr}(M)$. On the other hand, since $\Phi^*{\circ}\,\Phi$ is sub-bistochastic,
$\operatorname{Tr}(\Phi^*{\circ}\,\Phi(M))\leqslant\operatorname{Tr}(M)$. Hence $\lambda_{\max}\leqslant1$. Therefore $\|\Phi\|\leqslant1$.
\qed
\end{proof}
\begin{theorem}
\label{theor_idemp_exists}
Any CP sub-bistochastic map $\Phi$ generates a unique idempotent $\Phi^\omega$.
\end{theorem}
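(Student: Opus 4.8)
The plan is to regard $\Phi$ as an ordinary linear operator on the $n^2$-dimensional Hilbert space $\mathbb{C}^{n\times n}$ equipped with the Frobenius inner product, and to read off the required idempotent from its spectral structure. By Theorem \ref{theor_contract}, $\|\Phi\|\leqslant1$, so $\|\Phi^k\|\leqslant\|\Phi\|^k\leqslant1$ for every $k$; hence the powers $\Phi^k$ are uniformly bounded in operator norm and every eigenvalue $\lambda$ of $\Phi$ satisfies $|\lambda|\leqslant1$.

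First I would extract the spectral decomposition forced by this boundedness. A Jordan block of size $\geqslant2$ attached to an eigenvalue of modulus $1$ would make $\|\Phi^k\|$ grow polynomially in $k$, contradicting uniform boundedness; therefore all eigenvalues on the unit circle are semisimple. This yields a $\Phi$-invariant splitting $\mathbb{C}^{n\times n}=E_1\oplus E_{<1}$, where $E_{<1}$ is the sum of the generalized eigenspaces for $|\lambda|<1$ (on which $\Phi^k\to0$) and $E_1$ is the sum of the eigenspaces for the unit-modulus eigenvalues $\lambda_1,\dots,\lambda_r$. On $E_1$ the restriction is diagonalizable, $\Phi|_{E_1}=\sum_{j=1}^r\lambda_jP_j$, with the usual spectral idempotents $P_j$ satisfying $P_iP_j=\delta_{ij}P_j$ and $\sum_jP_j=I_{E_1}$.

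To produce the idempotent I would choose a sequence $n_s$ along which $\lambda_j^{n_s}\to1$ for all $j$ simultaneously. This is the step I expect to be the main obstacle, since it is a simultaneous approximation statement; I would settle it by noting that the closure of the sub-semigroup $\{(\lambda_1^k,\dots,\lambda_r^k):k\geqslant1\}$ inside the compact group $(S^1)^r$ is itself a subgroup (a closed sub-semigroup of a compact group is a subgroup), hence contains the identity $(1,\dots,1)$. Along such $n_s$ one gets $\Phi^{n_s}|_{E_1}=\sum_j\lambda_j^{n_s}P_j\to I_{E_1}$ and $\Phi^{n_s}|_{E_{<1}}\to0$, so $\Phi^{n_s}\to P$, the projection onto $E_1$ along $E_{<1}$. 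This $P$ is idempotent, and being a limit of the CP maps $\Phi^{n_s}$ (the cone of CP maps is closed), it is itself completely positive; setting $\Phi^\omega=P$ secures conditions (1) and (2) of the definition.

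Finally I would verify uniqueness. Let $m_s$ be any sequence with $\Phi^{m_s}\to Q$ for some idempotent $Q$. Because the splitting is $\Phi$-invariant and complementary, every $\Phi^{m_s}$, and hence $Q$, is block diagonal with respect to $E_1\oplus E_{<1}$; on $E_{<1}$ we again get $Q=0$, while on $E_1$ the limit must be $\sum_j\mu_jP_j$ with $\mu_j=\lim_s\lambda_j^{m_s}$ (the $P_j$ being linearly independent) and $|\mu_j|=1$. Idempotency of $Q$ forces $\mu_j^2=\mu_j$, so $\mu_j=1$ and $Q|_{E_1}=I_{E_1}$. Thus $Q=P=\Phi^\omega$, establishing condition (3).
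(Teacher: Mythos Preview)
Your proof is correct and follows the same Jordan-form strategy as the paper: bound the spectrum by the unit disk, show that unit-modulus eigenvalues are semisimple, split off the $|\lambda|<1$ part (which decays), and find a sequence $n_s$ driving the unit-circle part to the identity. The only differences are in two sub-steps: where the paper cites \cite{LP10} for semisimplicity and \cite[Theorem 201]{HW79} for the simultaneous approximation $\lambda_j^{n_s}\to1$, you give self-contained arguments (power-boundedness of $\Phi^k$ forbids nontrivial Jordan blocks at $|\lambda|=1$; the closure of a sub-semigroup of the compact torus $(S^1)^r$ is a group and hence contains the identity), which makes your version more self-contained but is otherwise the same route.
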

\begin{proof}
Let $\sigma_{\max}$ - the largest singular value of $\Phi$ and $\lambda$ - any of its eigenvalues.
Due to Browne's theorem \cite[Fact 5.11.21 i)]{B09}, $|\lambda|\leqslant\sigma_{\max}$. Therefore by Theorem \ref{theor_contract}, $|\lambda|\leqslant1$.
Let $\lambda_1$ an eigenvalue such that $|\lambda_1|=1$. Let $m(\lambda_1)$ and $g(\lambda_1)$ - the algebraic and geometric multiplicity of $\lambda_1$. It has been proved in \cite[Lemmas 2 and 3]{LP10}
that $m(\lambda_1)=g(\lambda_1)$.
(The proofs are given for the bistochastic case, but they can be copied for sub-bistochastic case with a sole modification: in the proof of Lemma 2 in \cite{LP10}, replace "$\|\mathbf{\Phi}_\mathcal{A}\|=1$"
with "$\|\mathbf{\Phi}_\mathcal{A}\|\leqslant1$". Before \cite{LP10}, the same has been proved for random unitary operations in \cite{NAJ10}.)

The map $\Phi$ may be viewed as an $n^2\times n^2$ matrix, it admits Jordan normal form. So $\Phi=SJS^{-1}$, where $J$ is a Jordan block matrix and $S$ - some non-singular matrix.
Consider the Jordan blocks corresponding to any eigenvalue $\lambda_1$ such that $|\lambda_1|=1$. Since $m(\lambda_1)=g(\lambda_1)$, any such Jordan block is one-dimensional.
Any other Jordan block $B$ is related to an eigenvalue $\lambda$ such that $|\lambda|<1$, so $\lim\limits_{s\to\infty}B^s=0$.
Consider the diagonal matrix $L$ corresponding to eigenvalues $\lambda_1$ such that $|\lambda_1|=1$. There exists a strictly monotone increasing sequence of positive integers $n_s$ such that
$\lim\limits_{s\to\infty}L^{n_s}=I$. (This is implied by \cite[Theorem 201]{HW79}.) Thus $L^\omega=I$. The uniqueness of $L^\omega$ comes from the fact that the identity matrix is the only idempotent diagonal matrix with diagonal entries all nonzero.
So $J$ generates a unique idempotent;
$J^\omega$ is a diagonal matrix with zeroes and ones on the diagonal.
Therefore $\Phi$ generates a unique idempotent as well; $\Phi^\omega=SJ^\omega S^{-1}$.
\qed
\end{proof}
If $\Phi$ is a CP  sub-bistochastic map, then $\Phi^\omega$ is a CP sub-bistochastic map as well.
\begin{theorem}
The unique idempotent $\Phi^\omega$ generated by a CP sub-bistochastic map $\Phi$ from $\mathbb{C}^{n\times n}$ to
$\mathbb{C}^{n\times n}$ is a projection operator in $\mathbb{C}^{n^2\times n^2}$.
\end{theorem}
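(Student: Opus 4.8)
The plan is to regard $\Phi^\omega$ as a linear operator on the Hilbert space $\mathbb{C}^{n\times n}$ equipped with the inner product $\langle A,B\rangle=\operatorname{Tr}(A^*B)$, and to show it is an \emph{orthogonal} projection, i.e.\ both idempotent and self-adjoint with respect to this inner product. Idempotency is already in hand: by Theorem \ref{theor_idemp_exists} we have $\Phi^\omega\circ\Phi^\omega=\Phi^\omega$, which, after identifying $\Phi^\omega$ with its matrix in $\mathbb{C}^{n^2\times n^2}$, reads $(\Phi^\omega)^2=\Phi^\omega$. Thus the entire content of the statement reduces to self-adjointness, and my route is to deduce it from a norm bound together with the classical fact that a bounded idempotent of norm at most $1$ on a Hilbert space is necessarily an orthogonal projection.

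First I would bound the norm of $\Phi^\omega$. By the definition of the generated idempotent, $\Phi^\omega=\lim_{s\to\infty}\Phi^{n_s}$ for some sequence $n_s$. Since the operator norm is submultiplicative and $\Phi$ is a contraction by Theorem \ref{theor_contract}, we have $\|\Phi^{n_s}\|\leqslant\|\Phi\|^{n_s}\leqslant1$ for every $s$. The norm is continuous on the finite-dimensional operator space, so passing to the limit gives $\|\Phi^\omega\|\leqslant1$. Here I would stress that the norm of Theorem \ref{theor_contract}, namely $\sup_M\|\Phi(M)\|/\|M\|$ with $\|\cdot\|$ the Frobenius norm, is exactly the operator norm of $\Phi$ acting on $(\mathbb{C}^{n\times n},\langle\cdot,\cdot\rangle)$, so that the notions of ``norm'' and of ``orthogonal projection'' refer to one and the same inner product.

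The final step is the lemma: if $P$ satisfies $P^2=P$ and $\|P\|\leqslant1$, then $P=P^*$. I would prove it by showing $\operatorname{ran}P\perp\ker P$. For $u\in\operatorname{ran}P$ (so $Pu=u$) and $v\in\ker P$, idempotency gives $P(u+tv)=u$ for every scalar $t$, whence $\|u\|=\|P(u+tv)\|\leqslant\|u+tv\|$. Expanding $\|u+tv\|^2=\|u\|^2+2\operatorname{Re}(\bar t\langle u,v\rangle)+|t|^2\|v\|^2$ and taking $t=-\delta\langle u,v\rangle$ with $\delta>0$ small forces the linear term to dominate unless $\langle u,v\rangle=0$. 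Since every $x$ splits as $x=Px+(x-Px)$ with $Px\in\operatorname{ran}P$ and $x-Px\in\ker P$, this orthogonality means exactly that $P$ is the orthogonal projection onto $\operatorname{ran}P$, so $P=P^*$. Applying this to $P=\Phi^\omega$ finishes the argument.

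There is no deep obstacle here: Theorems \ref{theor_contract} and \ref{theor_idemp_exists} do the heavy lifting. The only points needing care are the bookkeeping in the second step (legitimately applying submultiplicativity and continuity of the norm to $\Phi$ viewed in $\mathbb{C}^{n^2\times n^2}$, and confirming this is the same norm as in Theorem \ref{theor_contract}) and the verification of the lemma, which, although standard, must be carried out over the complex scalars with the correct choice of the parameter $t$.
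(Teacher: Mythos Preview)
Your argument is correct and follows essentially the same line as the paper's proof: both observe that $\Phi^\omega$ is an idempotent contraction and conclude it is a projection. The paper simply cites Halperin \cite[3.(III)]{H62} for the final implication, whereas you supply a direct proof of that lemma; apart from a harmless sign/conjugate slip in the choice of the parameter $t$, your verification is sound.
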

\begin{proof}
By Theorem \ref{theor_contract}, $\Phi$ is a contraction. So $\Phi^\omega$ is a contraction as well. Therefore, due to Halperin \cite[3.(III)]{H62}, $\Phi^\omega$ is a projection.
\qed
\end{proof}
Finally, we are ready to formulate a theorem, which is the main result of this section. As shown further in the paper, this theorem ultimately is the reason why certain models of
quantum finite automata cannot recognize all regular languages.
\begin{theorem}
\label{theor_bist_EJ}
Let $e_1,...,e_k$ be idempotent CP sub-bistochastic maps from $\mathbb{C}^{n\times n}$ to
$\mathbb{C}^{n\times n}$. Then for any
$i$, $1\leqslant i\leqslant k$,
\begin{conditions}
\item $\lim\limits_{n\to\infty}(e_1\circ...\circ e_k)^n=(e_1\circ...\circ e_k)^\omega=(e_{\pi(1)}\circ...\circ e_{\pi(k)})^\omega$, where $\pi$ is a permutation in $\{1,\dots,k\}$;
\item $(e_1\circ...\circ e_k)^\omega=e_i\circ(e_1\circ...\circ e_k)^\omega=(e_1\circ...\circ e_k)^\omega\circ e_i$.
\end{conditions}
\end{theorem}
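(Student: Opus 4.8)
The plan is to exploit the structure established in the previous theorems, namely that each $e_i$, being idempotent and CP sub-bistochastic, is a projection operator on $\mathbb{C}^{n^2\times n^2}$ (via the Halperin result), and that the composition $\Phi=e_1\circ\dots\circ e_k$ is again CP sub-bistochastic, hence generates a unique idempotent $\Phi^\omega$ that is itself a contraction and a projection. The key observation is that a product of finitely many projections, iterated, converges to the projection onto the intersection of their ranges — this is precisely the von Neumann--Halperin theorem alluded to in the introduction. So my first step is to identify $\lim_{n\to\infty}(e_1\circ\dots\circ e_k)^n$ with the orthogonal projection onto $\bigcap_{i=1}^k \operatorname{Ran}(e_i)$, and to confirm that this limit coincides with the unique idempotent $\Phi^\omega$ guaranteed by Theorem~\ref{theor_idemp_exists}. (Here one must be slightly careful: the $e_i$ are projections in the operator sense on $\mathbb{C}^{n^2}$, though not necessarily orthogonal/self-adjoint a priori; I would first argue that a contractive idempotent on a finite-dimensional inner product space is in fact an orthogonal projection, which is exactly what Halperin's theorem delivers, so von Neumann--Halperin applies.)

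For part (1), once the limit is identified with the projection $P$ onto $\bigcap_i \operatorname{Ran}(e_i)$, permutation invariance is immediate: the intersection $\bigcap_i \operatorname{Ran}(e_i)$ does not depend on the order in which the projections are listed, so $(e_{\pi(1)}\circ\dots\circ e_{\pi(k)})^\omega$ projects onto the same subspace and hence equals $(e_1\circ\dots\circ e_k)^\omega$. The statement that the limit exists along the full sequence $n\to\infty$ (rather than merely along a subsequence $n_s$ as in the definition of $\Phi^\omega$) is exactly the content of the von Neumann--Halperin convergence theorem for products of orthogonal projections, so I would cite that to upgrade the subsequential limit of the definition to a genuine limit.

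For part (2), I would argue at the level of ranges and kernels. Write $P=\Phi^\omega$ for the projection onto $W=\bigcap_i \operatorname{Ran}(e_i)$. Since $W\subseteq \operatorname{Ran}(e_i)$ and $e_i$ acts as the identity on its own range, we get $e_i\circ P = P$ directly on the range of $P$, and both sides kill everything orthogonal to $W$ appropriately; this yields $e_i\circ(e_1\circ\dots\circ e_k)^\omega=(e_1\circ\dots\circ e_k)^\omega$. For the symmetric identity $(e_1\circ\dots\circ e_k)^\omega\circ e_i=(e_1\circ\dots\circ e_k)^\omega$, I would either argue dually using the adjoint maps $e_i^*$ (which are the Kraus-transposed maps $\{V_j^*\}$ and are again idempotent CP sub-bistochastic, with $\bigcap_i\operatorname{Ran}(e_i^*)=W^\perp$-complementary structure), combined with part (1) applied to the $e_i^*$, or invoke permutation invariance from (1) to move $e_i$ to the front. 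Concretely, $(e_1\circ\dots\circ e_k)^\omega\circ e_i=(e_i\circ e_1\circ\dots\circ e_k\circ e_i)^\omega$-type manipulations reduce the right-composition to a left-composition after reordering.

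The main obstacle I anticipate is the transition from ``contractive idempotent operator'' to ``genuine orthogonal projection to which von Neumann--Halperin applies'': one must be sure that the maps are self-adjoint as operators on $\mathbb{C}^{n^2}$ with the Hilbert--Schmidt inner product, or else adapt the convergence theorem to general (oblique) contractive projections. I expect this is handled by the preceding theorem (Halperin \cite{H62}), which already asserts $\Phi^\omega$ is a projection precisely because it is a contraction; the same reasoning makes each $e_i$ an orthogonal projection, so the classical product-of-projections machinery is legitimately available, and the two identities in (2) then follow from elementary range containment.
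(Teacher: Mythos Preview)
Your proposal is correct and follows essentially the same route as the paper: each $e_i$ is an orthogonal projection by the preceding Halperin result, so von Neumann--Halperin gives the full-sequence limit and its identification with the projection onto $\bigcap_i\operatorname{Ran}(e_i)$, whence permutation invariance is immediate. For part~(2) the paper uses precisely the insertion trick you mention as your second alternative, writing $(e_1\circ\dots\circ e_k)^\omega=(e_i\circ e_1\circ\dots\circ e_k)^\omega=(e_1\circ\dots\circ e_k\circ e_i)^\omega$ and then peeling off the outermost $e_i$ by idempotency; your direct range-containment argument is an equally valid variant.
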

\begin{proof}
Since $e_1,...,e_k$ are projections, by von Neumann-Halperin theorem \cite[Theorem 1]{H62}, $\lim\limits_{n\to\infty}(e_1\circ...\circ e_k)^n=(e_1\circ...\circ e_k)^\omega=(e_{\pi(1)}\circ...\circ e_{\pi(k)})^\omega$. In the same way,
$(e_1\circ...\circ e_k)^\omega=(e_i\circ e_1\circ...\circ e_k)^\omega=(e_1\circ...\circ e_k\circ e_i)^\omega.$
Note that $e_i\circ(e_i\circ e_1\circ...\circ e_k)^\omega=(e_i\circ e_1\circ...\circ e_k)^\omega$ and $(e_1\circ...\circ e_k\circ e_i)^\omega\circ e_i=(e_1\circ...\circ e_k\circ e_i)^\omega.$
Therefore $(e_1\circ...\circ e_k)^\omega=e_i\circ(e_1\circ...\circ e_k)^\omega=(e_1\circ...\circ e_k)^\omega\circ e_i.$
\qed
\end{proof}

Any finite quantum system at a particular moment of time (i.e., its {\em mixed state}) is
described by a {\em density matrix}. By \cite[Theorem 2.5]{NC00}, a
matrix is a density matrix if and only if it is positive and its
trace is equal to $1$.

Informally, an $n\times n$ density matrix describes a quantum system
with $n$ states. A completely positive trace-preserving map describes an evolution of
a quantum system as allowed by quantum mechanics. It maps a density
matrix to a density matrix.
\section{Automata Models}\label{sec_automata}
An overview of different models of finite automata, relevant to our
research, is given in the following table. The definition for bistochastic quantum finite automata is given below. For the formal
definitions of other indicated automata models, the reader is referred
to the references given in the table.

As seen further, measure-once (measure-many) bistochastic quantum finite automata is a generalization of any other "classical" ("decide-and-halt", respectively) word acceptance model from Table \ref{table_automata}.
At the same time BQFA have the same limitations for language recognition as known for other models above.
Thus we consider the introduction of yet another model of quantum finite automata justified, because it allows us to prove the limitations of language recognition for all the models within single framework.
Therefore the proof of the new limitations for MM-BQFA in Section \ref{Lin_Ineq}, which are expressed in terms of linear inequalities, implies the same for any other "decide-and-halt" word acceptance model in the table.
\hbadness=20000 \vbadness=20000
\begin{table}
\caption{Automata Models}
\label{table_automata}
\begin{tabular}{|p{3.2cm}|p{4.2cm}|p{4.7cm}|}
\hline
 & \mbox{"Classical" word}\newline \mbox{acceptance}
 & \mbox{"Decide-and-halt" word}\newline \mbox{acceptance}
\\
\hline
Deterministic \mbox{Reversible Automata} & \mbox{Group
Automata (GA)} \mbox{\cite{HS66,T68}} &
\mbox{Reversible Finite Automata} (RFA) \cite{AF98,GP10}\\
\hline
\mbox{Quantum Finite} \mbox{Automata with Pure} States &
\mbox{Measure-Once Quantum} \mbox{Finite Automata} \mbox{(MO-QFA)}
\cite{MC97,BP99} &
\mbox{Measure-Many Quantum} \mbox{Finite Automata (MM-QFA)} \cite{KW97,BP99,AV00,AT04}\\
\hline
Probabilistic \mbox{Reversible Automata} & \mbox{"Classical"
Probabilistic} \mbox{Reversible Automata} \mbox{(C-PRA)}
\cite{GK02,AT04}&
\mbox{"Decide-and-halt" Probabilistic} \mbox{Reversible Automata (DH-PRA)} \cite{GK02,GK09}\\
\hline
\multirow{2}{*}{\newline\begin{minipage}[b]{2in}\ \\Quantum Finite \\ Automata with \\Mixed States\end{minipage}} &
\mbox{Latvian Quantum Finite} \mbox{Automata} (LQFA) \cite{AT04}&
\mbox{Enhanced Quantum Finite} \mbox{Automata} \mbox{(EQFA)} \cite{N99,M08}\\
\cline{2-3}
 & \mbox{Measure-Once Bistochastic} \mbox{Quantum Finite Automata} (MO-BQFA)&
\mbox{Measure-Many Bistochastic} \mbox{Quantum Finite Automata} \mbox{(MM-BQFA)}\\
\hline
\end{tabular}
\end{table}
\hbadness=1000 \vbadness=1000
\begin{definition}
\label{def_BQFA}
A bistochastic quantum finite automaton (BQFA) is a tuple $(Q,A\cup\{\#,\$\},q_0,\{\Phi_a\})$, where $Q$ is a finite set of states, $A$ - a finite input alphabet, $\#,\$\notin A$ - initial and final end-markers,
$q_0$ - an initial state and for each $a\in A\cup\{\#,\$\}$ $\Phi_a$ is
a CP bistochastic transition map from $\mathbb{C}^{|Q|\times|Q|}$ to $\mathbb{C}^{|Q|\times|Q|}$.
\end{definition}
Regardless of which word acceptance model is used, each input word is enclosed into end-markers $\#,\$$. At any step, the mixed state of a BQFA may be described by a density matrix $\rho$.
The computation starts in the state $|q_0\rangle\langle q_0|$.

{\em Operation of a measure-once BQFA and word acceptance.}
On input letter $a\in A$, $\rho$ is transformed into $\Phi_a(\rho)$.
The set of states $Q$ is partitioned into two disjoint subsets $Q_{acc}$ and $Q_{rej}$. After reading the final end-marker $\$$, a measurement $\{P_{acc},P_{rej}\}$ is applied to $\rho$,
where $P_{acc}=\sum\limits_{q\in{Q_{acc}}}|q\rangle\langle q|$ and $P_{rej}=\sum\limits_{q\in{Q_{rej}}}|q\rangle\langle q|$. The respective input word is accepted with probability $\operatorname{Tr}(P_{acc}\rho P_{acc})$
and rejected with probability $\operatorname{Tr}(P_{rej}\rho P_{rej})$. For any word $\mathbf{a}=a_1\dots a_k$, define $\Phi_\mathbf{a}=\Phi_{a_k}\circ\dots\circ\Phi_{a_1}$.

{\em Operation of a measure-many BQFA and word acceptance.}
The set of states $Q$ is partitioned into three disjoint subsets $Q_{non}$, $Q_{acc}$ and $Q_{rej}$ - non-halting, accepting and rejecting states, respectively. It is assumed that $q_0\in Q_{non}$.
On input letter $a\in A$, $\rho$ is transformed into $\rho^\prime=\Phi_a(\rho)$. After that, a measurement $\{P_{non},P_{acc},P_{rej}\}$ is applied to $\rho^\prime$,
where for each $i\in\{non,acc,rej\}$ $P_i=\sum\limits_{q\in{Q_i}}|q\rangle\langle q|$.
The respective input word is
accepted (rejected) with probability $\operatorname{Tr}(P_{acc}\rho^\prime P_{acc})$ ($\operatorname{Tr}(P_{rej}\rho^\prime P_{rej})$, respectively). If the input word is accepted or rejected, the computation is halted.
Otherwise, with probability $\operatorname{Tr}(P_{non}\rho^\prime P_{non})$, the computation continues
from the mixed state $P_{non}\rho^\prime P_{non}/\operatorname{Tr}(P_{non}\rho^\prime P_{non})$. To ensure that any input word is always either accepted or rejected, it is required for $\Phi_\$$ that for any $\rho$
such that $\operatorname{Tr}(P_{non}\rho P_{non})=1$, $\operatorname{Tr}(P_{non}\Phi_\$(\rho)P_{non})=0$.

To describe the probability distribution $S_{\#\mathbf{u}}$ of a MM-BQFA $\mathcal{A}$ after reading some prefix $\#\mathbf{u}$, it is convenient to use density matrices $\rho$ scaled by $p$, $0\leqslant p\leqslant1$.
So the probability distribution $S_{\#\mathbf{u}}$ of $\mathcal{A}$ is a triple $(\rho,p_{acc},p_{rej})$, where $\operatorname{Tr}(\rho)+p_{acc}+p_{rej}=1$, $\rho/\operatorname{Tr}(\rho)$ is the current mixed state and
$p_{acc},p_{rej}$ are respectively the probabilities that $\mathcal{A}$ has accepted or rejected the input. So the scaled density matrix $\rho$ may be called a {\em scaled mixed state}.
For any $a\in A\cup\{\#,\$\}$, let $\Psi_a(\rho)=P_{non}\Phi_a(\rho)P_{non}$.
After reading the next input letter $a$, the probability distribution is
$S_{\#\mathbf{u}a}=(\Psi_a(\rho),p_{acc}+\operatorname{Tr}(P_{acc}\Phi_a(\rho)P_{acc}),p_{rej}+\operatorname{Tr}(P_{rej}\Phi_a(\rho)P_{rej}))$.
For any word $\mathbf{a}=a_1\dots a_k$, define $\Psi_\mathbf{a}=\Psi_{a_k}\circ\dots\circ\Psi_{a_1}$. Hence $\rho=\Psi_{\#\mathbf{u}}(|q_0\rangle\langle q_0|)$. Note that $\Psi_\mathbf{a}$ is a CP sub-bistochastic map.
%
%

{\em Language recognition} is defined in a way equivalent to Rabin's \cite{R63}. Suppose that an automaton $\mathcal{A}$ corresponds to one of the
probabilistic or quantum models from the table above.
By $p_{\mathbf{x},\mathcal{A}}$ (or $p_\mathbf{x}$, if no ambiguity arises) we denote the probability that an
input $\mathbf{x}$ is accepted by the automaton $\mathcal{A}$.
Furthermore, we denote $P_L=\{p_{\mathbf{x},\mathcal{A}}\ |\
\mathbf{x}\in L\}$, $\overline{P_L}=\{p_{\mathbf{x},\mathcal{A}}\ |\
\mathbf{x}\notin L\}$, $p_1=\sup\overline{P_L}$, $p_2=\inf P_L$.
It is said that an automaton $\mathcal{A}$ recognizes a language $L$
with interval $(p_1,p_2)$, if $p_1\leq p_2$ and
$P_L\cap\overline{P_L}=\emptyset$.
It is said that an automaton $\mathcal{A}$ recognizes a language $L$
with bounded error and interval $(p_1,p_2)$, if $p_1<p_2$.
We consider only bounded error language recognition.
An automaton is said to recognize a language with probability $p$ if
the automaton recognizes the language with interval $(1-p,p)$.
It is said that a language is recognized by some class of automata with
probability $1-\epsilon$, if for every $\epsilon>0$ there
exists an automaton in the class which recognizes the language with
interval $(\epsilon_1,1-\epsilon_2)$, where
$\epsilon_1,\epsilon_2\leq\epsilon$. A language $L$ is recognizable with interval $(p_1,p_2)$ iff it is
recognizable with some probability $p$ (see, for example, \cite{GK02}).

{\em BQFA as a generalization of other models.} Since unitary operations and orthogonal measurements are bistochastic operations, MO-BQFA is a generalization of LQFA and MM-BQFA is
a generalization of EQFA. Also one can see that MO-BQFA and MM-BQFA are generalizations of C-PRA and DH-PRA, respectively. A probability distribution vector $P=\sum\limits_ip_i|q_i\rangle$ of a PRA corresponds to the mixed state
$\rho=\sum\limits_ip_i|q_i\rangle\langle q_i|$ of a BQFA. Any transition matrix $B$ of a PRA is doubly stochastic. By the Birkhoff theorem \cite[Theorem 4.21]{Z99}, any doubly stochastic matrix is a convex combination of
some permutation matrices. Thus $B=\sum\limits_s p_sT_s$, where $p_s$ are nonnegative numbers with sum equal to $1$ and $T_s$ - permutation matrices.
So the CP bistochastic map corresponding to the transition matrix $B$ is $\Phi(\rho)=\sum\limits_s p_sT_s\rho T_s^*$, which is a random unitary operation.
Indeed, one may check that $\Phi(\rho)$ is a diagonal matrix such that  $(\Phi(\rho))_{ii}=(BP)_i$.

On the other hand, BQFA are a special case of one-way general QFA (also called quantum automata with open time evolution), which admit any CP trace-preserving transition maps. One-way general QFA recognize with
bounded error exactly the regular languages \cite{H10,LM10}, this fact was also mentioned in \cite[Introduction]{AW02}. Similar models of quantum automata which recognize any regular language
have been proposed in \cite{P00,C01,BP03,D03}. So the recognition power of BQFA is also limited to regular languages only.

{\em Comparison of the language classes.} Having a certain class of automata $\boldsymbol{\mathcal{A}}$, let
us denote by $\boldsymbol{\mathcal{L}}(\boldsymbol{\mathcal{A}})$
the respective class of languages. Thus
$\boldsymbol{\mathcal{L}}(\text{GA})$ $=$
$\boldsymbol{\mathcal{L}}(\text{MO-QFA})$ $=$
$\boldsymbol{\mathcal{G}}$, $\boldsymbol{\mathcal{L}}(\text{C-PRA})$
$=$ $\boldsymbol{\mathcal{L}}(\text{LQFA})$ $=$ $\boldsymbol{\mathcal{L}}(\text{MO-BQFA})$ $=$
$\boldsymbol{\mathcal{EJ}}$, $\boldsymbol{\mathcal{G}}$ $\subsetneq$
$\boldsymbol{\mathcal{L}}(\text{RFA})$ $\subsetneq$
$\boldsymbol{\mathcal{ER}_1}$, $\boldsymbol{\mathcal{EJ}}$
$\subsetneq$ $\boldsymbol{\mathcal{L}}(\text{MM-QFA})$
$\sabove{=}{?}$ $\boldsymbol{\mathcal{L}}(\text{DH-PRA})$
$\sabove{=}{?}$ $\boldsymbol{\mathcal{L}}(\text{EQFA})$ $\sabove{=}{?}$ $\boldsymbol{\mathcal{L}}(\text{MM-BQFA})$ $\subsetneq$
$\boldsymbol{\mathcal{ER}}$.
Relations concerning BQFA are proved below. All the other relations are known from the references given in Table \ref{table_automata}.
\begin{theorem}
\label{theor_EJ_MO-BQFA}
$\boldsymbol{\mathcal{L}}(\text{MO-BQFA})$ $=$ $\boldsymbol{\mathcal{EJ}}$.
\end{theorem}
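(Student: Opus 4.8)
The plan is to establish the two inclusions $\boldsymbol{\mathcal{EJ}}\subseteq\boldsymbol{\mathcal{L}}(\text{MO-BQFA})$ and $\boldsymbol{\mathcal{L}}(\text{MO-BQFA})\subseteq\boldsymbol{\mathcal{EJ}}$ separately. The first inclusion is immediate from the remarks preceding the statement: unitary operations and orthogonal measurements are bistochastic CP maps, so every LQFA (equivalently, every C-PRA) is a special MO-BQFA; since LQFA already recognize all of $\boldsymbol{\mathcal{EJ}}$ by \cite{AT04}, so do MO-BQFA. The work is therefore entirely in the converse, the \emph{limitation} direction, for which Theorem~\ref{theor_bist_EJ} is the essential tool.

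For the converse, let $L$ be recognized by an MO-BQFA $\mathcal{A}$ with a bounded-error gap $p_1<p_2$. Since BQFA are a special case of one-way general QFA, $L$ is regular, so its syntactic monoid $\mathcal{M}(L)$ is finite; fix an exponent $E$ with $s^E=s^{2E}=\dots=s^\omega$ for every $s\in\mathcal{M}(L)$. It suffices to verify the defining identity $(x^\omega y^\omega)^\omega=(y^\omega x^\omega)^\omega$ of $\mathbf{EJ}$ in $\mathcal{M}(L)$. Writing $\varphi$ for the syntactic morphism, this amounts to showing that for all words $\mathbf{u},\mathbf{v},\mathbf{x},\mathbf{y}$ the two fixed elements $\varphi(\mathbf{x})\bigl(\varphi(\mathbf{u})^\omega\varphi(\mathbf{v})^\omega\bigr)^\omega\varphi(\mathbf{y})$ and $\varphi(\mathbf{x})\bigl(\varphi(\mathbf{v})^\omega\varphi(\mathbf{u})^\omega\bigr)^\omega\varphi(\mathbf{y})$ lie in $\varphi(L)$ simultaneously. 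Since $\mathcal{M}(L)$ separates words by contexts, it is enough to exhibit, for each $\mathbf{x},\mathbf{y}$, two \emph{sequences} of words representing these fixed elements whose acceptance probabilities converge to a common value; the gap $p_2-p_1>0$ then forces equal membership.

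To build such sequences, set $\mathbf{U}=\mathbf{u}^E$ and $\mathbf{V}=\mathbf{v}^E$, so that $\varphi(\mathbf{U}^N)=\varphi(\mathbf{u})^\omega$ and $\varphi(\mathbf{V}^N)=\varphi(\mathbf{v})^\omega$ for every $N\geq1$. The maps $\Phi_{\mathbf{U}},\Phi_{\mathbf{V}}$ are CP bistochastic, so their idempotents $e_1=\Phi_{\mathbf{U}}^\omega$ and $e_2=\Phi_{\mathbf{V}}^\omega$ exist (Theorem~\ref{theor_idemp_exists}) and are projections, attained as subsequential limits $\Phi_{\mathbf{U}}^{N_s}\to e_1$, $\Phi_{\mathbf{V}}^{N_s}\to e_2$. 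Consider the word family $\mathbf{z}=(\mathbf{U}^{N}\mathbf{V}^{N})^{ER}$, whose transition map is $\bigl((\Phi_{\mathbf{V}}^{N}\circ\Phi_{\mathbf{U}}^{N})^{E}\bigr)^{R}$; the outer exponent $ER$ guarantees $\varphi(\mathbf{z})=\bigl(\varphi(\mathbf{u})^\omega\varphi(\mathbf{v})^\omega\bigr)^\omega$ for \emph{all} $N,R\geq1$, so every such $\mathbf{z}$ represents the first fixed element. Letting $N\to\infty$ along $N_s$ gives $(\Phi_{\mathbf{V}}^{N}\circ\Phi_{\mathbf{U}}^{N})^{E}\to(e_2\circ e_1)^{E}$, and then letting $R\to\infty$ along a convergent subsequence gives $\bigl((e_2\circ e_1)^{E}\bigr)^{R}\to(e_2\circ e_1)^\omega$ (since the idempotent of $(e_2\circ e_1)^E$ coincides with $(e_2\circ e_1)^\omega$). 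A diagonal extraction over these two limits produces actual words $\mathbf{z}_k$ with $\Phi_{\mathbf{z}_k}\to(e_2\circ e_1)^\omega$. The symmetric family built from $(\mathbf{V}^{N}\mathbf{U}^{N})^{ER}$ yields words $\mathbf{z}'_k$ representing the second fixed element with $\Phi_{\mathbf{z}'_k}\to(e_1\circ e_2)^\omega$. Now Theorem~\ref{theor_bist_EJ}(1), applied to the idempotent projections $e_1,e_2$, states $(e_1\circ e_2)^\omega=(e_2\circ e_1)^\omega$, so both sequences of transition maps converge to the \emph{same} map $f$. Because the acceptance probability of $\mathbf{x}\mathbf{z}\mathbf{y}$ depends continuously on the block map inserted between the fixed prefix $\Phi_{\mathbf{x}}\circ\Phi_{\#}$ and the fixed $\Phi_{\$}$ followed by the final measurement, both $p_{\mathbf{x}\mathbf{z}_k\mathbf{y}}$ and $p_{\mathbf{x}\mathbf{z}'_k\mathbf{y}}$ converge to the same limit; for large $k$ their difference is below $p_2-p_1$, and the two fixed elements must share membership in $L$.

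The main obstacle is the reconciliation of the \emph{analytic} idempotent $\Phi^\omega$ (a subsequential limit of powers, in general equal to no finite power) with the \emph{algebraic} idempotent $s^\omega$ of the finite monoid (a fixed power), together with the nested limit hidden in $(e_2\circ e_1)^\omega$. The device that dissolves this difficulty is the double substitution $\mathbf{u}\mapsto\mathbf{u}^E$, $\mathbf{v}\mapsto\mathbf{v}^E$ combined with the outer exponent $ER$: these make the syntactic image of each word family independent of the large free parameters $N,R$, so that the exponents driving the CP maps toward their projections are unconstrained by any congruence condition and a clean diagonalization becomes available. With that in place, Theorem~\ref{theor_bist_EJ} supplies exactly the commutation $(e_1\circ e_2)^\omega=(e_2\circ e_1)^\omega$ needed to collapse the two orderings onto a common limiting map, which is precisely the algebraic content of membership in $\mathbf{EJ}$.
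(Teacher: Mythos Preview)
Your proposal is correct and follows essentially the same route as the paper: the easy inclusion comes from LQFA via \cite{AT04}, and the limitation direction applies Theorem~\ref{theor_bist_EJ} to the idempotents of $\Phi_{\mathbf{u}^E}$ and $\Phi_{\mathbf{v}^E}$ to force $(e_1\circ e_2)^\omega=(e_2\circ e_1)^\omega$, then approximates these analytic idempotents by actual words and uses the bounded-error gap to conclude syntactic equivalence. The paper phrases the second half as a proof by contradiction and compresses your two-parameter diagonal extraction into a single sequence $s_n$ with $\lim_{n\to\infty}(\Phi_\mathbf{a}^{s_n}\circ\Phi_\mathbf{b}^{s_n})^n=(\Phi_\mathbf{a}^\omega\circ\Phi_\mathbf{b}^\omega)^\omega$, but your more explicit treatment of the nested limit is simply an unpacking of that same step.
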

\begin{proof}
Since $\boldsymbol{\mathcal{L}}(\text{LQFA})=\boldsymbol{\mathcal{EJ}}$ \cite{AT04} and MO-BQFA is a generalization of LQFA, $\boldsymbol{\mathcal{EJ}}\subseteq\boldsymbol{\mathcal{L}}(\text{MO-BQFA})$.
It remains to prove that $\boldsymbol{\mathcal{L}}(\text{MO-BQFA})\subseteq\boldsymbol{\mathcal{EJ}}$.

Suppose that a MO-BQFA $\mathcal{A}$ recognizes a language $L$ over alphabet $A$, such that $L\notin A^*\boldsymbol{\mathcal{EJ}}$. Let $\mathcal{M}=\mathcal{M}(L)$ - the syntactic monoid of $L$ and $\varphi$ - the
syntactic morphism from $A^*$ to $\mathcal{M}$. By assumption, there exist $x,y\in\mathcal{M}$ such that $(x^\omega y^\omega)^\omega\neq(y^\omega x^\omega)^\omega$.
There exists a positive integer $k$ such that for all $z$ in $\mathcal{M}$ $z^k=z^\omega$, therefore $(x^ky^k)^k\neq(y^kx^k)^k$. Let $\mathbf{a}\in x^k\varphi^{-1}$ and $\mathbf{b}\in y^k\varphi^{-1}$.
Consider the CP bistochastic transition maps $\Phi_\mathbf{a}$ and $\Phi_\mathbf{b}$ of  $\mathcal{A}$.
Theorem \ref{theor_bist_EJ} implies that there exists a sequence of positive integers $s_n$ such that
$\lim\limits_{n\to\infty}(\Phi_\mathbf{a}^{s_n}\circ\Phi_\mathbf{b}^{s_n})^n=\lim\limits_{n\to\infty}(\Phi_\mathbf{b}^{s_n}\circ\Phi_\mathbf{a}^{s_n})^n=(\Phi_\mathbf{a}^\omega\circ\Phi_\mathbf{b}^\omega)^\omega
=(\Phi_\mathbf{b}^\omega\circ\Phi_\mathbf{a}^\omega)^\omega$. Note that $\|\Phi_\mathbf{u}\|=\|\Phi_\mathbf{v}\|=1$. Therefore for any $\epsilon>0$ there exists $n>0$ such that for any $\mathbf{u},\mathbf{v}\in A^*$
$|p_{\mathbf{u}(\mathbf{a}^{s_n}\mathbf{b}^{s_n})^n\mathbf{v}}-p_{\mathbf{u}(\mathbf{b}^{s_n}\mathbf{a}^{s_n})^n\mathbf{v}}|<\epsilon$.
So there exists $n$ such that $(\mathbf{a}^{s_n}\mathbf{b}^{s_n})^n\sim_L(\mathbf{b}^{s_n}\mathbf{a}^{s_n})^n$.
Hence $(\mathbf{a}^{s_n}\mathbf{b}^{s_n})^n\varphi=(\mathbf{b}^{s_n}\mathbf{a}^{s_n})^n\varphi=(x^ky^k)^n=(y^kx^k)^n$. The latter implies $(x^ky^k)^k=(y^kx^k)^k$. This is a contradiction.
\qed
\end{proof}
The next theorem is equivalent to the statement that MM-BQFA can't recognize any language that does not satisfy the partial order condition from \cite{BP99}.
\begin{theorem}
\label{theor_MM_BQFA_subset_ER}
$\boldsymbol{\mathcal{L}}(\text{MM-BQFA})$ $\subseteq$ $\boldsymbol{\mathcal{ER}}$.
\end{theorem}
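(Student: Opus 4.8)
The plan is to argue by contradiction in close analogy with the proof of Theorem~\ref{theor_EJ_MO-BQFA}, but now exploiting the $\mathbf{ER}$ identity $(x^\omega y^\omega)^\omega x^\omega = (x^\omega y^\omega)^\omega$ and the stronger conclusion~(2) of Theorem~\ref{theor_bist_EJ}. Suppose an MM-BQFA $\mathcal{A}$ recognizes a language $L$ over $A$ with $L\notin A^*\boldsymbol{\mathcal{ER}}$. Writing $\mathcal{M}=\mathcal{M}(L)$ and $\varphi$ for the syntactic morphism, the hypothesis yields $x,y\in\mathcal{M}$ with $(x^\omega y^\omega)^\omega x^\omega\ne(x^\omega y^\omega)^\omega$. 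Fixing $k$ with $z^k=z^\omega$ for all $z\in\mathcal{M}$, choose $\mathbf{a}\in x^k\varphi^{-1}$ and $\mathbf{b}\in y^k\varphi^{-1}$, so that $\mathbf{a}^{s}\varphi=x^\omega$ and $\mathbf{b}^{s}\varphi=y^\omega$ for every $s\ge1$. The goal is to show that, for a suitable large $n$, the words $\mathbf{w}_1=(\mathbf{a}^{s_n}\mathbf{b}^{s_n})^n\mathbf{a}^{s_n}$ and $\mathbf{w}_2=(\mathbf{a}^{s_n}\mathbf{b}^{s_n})^n$ are syntactically congruent; since $\mathbf{w}_1\varphi=(x^\omega y^\omega)^n x^\omega$ and $\mathbf{w}_2\varphi=(x^\omega y^\omega)^n$, taking $n$ to be a large multiple of $k$ then forces $(x^\omega y^\omega)^\omega x^\omega=(x^\omega y^\omega)^\omega$, the desired contradiction.

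The algebraic engine is Theorem~\ref{theor_bist_EJ} applied to $e_1=\Psi_{\mathbf{a}}^\omega$ and $e_2=\Psi_{\mathbf{b}}^\omega$, which are idempotent CP sub-bistochastic projections by Theorem~\ref{theor_idemp_exists} and the ensuing result that $\Phi^\omega$ is a projection (recall $\Psi_{\mathbf{a}},\Psi_{\mathbf{b}}$ are CP sub-bistochastic). As in the proof of Theorem~\ref{theor_EJ_MO-BQFA} one selects a single sequence $s_n\to\infty$ along which $\Psi_{\mathbf{a}}^{s_n}\to e_1$ and $\Psi_{\mathbf{b}}^{s_n}\to e_2$ in operator norm, whence a diagonal argument gives $(\Psi_{\mathbf{b}}^{s_n}\circ\Psi_{\mathbf{a}}^{s_n})^n\to(e_2\circ e_1)^\omega$. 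The essential new ingredient is part~(2) of Theorem~\ref{theor_bist_EJ}, which yields $e_1\circ(e_2\circ e_1)^\omega=(e_2\circ e_1)^\omega$; that is, $e_1$ acts as the identity on the range of the limit projection. Consequently, writing $\rho_{\#\mathbf{u}}=\Psi_{\#\mathbf{u}}(|q_0\rangle\langle q_0|)$ and $\rho^\star=(\Psi_{\mathbf{b}}^{s_n}\circ\Psi_{\mathbf{a}}^{s_n})^n(\rho_{\#\mathbf{u}})$ for the scaled state reached after the common prefix, one gets both $\|\rho^\star-\Psi_{\mathbf{a}}^{s_n}(\rho^\star)\|\to0$ and $\operatorname{Tr}(\rho^\star)-\operatorname{Tr}(\Psi_{\mathbf{a}}^{s_n}(\rho^\star))\to0$ as $n\to\infty$; because the maps involved are contractions (Theorem~\ref{theor_contract}) and every $\rho_{\#\mathbf{u}}$ has trace at most $1$, both convergences are uniform in $\mathbf{u}$.

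The measure-many bookkeeping is the one place the argument departs from the measure-once case, and it is the step I expect to be the main obstacle. Since both inputs $\mathbf{u}\mathbf{w}_1\mathbf{v}$ and $\mathbf{u}\mathbf{w}_2\mathbf{v}$ share the prefix $\#\mathbf{u}\mathbf{w}_2$, they reach the identical configuration $(\rho^\star,p^\star_{acc},p^\star_{rej})$; I would write the difference of their total acceptance probabilities as $\delta+\bigl(\beta_{\mathbf{v}}(\Psi_{\mathbf{a}}^{s_n}(\rho^\star))-\beta_{\mathbf{v}}(\rho^\star)\bigr)$, where $\delta\ge0$ is the acceptance mass accumulated while reading the extra block $\mathbf{a}^{s_n}$ and $\beta_{\mathbf{v}}$ is the linear functional sending a scaled state to the acceptance probability contributed by the remaining suffix $\mathbf{v}\$$ (it is represented by an effect $0\le A_{\mathbf{v}}\le I$, so its norm is bounded uniformly in $\mathbf{v}$). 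The second term equals $\beta_{\mathbf{v}}(\Psi_{\mathbf{a}}^{s_n}(\rho^\star)-\rho^\star)$, which tends to $0$ uniformly by the previous paragraph. For the first term one uses that $\Phi_{\mathbf{a}}$ is trace preserving and that $\{P_{non},P_{acc},P_{rej}\}$ is an orthogonal measurement, so the total acceptance-plus-rejection mass accumulated over the block equals the trace drop $\operatorname{Tr}(\rho^\star)-\operatorname{Tr}(\Psi_{\mathbf{a}}^{s_n}(\rho^\star))$; hence $0\le\delta\le\operatorname{Tr}(\rho^\star)-\operatorname{Tr}(\Psi_{\mathbf{a}}^{s_n}(\rho^\star))\to0$ uniformly. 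This addresses exactly the worry that an intermediate halting could contribute acceptance without moving the state: it is resolved because the inserted block acts on a state almost fixed by $\Psi_{\mathbf{a}}$, so almost no mass halts during it.

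Finally I would combine these estimates to conclude $|p_{\mathbf{u}\mathbf{w}_1\mathbf{v}}-p_{\mathbf{u}\mathbf{w}_2\mathbf{v}}|\to0$ uniformly in $\mathbf{u},\mathbf{v}$ as $n\to\infty$. Since $\mathcal{A}$ recognizes $L$ with bounded error, $p_1=\sup\overline{P_L}<p_2=\inf P_L$, and choosing $n$ (a large multiple of $k$) so that this difference stays below $p_2-p_1$ for all $\mathbf{u},\mathbf{v}$ rules out any pair separating $\mathbf{w}_1$ from $\mathbf{w}_2$; thus $\mathbf{w}_1\sim_L\mathbf{w}_2$. Applying $\varphi$ as in the first paragraph yields $(x^\omega y^\omega)^\omega x^\omega=(x^\omega y^\omega)^\omega$, the contradiction, so every language recognized by an MM-BQFA lies in $\boldsymbol{\mathcal{ER}}$.
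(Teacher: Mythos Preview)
Your proposal is correct and follows essentially the same route as the paper: argue by contradiction from a violation of the $\mathbf{ER}$ identity, pass to words $\mathbf{a},\mathbf{b}$ with idempotent images, invoke Theorem~\ref{theor_bist_EJ} on the sub-bistochastic maps $\Psi_\mathbf{a},\Psi_\mathbf{b}$ (using part~(2) to absorb the extra $\mathbf{a}^{s_n}$ block), and conclude syntactic congruence of $(\mathbf{a}^{s_n}\mathbf{b}^{s_n})^n$ and $(\mathbf{a}^{s_n}\mathbf{b}^{s_n})^n\mathbf{a}^{s_n}$. Your explicit decomposition of the acceptance-probability difference into the halting mass $\delta$ during the inserted block (bounded by the trace drop) plus the suffix term $\beta_\mathbf{v}(\Psi_\mathbf{a}^{s_n}(\rho^\star)-\rho^\star)$ (bounded via the effect $0\le A_\mathbf{v}\le I$) is exactly the measure-many bookkeeping the paper compresses into the line ``$p_{acc}^\prime-p_{acc}<\epsilon$ and $p_{rej}^\prime-p_{rej}<\epsilon$''; your version is in fact more carefully spelled out.
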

\begin{proof}
Suppose that a MM-BQFA $\mathcal{A}$ recognizes a language $L$ over alphabet $A$, such that $L\notin A^*\boldsymbol{\mathcal{ER}}$. Let $\mathcal{M}=\mathcal{M}(L)$ - the syntactic monoid of $L$ and $\varphi$ - the
syntactic morphism from $A^*$ to $\mathcal{M}$. By assumption, there exist $x,y\in\mathcal{M}$ such that $(x^\omega y^\omega)^\omega x^\omega\neq(x^\omega y^\omega)^\omega$.
There exists a positive integer $k$ such that for all $z$ in $\mathcal{M}$ $z^k=z^\omega$, therefore $(x^ky^k)^kx^k\neq(x^ky^k)^k$. Let $\mathbf{a}\in x^k\varphi^{-1}$ and $\mathbf{b}\in y^k\varphi^{-1}$.
Consider the CP sub-bistochastic maps $\Psi_\mathbf{a}$ and $\Psi_\mathbf{b}$ of $\mathcal{A}$.
Theorem \ref{theor_bist_EJ} implies that there exists a sequence of positive integers $s_n$ such that
$\lim\limits_{n\to\infty}\Psi_\mathbf{a}^{s_n}\circ(\Psi_\mathbf{b}^{s_n}\circ\Phi_\mathbf{a}^{s_n})^n=\lim\limits_{n\to\infty}(\Psi_\mathbf{b}^{s_n}\circ\Psi_\mathbf{a}^{s_n})^n
=(\Phi_\mathbf{b}^\omega\circ\Phi_\mathbf{a}^\omega)^\omega$. Let $\mathbf{u},\mathbf{v}\in A^*$. Let $\mathbf{a}(n)=\mathbf{a}^{s_n}$ and $\mathbf{b}(n)=\mathbf{b}^{s_n}$.
Let $\mathbf{w}(n)=\#\mathbf{u}(\mathbf{a}(n)\mathbf{b}(n))^n$. After reading the word $\mathbf{w}(n)$ the probability distribution is
$S_{\mathbf{w}(n)}=(\Psi_{\mathbf{w}(n)}(|q_0\rangle\langle q_0|),p_{acc},p_{rej})$. If $\mathbf{a}(n)$ is read afterwards,
$S_{\mathbf{w}(n)\mathbf{a}(n)}=(\Psi_{\mathbf{w}(n)\mathbf{a}(n)}(|q_0\rangle\langle q_0|),p_{acc}^\prime,p_{rej}^\prime)$,
where $p_{acc}^\prime\geqslant p_{acc}$ and $p_{rej}^\prime\geqslant p_{rej}$.
Note that $\lim\limits_{n\to\infty}\Psi_{\mathbf{w}(n)}=\lim\limits_{n\to\infty}\Psi_{\mathbf{w}(n)\mathbf{a}(n)}$ and $\|\Psi_\mathbf{u}\|\leqslant1$, $\|\Psi_\mathbf{v}\|\leqslant1$.
Therefore for any $\epsilon>0$ there exists $n>0$ such that for any $\mathbf{u}$ $\|\Psi_{\mathbf{w}(n)}(|q_0\rangle\langle q_0|)-\Psi_{\mathbf{w}(n)\mathbf{a}(n)}(|q_0\rangle\langle q_0|)\|<\epsilon$.
The latter in turn implies
$p_{acc}^\prime-p_{acc}<\epsilon$ and $p_{rej}^\prime-p_{rej}<\epsilon$.
So there exists $n$ such that for any $\mathbf{u},\mathbf{v}$ $\mathbf{u}(\mathbf{a}^{s_n}\mathbf{b}^{s_n})^n\mathbf{v}\in L$ if and only if
$\mathbf{u}(\mathbf{a}^{s_n}\mathbf{b}^{s_n})^n\mathbf{a}^{s_n}\mathbf{v}\in L$.
Hence $(\mathbf{a}^{s_n}\mathbf{b}^{s_n})^n\varphi=(\mathbf{a}^{s_n}\mathbf{b}^{s_n})^n\mathbf{a}^{s_n}\varphi=(x^ky^k)^n=(x^ky^k)^nx^k$. The latter implies $(x^ky^k)^k=(x^ky^k)^kx^k$. This is a contradiction.
\qed
\end{proof}
The relation $\boldsymbol{\mathcal{L}}(\text{MM-BQFA})$ $\subsetneq$ $\boldsymbol{\mathcal{ER}}$ is demonstrated in Section \ref{Lin_Ineq} (Corollary \ref{cor_subset}).

$\boldsymbol{\mathcal{L}}(\text{MM-BQFA})$ shares a lot of properties with the language classes of other "decide-and-halt" word acceptance models, like closure under complement and inverse homomorphisms.
In Section \ref{section_forb_constr} it is noted that MM-BQFA does not recognize any of the languages corresponding to "forbidden constructions" from \cite[Theorem 4.3]{AV00}.
Similarly as other "decide-and-halt" models, $\boldsymbol{\mathcal{L}}(\text{MM-BQFA})$ is not closed under union and intersection.
\begin{theorem}
\label{theor_closed_compl}
The class $\boldsymbol{\mathcal{L}}(\text{MM-BQFA})$ is closed under complement, inverse free monoid morphisms, and word quotient.
\end{theorem}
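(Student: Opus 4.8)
The plan is to establish the three closure properties separately, treating each as a construction that transforms a recognizing MM-BQFA for a given language into a recognizing MM-BQFA for the derived language, while preserving bounded-error recognition.

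First I would handle closure under complement, which is the most transparent. Given a MM-BQFA $\mathcal{A}=(Q,A\cup\{\#,\$\},q_0,\{\Phi_a\})$ recognizing $L$ with interval $(p_1,p_2)$ where $p_1<p_2$, I would swap the roles of the accepting and rejecting state sets, i.e. define a new automaton $\mathcal{A}'$ identical to $\mathcal{A}$ except that $Q_{acc}'=Q_{rej}$ and $Q_{rej}'=Q_{acc}$, leaving $Q_{non}$ unchanged. The transition maps $\Phi_a$ remain CP bistochastic, so $\mathcal{A}'$ is still a valid MM-BQFA, and the final-end-marker halting condition is symmetric in $Q_{acc}$ and $Q_{rej}$, hence still satisfied. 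Since at every step the probability that $\mathcal{A}'$ accepts equals the probability that $\mathcal{A}$ rejects, for every input $\mathbf{x}$ we have $p_{\mathbf{x},\mathcal{A}'}=1-p_{\mathbf{x},\mathcal{A}}$, so $\mathcal{A}'$ recognizes $\overline{L}$ with interval $(1-p_2,1-p_1)$, establishing closure under complement.

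Next I would treat closure under inverse free monoid morphisms. Let $h:B^*\to A^*$ be a monoid morphism and suppose $\mathcal{A}$ recognizes $L\subseteq A^*$; I want a MM-BQFA recognizing $h^{-1}(L)\subseteq B^*$. The natural construction defines, for each letter $b\in B$, a composite transition map $\Phi_b'=\Phi_{h(b)}=\Phi_{a_k}\circ\dots\circ\Phi_{a_1}$ where $h(b)=a_1\cdots a_k$. The key point to verify is that a composition of CP bistochastic maps is again CP bistochastic (both trace-preserving and unital are preserved under composition), so each $\Phi_b'$ is a legitimate transition map; the end-markers $\#,\$$ keep their original maps. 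Then for any word $\mathbf{y}\in B^*$ the scaled-state evolution of the new automaton on $\mathbf{y}$ coincides with that of $\mathcal{A}$ on $h(\mathbf{y})$, whence $p_{\mathbf{y}}=p_{h(\mathbf{y})}$ and the acceptance interval is preserved. A subtlety here is the handling of intermediate halting: since measurement after each letter of $\mathcal{A}$ may accept or reject, I must argue that the measure-many acceptance probability over the whole block $h(b)$ is what the simulating automaton produces after its single letter $b$; this works because the scaled-state formalism $S_{\#\mathbf{u}}=(\rho,p_{acc},p_{rej})$ accumulates the halting probabilities additively through the composite $\Psi_{h(b)}$, so the construction faithfully reproduces the accumulated probabilities.

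Finally, closure under word quotient. For a fixed word $\mathbf{w}\in A^*$ and the left (or right) quotient $\mathbf{w}^{-1}L=\{\mathbf{x}\mid \mathbf{w}\mathbf{x}\in L\}$, I would modify the initial configuration: run $\mathcal{A}$ on the prefix $\#\mathbf{w}$ first to obtain the scaled mixed state $\rho_0=\Psi_{\#\mathbf{w}}(|q_0\rangle\langle q_0|)$ together with accumulated $p_{acc},p_{rej}$, and take this as the starting distribution, suppressing the re-reading of the initial end-marker. The main obstacle, and the point requiring the most care, is that MM-BQFA start from a pure state $|q_0\rangle\langle q_0|$ rather than an arbitrary scaled mixed state with nonzero accumulated halting probabilities; one must argue that allowing a mixed-state (or equivalently enlarged-state) start does not leave the model, which follows because any mixed state is a convex combination of pure states and the already-accumulated halting probabilities only shift the acceptance interval by a fixed affine amount without collapsing $p_1<p_2$. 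For right quotients $L\mathbf{w}^{-1}$ the argument is symmetric, appending $\mathbf{w}$ before $\$$. I expect the bookkeeping of the accumulated acceptance/rejection masses and the verification that the interval remains nondegenerate to be the delicate part, whereas the preservation of the CP bistochastic property under all these operations is routine given the composition facts from Section \ref{sec_cp_maps}.
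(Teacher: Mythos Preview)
Your treatment of complement matches the paper's exactly.

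For inverse morphisms, however, there is a genuine gap. You set $\Phi_b'=\Phi_{a_k}\circ\dots\circ\Phi_{a_1}$ and claim the scaled-state evolution of the new automaton on $\mathbf{y}$ agrees with that of $\mathcal{A}$ on $h(\mathbf{y})$. But the new automaton's scaled map on $b$ is $\Psi_b'=P_{non}\Phi_b'P_{non}=P_{non}\Phi_{a_k}\cdots\Phi_{a_1}P_{non}$, whereas $\mathcal{A}$'s scaled map on $h(b)$ is $\Psi_{h(b)}=P_{non}\Phi_{a_k}P_{non}\cdots P_{non}\Phi_{a_1}P_{non}$: the intermediate projections $P_{non}$ are missing in your simulator. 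Nothing in Definition~\ref{def_BQFA} forces the bistochastic maps $\Phi_a$ to fix halting states, so $\Phi_{a_2}$ may move mass from $Q_{acc}$ back into $Q_{non}$; in the original automaton that mass has already been irrevocably committed to acceptance, while your composite recycles it into the ongoing computation. Hence the accumulated acceptance probabilities, and the final scaled states, need not agree. Your appeal to ``the scaled-state formalism accumulates the halting probabilities additively through the composite $\Psi_{h(b)}$'' is precisely the point at issue: your automaton does not compute $\Psi_{h(b)}$, it computes $\Psi_b'$. The paper closes this gap by invoking the deferred measurement principle (enlarging the state space to record intermediate outcomes coherently, so that postponing the measurement to the end of the block is legitimate).

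For word quotients, the paper takes a shorter route than your direct construction: once closure under inverse morphisms is established, the presence of both end-markers lets one absorb the quotient word into an end-marker transition (for left quotient by $\mathbf{w}$, replace the action of $\#$ by the action of $\#\mathbf{w}$ via the inverse-morphism machinery; dually for right quotient using $\$$). This sidesteps entirely your mixed-initial-state discussion and the attendant worry about whether the shifted interval remains nondegenerate.
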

\begin{proof}
The proof goes along the same lines as in \cite[Theorem 4.1]{BP99}, where the same was proved for $\boldsymbol{\mathcal{L}}(\text{MM-QFA})$.
Closure under complement follows from the fact that we can exchange the accepting and rejecting states of the MM-BQFA.
Closure under inverse free monoid morphisms is proved in the same way as in \cite{BP99}, it is implied by the deferred measurement principle \cite[p.186]{NC00}.
Closure under word quotient is implied by closure under inverse free monoid morphisms and the presence of both end-markers.
\qed
\end{proof}
Non-closure under union and intersection is demonstrated in Section \ref{construct_DH_PRA} (Corollary \ref{cor_not_closed}).
\section{Linear Inequalities}\label{Lin_Ineq}
In this section, we  derive a system of linear inequalities that an $\mathcal{R}_1$ language recognized by a MM-BQFA must satisfy.
Let $\mathcal{S}$ be a MM-BQFA over alphabet $A$.
Let
$\{\mathbf{v_0},\mathbf{v_1},...,\mathbf{v_R}\}=\mathcal{F}(A)$. Assume $\mathbf{v_0}=\varepsilon$. For any $\mathbf{u}\in A^*$, let $\Psi(\mathbf{u})=\Psi_{\#\mathbf{u}}$.
Recall $\tau$ is the natural morphism from $A^*$ to $\mathcal{F}(A)$
(see Section \ref{sec_prelim} and subsection \ref{subsec_R1}). First, we prove that there exist words
$\mathbf{u_0},\mathbf{u_1},...,\mathbf{u_R}\in A^*$, for each $i$ $\mathbf{u_i}\tau=\mathbf{v_i}$, such that
the automaton $\mathcal{S}$ has essentially the same scaled density matrices for the words consisting of the same
letters:
\begin{proposition}
\label{nonh-theorem}
For every $\epsilon>0$ there exists an
everywhere defined injective function $\theta$ from $\mathcal{F}(A)$
to $A^*$ such that for all
$\mathbf{v},\mathbf{v^\prime}\in\mathcal{F}(A)$
\begin{conditions}
\item $\mathbf{v}\theta\tau=\mathbf{v}$;
\item $\mathbf{v}\leqslant\mathbf{v^\prime}$ if and only if $\mathbf{v}\theta\leqslant\mathbf{v^\prime}\theta$;
\item\label{item3} if $\mathbf{v}\sim_\omega\mathbf{v^\prime}$, then
$\|\Psi(\mathbf{v}\theta)-\Psi(\mathbf{v^\prime}\theta)\|<\epsilon$.
\end{conditions}
\end{proposition}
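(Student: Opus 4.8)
The plan is to construct $\theta$ recursively along prefix chains in $\mathcal{F}(A)$, padding each newly introduced letter by a long "mixing block," reading off conditions (1) and (2) syntactically and reserving the analytic work for condition (\ref{item3}). First I record two algebraic facts. For each letter $a\in A$ the map $e_a:=\Psi_a^\omega$ is an idempotent CP sub-bistochastic map, hence a projection by the results preceding Theorem \ref{theor_bist_EJ}. For $B\subseteq A$ with an arbitrary ordering $a_1,\dots,a_k$ of its letters, set $E_B:=(e_{a_1}\circ\dots\circ e_{a_k})^\omega$; by Theorem \ref{theor_bist_EJ}(1) this is independent of the ordering, and by Theorem \ref{theor_bist_EJ}(2) $E_B\circ e_a=E_B$ for every $a\in B$. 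Iterating the latter along a defining sequence for $E_B$ yields the absorption identity $E_{B'}\circ E_B=E_{B'}$ whenever $B\subseteq B'$. Consequently, for any chain $B_1\subset B_2\subset\dots\subset B_k$ the composite $E_{B_k}\circ\dots\circ E_{B_1}$ telescopes to $E_{B_k}$; this order-independence is what will force condition (\ref{item3}).

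Next I build $\theta$ by induction on the level (length) of $\mathbf{v}\in\mathcal{F}(A)$, fixing once and for all a strictly increasing sequence of integers $s_t$ with $\Psi_a^{s_t}\to e_a$ simultaneously for every $a\in A$; such a common sequence exists by the simultaneous Diophantine approximation argument already used for a single map in the proof of Theorem \ref{theor_idemp_exists} (all unit-modulus eigenvalues of all the $\Psi_a$ are semisimple, and one drives the finitely many corresponding diagonal phases to $1$ at once, cf.\ \cite[Theorem 201]{HW79}). Put $\varepsilon\theta=\varepsilon$. For $\mathbf{v}=\mathbf{v_0}a$ with $a\notin\mathbf{v_0}\omega$, write $\mathbf{v}=a_1\cdots a_k$ (so $a_k=a$) and set
\[
\mathbf{v}\theta=\mathbf{v_0}\theta\cdot\bigl(a_1^{\,s}a_2^{\,s}\cdots a_k^{\,s}\bigr)^{N},
\]
with $s=s_t$ and $N$ large, to be fixed below. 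Since the appended block uses only letters of $\mathbf{v}$ and introduces exactly the one new letter $a$ after $\mathbf{v_0}\theta$, applying $\tau$ gives $\mathbf{v_0}a=\mathbf{v}$, which is condition (1); injectivity follows immediately. Because the image of any word is obtained from the image of its one-letter-shorter prefix by concatenation, $\mathbf{v}\leqslant\mathbf{v^\prime}$ implies $\mathbf{v}\theta\leqslant\mathbf{v^\prime}\theta$, and the converse direction of condition (2) follows by applying the order-preserving morphism $\tau$ together with condition (1).

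It remains to choose the exponents so that condition (\ref{item3}) holds, and here I claim the stronger statement that $\|\Psi(\mathbf{v}\theta)-E_{\mathbf{v}\omega}\circ\Psi_\#\|$ can be made smaller than $\epsilon/2$ for every $\mathbf{v}$; since the target depends only on $\mathbf{v}\omega$, the triangle inequality then gives (\ref{item3}). For a block over the letters of $B$, the map $(\Psi_{a_k}^{s}\circ\dots\circ\Psi_{a_1}^{s})^{N}$ is within any prescribed $\eta$ of $E_B$ once $s=s_t$ is taken large enough that each $\Psi_{a_j}^{s}$ is close to $e_{a_j}$ and then $N$ large enough that $(e_{a_k}\circ\dots\circ e_{a_1})^N$ is close to $E_B$, the latter being a genuine limit for a product of projections (von Neumann--Halperin, as in Theorem \ref{theor_bist_EJ}). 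Feeding $\Psi(\mathbf{v_0}\theta)\approx E_{\mathbf{v_0}\omega}\circ\Psi_\#$ (inductive hypothesis) into such a block and using $E_{\mathbf{v}\omega}\circ E_{\mathbf{v_0}\omega}=E_{\mathbf{v}\omega}$ shows $\Psi(\mathbf{v}\theta)\approx E_{\mathbf{v}\omega}\circ\Psi_\#$. Since all the maps involved are contractions (Theorem \ref{theor_contract}), composing with the next block inflates the accumulated error only additively, so choosing $\eta$ below $\epsilon/(2|A|)$ at each of the at most $|A|$ levels keeps the total error under $\epsilon/2$. The main obstacle is precisely this simultaneous control: one must secure a single sequence $s_t$ that idempotentizes every $\Psi_a$ at once and interleave it with the outer power $N$ (the two limits of Theorem \ref{theor_bist_EJ}) uniformly across all orderings and levels, while the telescoping identity $E_{B'}\circ E_B=E_{B'}$ guarantees that the different prefix-paths leading to a given letter set all land at the same operator.
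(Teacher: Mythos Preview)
Your approach is essentially the paper's: build $\theta$ recursively along prefix chains by appending to $\mathbf{v_0}\theta$ a long ``mixing block'' over the current letter set, and use Theorem \ref{theor_bist_EJ} to see that all such blocks for a fixed letter set collapse to the same idempotent $E_B$, so that the telescoping $E_{B'}\circ E_B=E_{B'}$ forces condition (\ref{item3}). The paper adds one small device you omit: it sorts the letters of each block alphabetically (via $\mu$), so that the block $\mathbf{v}[i]\xi$ is literally the \emph{same word} for all $\mathbf{v}$ with the same $\mathbf{v}[i]\omega$; this is immaterial for the present proposition but is exactly what is invoked in the proof of Proposition \ref{prop_ineq_prob} (``$\mathbf{v}[i]\xi=\mathbf{v'}[i]\xi$''), so your unsorted blocks would force you to redo that argument via your $E_B$ target instead. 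One genuine slip: your quantifier order ``$s$ large, then $N$ large'' must be reversed. The error in replacing $(e_{a_1}\circ\dots\circ e_{a_k})^N$ by $(\Psi_{a_1}^s\circ\dots\circ\Psi_{a_k}^s)^N$ is of order $kN\max_j\|\Psi_{a_j}^s-e_{a_j}\|$, which blows up if $N$ is chosen after $s$ is fixed; you must first fix $N$ (von Neumann--Halperin) and only then choose $s$ large enough to absorb the factor $kN$. With that swap your argument goes through.
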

\begin{proof}
Let $m_l$ ($l=1,2,...$) be a sequence of positive integers such that for all letters $a\in A$ $\lim\limits_{l\to\infty}\Psi(a^{m_l})=\Psi_a^\omega$ (existence is implied by Theorem \ref{theor_idemp_exists}
and \cite[Theorem 201]{HW79}).

Let $\mu$ be a function that assigns to any
word in $A^*$ the same word (of the same length) with letters sorted
in alphabetical order.
Let $\varkappa_i$, $i\in\mathbb{N}$, a
morphism from $A^*$ to $A^*$ such that for any $a\in A$
$a\varkappa_i=a^i$.

Let $\xi=\xi_l$ be an everywhere defined function from
$\mathcal{F}(A)$ to $A^*$, such that $\varepsilon\xi=\varepsilon$ and
for all $\mathbf{v}\in\mathcal{F}(A)$, if $|\mathbf{v}|=1$ then
$\mathbf{v}\xi=\mathbf{v}^{m_l}$ and otherwise, if $|\mathbf{v}|\geqslant2$ then
$\mathbf{v}\xi=(\mathbf{v}\mu\varkappa_{m_l})^l$.

For any $\mathbf{v}$ in $\mathcal{F}(A)$, where
$\mathbf{v}=a_1...a_k$ ($a_i$ are distinct letters of $A$), define a finite
sequence of prefixes, denoted $\mathbf{v}[i]$, where
$\mathbf{v}[0]=\varepsilon$ and for all $i$, $1\leqslant i\leqslant
k$, $\mathbf{v}[i]=a_1...a_i$.

Let us define the function $\theta=\theta_l$ by induction as
follows. Let $\mathbf{v}[0]\theta=\varepsilon\theta=\varepsilon$ and
for all $i$, $1\leqslant i\leqslant k$, let
$\mathbf{v}[i]\theta=(\mathbf{v}[i-1]\theta)(\mathbf{v}[i]\xi)$.

So
$\mathbf{v}\theta=\mathbf{v}[k]\theta=\mathbf{v}[1]\xi...\mathbf{v}[k]
\xi=a_1^{m_l}((a_1^{m_l}a_2^{m_l})\mu)^l...((a_1^{m_l}a_2^{m_l}\dots
a_k^{m_l})\mu)^l$.
By construction, $(\mathbf{v}\theta)\tau=a_1a_2...a_k=\mathbf{v}$.

Consider $\mathbf{v},\mathbf{v^\prime}\in\mathcal{F}(A)$. Since
$\tau$ preserves order,
$\mathbf{v}\theta\leqslant\mathbf{v^\prime}\theta$ implies
$\mathbf{v}\leqslant\mathbf{v^\prime}$. Suppose
$\mathbf{v}\leqslant\mathbf{v^\prime}$. By construction,
$\mathbf{v}\theta\leqslant\mathbf{v^\prime}\theta$.

Suppose $\mathbf{v}\sim_\omega\mathbf{v^\prime}$. If
$|\mathbf{v}|=|\mathbf{v^\prime}|\leqslant1$, the condition
(\ref{item3}) of the proposition is satisfied. Hence assume
$|\mathbf{v}|=|\mathbf{v^\prime}|\geqslant2$. Theorem \ref{theor_bist_EJ}
implies that
$\lim\limits_{l\to\infty}\Psi(\mathbf{v}\theta)$ $=$
$\lim\limits_{l\to\infty}\Psi(\mathbf{v}\xi)$ and
$\lim\limits_{l\to\infty}\Psi(\mathbf{v^\prime}\theta)$ $=$
$\lim\limits_{l\to\infty}\Psi(\mathbf{v^\prime}\xi)$. Since
$\mathbf{v}$ is a permutation of $\mathbf{v^\prime}$,
$\lim\limits_{l\to\infty}\Psi(\mathbf{v}\xi)$ $=$
$\lim\limits_{l\to\infty}\Psi(\mathbf{v^\prime}\xi)$. Hence
$\lim\limits_{l\to\infty}\Psi(\mathbf{v}\theta)$ $=$
$\lim\limits_{l\to\infty}\Psi(\mathbf{v^\prime}\theta)$.
Take $\epsilon>0$. The last equality implies that there exists $l$
such that
$\|\Psi(\mathbf{v}\theta)-\Psi(\mathbf{v^\prime}\theta)\|<\epsilon$.
Since the monoid $\mathcal{F}(A)$ is finite, there exists $l$ which
satisfies the last inequality for any two words $\mathbf{v}$ and
$\mathbf{v^\prime}$ in $\mathcal{F}(A)$ such that
$\mathbf{v}\sim_\omega\mathbf{v^\prime}$.

Proposition is proved.\qed
\end{proof}
We are ready to derive the linear inequalities that must be satisfied by $\mathcal{S}$, if it recognizes an $\mathcal{R}_1$ language $L$ over
alphabet $A$.

Consider $\mathbf{x}\in A^*$. Suppose
$\mathbf{x}\tau=\mathbf{v}=a_1a_2...a_{|\mathbf{v}|}$.
By construction used in
the proof of Proposition \ref{nonh-theorem},
$\mathbf{v}\theta=(\mathbf{v}[1]\xi)(\mathbf{v}[2]\xi)...(\mathbf{v}\xi)$.
Now $\mathbf{x}\sim_\tau\mathbf{v}\sim_\tau\mathbf{v}\theta$.
Therefore $\mathbf{x}\in L$ if and only if $\mathbf{v}\in L$, and if
and only if $\mathbf{v}\theta\in L$.

Let us observe how $\mathcal{S}$ processes the input word
$\mathbf{v}\theta=(\mathbf{v}[1]\xi)(\mathbf{v}[2]\xi)...(\mathbf{v}\xi)$.
By the definition of MM-BQFA, any input word is enclosed by
end-markers $\#$ and $\$$. Let $r_0$ be the probability that
$\mathcal{S}$ has accepted the input (and halted) after reading the
initial end-marker $\#$. For $1\leqslant i\leqslant|\mathbf{v}|$,
let $r_{\mathbf{v}[i]}$ be the probability that $\mathcal{S}$ is in
a mixed state before reading the first letter of
$\mathbf{v}[i]\xi$ and has accepted the input (and halted) after reading
$\mathbf{v}[i]\xi$, including the possibility of halting while
reading it. Let $g_\mathbf{v}$ be the probability that $\mathcal{S}$
is in a mixed state after reading $\mathbf{v}\xi$ and has accepted the input after reading the final end-marker $\$$. It
follows that $\mathcal{S}$ accepts $\mathbf{v}\theta$ with
probability
$p_{\mathbf{v}\theta}=r_0+r_{\mathbf{v}[1]}+r_{\mathbf{v}[2]}+...+r_\mathbf{v}+g_\mathbf{v}$.
Note that the values $r_{\mathbf{v}[i]}$ and $g_\mathbf{v}$ depend on the chosen
function $\theta$, which itself depends on the parameter $l$.

We aim to prove that for any $\mathcal{R}_1$ language $L$ over alphabet $A$ it is possible to define a linear
system of inequalities $\mathfrak{L}$ such that the system is
consistent if and only if $L$ can be recognized by MM-BQFA. First of
all, it is necessary to define the system itself.

The probabilities $r_{\mathbf{v}[i]}$ can be regarded as symbolic variables
(let's call them s-variables) in the formal expression
$\widehat{p}_{\mathbf{v}\theta}=\widehat{r}_0+\widehat{r}_{\mathbf{v}[1]}+
\widehat{r}_{\mathbf{v}[2]}+...+\widehat{r}_\mathbf{v}+\widehat{g}_\mathbf{v}$.
\begin{definition}
Two s-variables $\widehat{r}_{\mathbf{v}[i]}$ and
$\widehat{r}_{\mathbf{v^\prime}[j]}$, $1\leqslant
i\leqslant|\mathbf{v}|$, $1\leqslant j\leqslant|\mathbf{v^\prime}|$,
are called {\em equivalent},
$\widehat{r}_{\mathbf{v}[i]}\sim\widehat{r}_{\mathbf{v^\prime}[j]}$,
if $\mathbf{v}[i-1]\sim_\omega\mathbf{v^\prime}[j-1]$ and
$\mathbf{v}[i]\sim_\omega\mathbf{v^\prime}[j]$. Two s-variables
$\widehat{g}_{\mathbf{v}}$ and $\widehat{g}_{\mathbf{v^\prime}}$ are
called {\em equivalent},
$\widehat{g}_{\mathbf{v}}\sim\widehat{g}_{\mathbf{v^\prime}}$, if
$\mathbf{v}\sim_\omega\mathbf{v^\prime}$.
\end{definition}
The s-variable $\widehat{r}_0$ is defined to be the only element of the equivalence class $[\widehat{r}_0]$.
The relation $\sim$ is an equivalence
relation in the two sets $\{\widehat{r}_{\mathbf{v}[i]}\ |\
\mathbf{v}\in\mathcal{F}(A)\ \text{and}\ 1\leqslant
i\leqslant|\mathbf{v}|\}$ and $\{\widehat{g}_{\mathbf{v}}\ |\
\mathbf{v}\in\mathcal{F}(A)\}$.

If two s-variables $\widehat{r}_{\mathbf{v}[i]}$ and
$\widehat{r}_{\mathbf{v^\prime}[j]}$ are equivalent then $i=j$.
Moreover, let's formulate the following
\begin{proposition}\label{prop_ineq_prob}
For any $\epsilon>0$ there exists a function $\theta$ from
Proposition \ref{nonh-theorem} such that for any $\mathbf{v},\mathbf{v^\prime}\in\mathcal{F}(A)$ and any prefixes $\mathbf{v}[i]$, $\mathbf{v^\prime}[i]$
\begin{conditions}
\item if
$\widehat{r}_{\mathbf{v}[i]}\sim\widehat{r}_{\mathbf{v^\prime}[i]}$
then $|r_{\mathbf{v}[i]}-r_{\mathbf{v^\prime}[i]}|<\epsilon$;
\item if $\widehat{g}_{\mathbf{v}}\sim\widehat{g}_{\mathbf{v^\prime}}$
then $|g_{\mathbf{v}}-g_{\mathbf{v^\prime}}|<\epsilon$.
\end{conditions}
\end{proposition}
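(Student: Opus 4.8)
The plan is to exploit the fact that each of the probabilities $r_{\mathbf{v}[i]}$ and $g_\mathbf{v}$ is a \emph{bounded linear functional} of a scaled mixed state of $\mathcal{S}$, while Proposition \ref{nonh-theorem} already controls how close these scaled mixed states are for $\omega$-equivalent words. The equivalence $\sim$ is tailored precisely to supply the hypotheses of Proposition \ref{nonh-theorem}, so the whole argument is a translation of a statement about states into a statement about probabilities.

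First I would record a crucial fact about the blocks. Suppose $\widehat{r}_{\mathbf{v}[i]}\sim\widehat{r}_{\mathbf{v^\prime}[i]}$, i.e.\ $\mathbf{v}[i-1]\sim_\omega\mathbf{v^\prime}[i-1]$ and $\mathbf{v}[i]\sim_\omega\mathbf{v^\prime}[i]$. Since $\xi=\xi_l$ sorts letters by $\mu$ and then applies $\varkappa_{m_l}$, the word $\mathbf{v}[i]\xi$ depends only on the \emph{set} of letters $\mathbf{v}[i]\omega$; hence $\mathbf{v}[i]\xi=\mathbf{v^\prime}[i]\xi$, i.e.\ the blocks read at step $i$ are literally the same word $\mathbf{b}$. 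The scaled state just before this block is $\Psi(\mathbf{v}[i-1]\theta)(|q_0\rangle\langle q_0|)$, and because $\mathbf{v}[i-1]\theta$ is exactly $\theta$ applied to the element $\mathbf{v}[i-1]\in\mathcal{F}(A)$, Proposition \ref{nonh-theorem}(\ref{item3}) applied to $\mathbf{v}[i-1]\sim_\omega\mathbf{v^\prime}[i-1]$ makes $\|\Psi(\mathbf{v}[i-1]\theta)-\Psi(\mathbf{v^\prime}[i-1]\theta)\|$ arbitrarily small for $\theta=\theta_l$ with $l$ large. Writing $\delta$ for the difference of the two pre-block scaled states and using $\||q_0\rangle\langle q_0|\|=1$, I get $\|\delta\|<\epsilon^\prime$.

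Next I would express $r_{\mathbf{v}[i]}$ as $\operatorname{Acc}_\mathbf{b}(\rho)$, the probability of halting-and-accepting while reading the block $\mathbf{b}=b_1\dots b_t$ from scaled state $\rho$, namely $\operatorname{Acc}_\mathbf{b}(\rho)=\sum_{j=1}^{t}\operatorname{Tr}\big(P_{acc}\Phi_{b_j}(\Psi_{b_{j-1}}\circ\dots\circ\Psi_{b_1}(\rho))P_{acc}\big)$, which is linear in $\rho$. As both steps read the \emph{same} block $\mathbf{b}$, we have $r_{\mathbf{v}[i]}-r_{\mathbf{v^\prime}[i]}=\operatorname{Acc}_\mathbf{b}(\delta)$, so it remains to bound this functional \emph{uniformly in $\mathbf{b}$}. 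I would decompose the Hermitian matrix $\delta=\delta_+-\delta_-$ into positive parts; for positive $\sigma$, $\operatorname{Acc}_\mathbf{b}(\sigma)$ is a genuine acceptance probability, and since each $\Phi_{b_j}$ is trace-preserving and $P_{acc}+P_{rej}+P_{non}=I$, the mass accepted inside the block cannot exceed the entering mass, whence $0\leq\operatorname{Acc}_\mathbf{b}(\sigma)\leq\operatorname{Tr}(\sigma)$. Therefore $|\operatorname{Acc}_\mathbf{b}(\delta)|\leq\operatorname{Tr}(\delta_+)+\operatorname{Tr}(\delta_-)=\|\delta\|_1\leq\sqrt{|Q|}\,\|\delta\|$, where $\|\cdot\|_1$ denotes the trace norm. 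Choosing $\epsilon^\prime=\epsilon/\sqrt{|Q|}$ in Proposition \ref{nonh-theorem} then yields $|r_{\mathbf{v}[i]}-r_{\mathbf{v^\prime}[i]}|<\epsilon$.

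The second statement is analogous and simpler: $g_\mathbf{v}=\operatorname{Tr}(P_{acc}\Phi_\$(\rho)P_{acc})$ with $\rho=\Psi(\mathbf{v}\theta)(|q_0\rangle\langle q_0|)$ is a linear functional of $\rho$; when $\mathbf{v}\sim_\omega\mathbf{v^\prime}$, Proposition \ref{nonh-theorem}(\ref{item3}) again brings the two states within $\epsilon^\prime$, and the same positive-part decomposition together with trace-preservation of the bistochastic map $\Phi_\$$ gives functional norm at most $\sqrt{|Q|}$, so that $|g_\mathbf{v}-g_{\mathbf{v^\prime}}|<\epsilon$. I expect the main obstacle to be exactly the uniform functional bound in the first part: the block length $t=|\mathbf{b}|$ grows with $l$, so a naive term-by-term estimate would blow up; the essential point is to avoid summing over the block by the mass-conservation argument, which bounds the \emph{entire} acceptance functional by the entering trace independently of $t$, and to use that $\xi$ forces the two compared blocks to coincide so that $r_{\mathbf{v}[i]}-r_{\mathbf{v^\prime}[i]}$ is a single linear functional evaluated at $\delta$.
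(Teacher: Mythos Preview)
Your proposal is correct and follows essentially the same route as the paper: both arguments observe that $\mathbf{v}[i]\xi=\mathbf{v^\prime}[i]\xi$ whenever the s-variables are equivalent, invoke Proposition~\ref{nonh-theorem} to make the pre-block scaled states close, and then conclude that the acceptance probabilities produced by reading the identical block are close. The paper leaves the last step informal (``essentially the same scaled mixed state\dots depends only on the current mixed state''), whereas you supply the missing quantitative justification via the Jordan decomposition $\delta=\delta_+-\delta_-$ and the mass-conservation bound $\operatorname{Acc}_\mathbf{b}(\sigma)\leq\operatorname{Tr}(\sigma)$, yielding a clean uniform estimate $|r_{\mathbf{v}[i]}-r_{\mathbf{v^\prime}[i]}|\leq\sqrt{|Q|}\,\|\delta\|$ independent of the block length; this is a genuine improvement in rigor over the paper's sketch.
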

\begin{proof}
Suppose $\mathbf{v}[i-1]\sim_\omega\mathbf{v^\prime}[i-1]$ and
$\mathbf{v}[i]\sim_\omega\mathbf{v^\prime}[i]$. In that case, $\mathbf{v}[i]\xi=\mathbf{v^\prime}[i]\xi$.
Proposition \ref{nonh-theorem} implies that for any $\epsilon^\prime$
$\|\Psi(\mathbf{v}[i-1]\theta)-\Psi(\mathbf{v^\prime}[i-1]\theta)\|<\epsilon^\prime$.
Hence after reading $\mathbf{v}[i-1]\theta$ or
$\mathbf{v^\prime}[i-1]\theta$ the automaton $\mathcal{S}$ comes to
essentially the same scaled mixed state. Within a particular step, the probability of accepting the input (and halting)
in the future depends only on the current mixed state and the
remaining part of the input word. So reading afterwards the word
$\mathbf{v}[i]\xi$, which is equal to $\mathbf{v^\prime}[i]\xi$,
implies that for any $\epsilon$
$|r_{\mathbf{v}[i]}-r_{\mathbf{v^\prime}[i]}|<\epsilon$.

Suppose $|\mathbf{v}|=|\mathbf{v^\prime}|$ and
$\mathbf{v}\sim_\omega\mathbf{v^\prime}$. Again, after reading the
both words $\mathbf{v}\theta$ and $\mathbf{v^\prime}\theta$ the
automaton $\mathcal{S}$ is in essentially the same scaled mixed state. So reading the final end-marker yields that for any $\epsilon$
$|g_\mathbf{v}-g_\mathbf{v^\prime}|<\epsilon$.\qed
\end{proof}
Recall that $\mathcal{F}(A)$ can be viewed as an automaton that
recognizes an $\mathcal{R}_1$ language $L$,
provided $L\tau$ is its set of final states. By Proposition
\ref{prop_ineq_prob}, all s-variables in the same equivalence class
may be replaced by a single variable. Now define a linear system of
inequalities $\mathfrak{L}$ as follows:
\begin{definition}
\label{def_inequalities}
The construction of the linear system of inequalities $\mathfrak{L}=\mathfrak{L}(L)$
for a given $\mathcal{R}_1$ language $L$.
\begin{conditions}
\item Take the formal expressions
$\widehat{p}_{\mathbf{v}\theta}=\widehat{r}_0+\widehat{r}_{\mathbf{v}[1]}+
\widehat{r}_{\mathbf{v}[2]}+...+\widehat{r}_\mathbf{v}+\widehat{g}_\mathbf{v}$
for all $\mathbf{v}\in\mathcal{F}(A)$;
\item Obtain linear expressions $\mathfrak{L}(\mathbf{v})$ from
$\{\widehat{p}_{\mathbf{v}\theta}\ |\ \mathbf{v}\in\mathcal{F}(A)\}$
in the following way; all s-variables in the same
equivalence class $[\widehat{r}]$ are replaced by a single variable denoted $\mathfrak{L}(\widehat{r})$, while any two
s-variables in different equivalence classes are replaced by different
variables;
\item Introduce yet another variables $p_1$ and $p_2$. For any $\mathbf{v}\in\mathcal{F}(A)$, if $\mathbf{v}\in
L\tau$, construct an inequality $\mathfrak{L}(\mathbf{v})\geqslant p_2$,
otherwise construct an inequality $\mathfrak{L}(\mathbf{v})\leqslant p_1$;
\item Append the system by an inequality $p_1<p_2$.
\end{conditions}
\end{definition}
If a MM-BQFA $\mathcal{S}$ recognizes an $\mathcal{R}_1$ language $L$, then the linear system of inequalities
$\mathfrak{L}$ is consistent. Thus we have established the following
result.
\begin{theorem}
\label{theor_sys_not_consist}
Suppose $L$ is an $\mathcal{R}_1$ language.
If the linear system $\mathfrak{L}$ is not consistent, then $L$
cannot be recognized by any MM-BQFA.
\end{theorem}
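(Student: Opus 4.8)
The plan is to prove the contrapositive: if some MM-BQFA $\mathcal{S}$ recognizes $L$, then the system $\mathfrak{L}$ has a feasible point and is therefore consistent. So I would start by assuming $\mathcal{S}$ recognizes $L$ with bounded error and interval $(p_1,p_2)$, where $p_1=\sup\overline{P_L}$, $p_2=\inf P_L$ and $p_1<p_2$; thus every $\mathbf{x}\in L$ is accepted with probability at least $\inf P_L$ and every $\mathbf{x}\notin L$ with probability at most $\sup\overline{P_L}$. Write $\delta=\inf P_L-\sup\overline{P_L}>0$ for the separating gap. This gap is the only quantitative consequence of recognition the proof uses; everything else is a matter of controlling approximation error.

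Next I would fix the error budget. Since words of $\mathcal{F}(A)$ have no repeated letters, $|\mathbf{v}|\leqslant|A|$, so each formal expression $\widehat{p}_{\mathbf{v}\theta}=\widehat{r}_0+\widehat{r}_{\mathbf{v}[1]}+\dots+\widehat{r}_{\mathbf{v}}+\widehat{g}_{\mathbf{v}}$ involves at most $|A|+1$ s-variables outside the singleton class $[\widehat{r}_0]$ (the $\widehat{r}_{\mathbf{v}[i]}$ for $1\leqslant i\leqslant|\mathbf{v}|$ together with $\widehat{g}_{\mathbf{v}}$). I would choose $\epsilon$ with $2(|A|+1)\epsilon<\delta$, set $E=(|A|+1)\epsilon<\delta/2$, and invoke Proposition \ref{prop_ineq_prob} to obtain a single function $\theta$ for which any two equivalent s-variables take values within $\epsilon$ of one another. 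The fact that one $\theta$ (one parameter $l$) works simultaneously for all pairs is guaranteed by the finiteness of $\mathcal{F}(A)$, exactly as in Propositions \ref{nonh-theorem} and \ref{prop_ineq_prob}.

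I would then build the feasible point. For each equivalence class of s-variables choose a representative and assign to the corresponding system variable $\mathfrak{L}(\widehat{r})$ (respectively $\mathfrak{L}(\widehat{g})$) the genuine acceptance probability of that representative in $\mathcal{S}$, and assign $\mathfrak{L}(\widehat{r}_0)$ its exact value $r_0$. By Proposition \ref{prop_ineq_prob} each term of $\mathfrak{L}(\mathbf{v})$ differs from the corresponding term of the true probability $p_{\mathbf{v}\theta}=r_0+r_{\mathbf{v}[1]}+\dots+r_{\mathbf{v}}+g_{\mathbf{v}}$ by less than $\epsilon$, so $|\mathfrak{L}(\mathbf{v})-p_{\mathbf{v}\theta}|\leqslant E$ for every $\mathbf{v}\in\mathcal{F}(A)$. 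Because $(\mathbf{v}\theta)\tau=\mathbf{v}$, we have $\mathbf{v}\theta\in L$ precisely when $\mathbf{v}\in L\tau$; hence for $\mathbf{v}\in L\tau$, $\mathfrak{L}(\mathbf{v})\geqslant p_{\mathbf{v}\theta}-E\geqslant\inf P_L-E$, while for $\mathbf{v}\notin L\tau$, $\mathfrak{L}(\mathbf{v})\leqslant p_{\mathbf{v}\theta}+E\leqslant\sup\overline{P_L}+E$. Finally I would set the threshold variables to $p_2:=\inf P_L-E$ and $p_1:=\sup\overline{P_L}+E$; then $p_2-p_1=\delta-2E>0$, so $p_1<p_2$, and all inequalities of Definition \ref{def_inequalities} are satisfied, giving consistency of $\mathfrak{L}$ and hence the theorem by contraposition.

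The main obstacle is not any individual inequality but the uniform control of the accumulated rounding error: a single $\epsilon$, and therefore a single $\theta$, must force every expression $\mathfrak{L}(\mathbf{v})$ to stay within $E<\delta/2$ of the true probability $p_{\mathbf{v}\theta}$ at once. This uniformity is supplied by two finiteness facts working together: the finiteness of $\mathcal{F}(A)$ yields a common $\theta$ (Proposition \ref{prop_ineq_prob}), and the bound $|\mathbf{v}|\leqslant|A|$ caps the number of summands so that the per-term error $\epsilon$ aggregates to at most $(|A|+1)\epsilon$. The bounded-error gap $\delta$ then leaves exactly the room needed to shift both thresholds inward while preserving $p_1<p_2$.
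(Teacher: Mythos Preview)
Your proof is correct and follows essentially the same approach as the paper: both argue the contrapositive, invoke Proposition~\ref{prop_ineq_prob} to obtain a single $\theta$ for which equivalent s-variables take nearly equal values, and then observe that the resulting numbers yield a feasible point for $\mathfrak{L}$. Your version is in fact more explicit than the paper's, which simply states the conclusion after setting up the s-variables and Proposition~\ref{prop_ineq_prob}; you spell out the error budget $(|A|+1)\epsilon$ and the inward shift of the thresholds that the paper leaves to the reader.
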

Therefore, if the linear system $\mathfrak{L}$ is not consistent, then $L$
cannot be recognized by any MM-QFA, DH-PRA or EQFA as well.
\begin{corollary}
\label{cor_subset}
$\boldsymbol{\mathcal{L}}(\text{MM-BQFA})$ $\subsetneq$ $\boldsymbol{\mathcal{ER}}$.
\end{corollary}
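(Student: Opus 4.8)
The plan is to prove the strict inclusion by combining the upper bound already in hand with a single explicit witness. Theorem \ref{theor_MM_BQFA_subset_ER} gives $\boldsymbol{\mathcal{L}}(\text{MM-BQFA})\subseteq\boldsymbol{\mathcal{ER}}$, so it remains to produce a language lying in $\boldsymbol{\mathcal{ER}}$ but recognized by no MM-BQFA. Since $\mathbf{R_1}\subset\mathbf{ER}$, every $\mathcal{R}_1$ language lies in $\boldsymbol{\mathcal{ER}}$; hence by Theorem \ref{theor_sys_not_consist} it suffices to exhibit one $\mathcal{R}_1$ language whose linear system $\mathfrak{L}$ is inconsistent.

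First I would work over the three-letter alphabet $A=\{a,b,c\}$ and single out the four band elements $ab,\ ba,\ abc,\ bac\in\mathcal{F}(A)$. Writing their formal acceptance expressions as sums of s-variables along their paths in the free left regular band (Definition \ref{def_inequalities}) and applying the equivalence relation $\sim$, I would check the key identity $\widehat{p}_{ab\theta}+\widehat{p}_{bac\,\theta}=\widehat{p}_{ba\,\theta}+\widehat{p}_{abc\,\theta}$. This should hold because the two words reaching the node $\{a,b\}$ share the same $\widehat{g}$-variable, the two words reaching $\{a,b,c\}$ share both the top $\widehat{g}$-variable and the final edge-variable of the class $(\{a,b\},\{a,b,c\})$, while the remaining edge-variables of classes $(\emptyset,\{a\}),(\{a\},\{a,b\}),(\emptyset,\{b\}),(\{b\},\{a,b\})$ each occur exactly once on each side, so the two sides cancel term by term.

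Then I would let $L$ be any $\mathcal{R}_1$ language with $\{ab,bac\}\subseteq L\tau$ and $\{ba,abc\}\cap L\tau=\emptyset$ — for instance the one with $L\tau=\{ab,bac\}$, which by Theorem \ref{R1lang} equals $a a^*b\{a,b\}^*\cup b b^*a\{a,b\}^*c\{a,b,c\}^*$. In $\mathfrak{L}(L)$ the conditions of Definition \ref{def_inequalities} force $\mathfrak{L}(ab),\mathfrak{L}(bac)\geqslant p_2$ and $\mathfrak{L}(ba),\mathfrak{L}(abc)\leqslant p_1$; adding the two accepting forms, adding the two rejecting forms, and invoking the identity yields $2p_2\leqslant\mathfrak{L}(ab)+\mathfrak{L}(bac)=\mathfrak{L}(ba)+\mathfrak{L}(abc)\leqslant 2p_1$, contradicting $p_1<p_2$. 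Thus $\mathfrak{L}(L)$ is inconsistent, so by Theorem \ref{theor_sys_not_consist} $L\notin\boldsymbol{\mathcal{L}}(\text{MM-BQFA})$, while $\mathcal{M}(L)$ divides $\mathcal{F}(A)\in\mathbf{R_1}$ gives $L\in\boldsymbol{\mathcal{R}_1}\subseteq\boldsymbol{\mathcal{ER}}$, whence the inclusion is strict.

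The main obstacle is locating four words whose formal expressions pair into such an identity: this is exactly a Farkas-type (linear-infeasibility) certificate for the system, and the delicate point is the bookkeeping of the $\sim$-classes. One must verify carefully that the shared penultimate node $\{a,b\}$ makes the final edges of $abc$ and $bac$ fall in the same equivalence class (so their variables coincide after the substitution in Definition \ref{def_inequalities}), and that the node contributions of the two pairs match, so that the identity is an equality of linear forms and not merely an approximate one. Everything else — the membership of $L$ in $\boldsymbol{\mathcal{R}_1}$ and hence $\boldsymbol{\mathcal{ER}}$, and the two appeals to Theorems \ref{theor_MM_BQFA_subset_ER} and \ref{theor_sys_not_consist} — I expect to be routine.
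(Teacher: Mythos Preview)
Your proposal is correct and is essentially identical to the paper's own proof: both use the $\mathcal{R}_1$ language over $\{a,b,c\}$ with $L\tau=\{ab,bac\}$, write out the four expressions $\mathfrak{L}(ab),\mathfrak{L}(bac),\mathfrak{L}(ba),\mathfrak{L}(abc)$, and observe that $\mathfrak{L}(ab)+\mathfrak{L}(bac)=\mathfrak{L}(ba)+\mathfrak{L}(abc)$ forces $2p_2\leqslant 2p_1$, contradicting $p_1<p_2$. The only cosmetic difference is that the paper simply lists the variables explicitly rather than phrasing the cancellation in terms of edge- and node-classes, but the argument is the same.
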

\begin{proof}
Consider an $\mathcal{R}_1$ language $L=\{\mathbf{ab,bac}\}$ over alphabet $A=\{a,b,c\}$.
Among others, the system $\mathfrak{L}$ has the following inequalities:
$$
\begin{array}{lclcrcl}
\mathfrak{L}(\mathbf{ab})&=&x_0+x_a+x_{ab}+y_{ab}&&&\geqslant&p_2\\
\mathfrak{L}(\mathbf{bac})&=&x_0+x_b+x_{ba}+x_{abc}&+&y_{abc}&\geqslant&p_2\\
\mathfrak{L}(\mathbf{ba})&=&x_0+x_b+x_{ba}+y_{ab}&&&\leqslant&p_1\\
\mathfrak{L}(\mathbf{abc})&=&x_0+x_a+x_{ab}+x_{abc}&+&y_{abc}&\leqslant&p_1\\
&&&&p_1&<&p_2
\end{array}
$$
The above inequalities define a system that is not consistent. Hence $\mathfrak{L}$ is not consistent as well. So by Theorem \ref{theor_sys_not_consist} $L$ cannot be recognized by any MM-BQFA.
Therefore Theorem \ref{theor_MM_BQFA_subset_ER} implies that $\boldsymbol{\mathcal{L}}(\text{MM-BQFA})$ $\subsetneq$ $\boldsymbol{\mathcal{ER}}$.
\qed
\end{proof}
To prove the statement converse to Theorem \ref{theor_sys_not_consist}, we need to indicate some of the
properties of the obtained system $\mathfrak{L}$. The converse
statement itself will be proved in Section \ref{construct_DH_PRA} (Theorem \ref{theor_sys_consist}).

Consider the inequalities in the system. Let $y_A$ be the unique variable $\mathfrak{L}(\widehat{g}_\mathbf{w})$, such that $\mathbf{w}\omega=A$.
Except for the inequality $p_1<p_2$, the left-hand side of any inequality has the
form $\mathfrak{L}(\mathbf{v})=x_0+x_{\mathbf{v}[1]}+
x_{\mathbf{v}[2]}+...+x_\mathbf{v}+y_\mathbf{v}$, where
$\mathbf{v}\in\mathcal{F}(A)$, $x_0=\mathfrak{L}(\widehat{r}_0)$,
$x_{\mathbf{v}[j]}=\mathfrak{L}(\widehat{r}_{\mathbf{v}[j]})$, $1\leqslant j\leqslant|\mathbf{v}|$, and
$y_\mathbf{v}=\mathfrak{L}(\widehat{g}_\mathbf{v})$.

The only possible coefficients of variables in any linear
inequality are $-1$, $0$ and $1$. Denote by
$Z=\{x_0,z_1,...,z_s,y_1,...,y_t,p_1,p_2\}$ the set of all the
variables in the system $\mathfrak{L}$, where $z_i$ are variables of
the form $x_{\mathbf{v}[j]}$, and $y_i$ are variables of the form
$y_\mathbf{v}$. Denote by $N$ the total number of variables.

Let $M=|A|+2$, which is the maximal number of variables (with
nonzero coefficients) in any expression $\mathfrak{L}(\mathbf{v})$.
Each expression $\mathfrak{L}(\mathbf{v})$ has exactly one variable
$y_i$. If two expressions $\mathfrak{L}(\mathbf{v})$ and
$\mathfrak{L}(\mathbf{v^\prime})$ share the same variable $y_i$,
then $\mathbf{v}\sim_\omega\mathbf{v^\prime}$, so
$\mathfrak{L}(\mathbf{v})$ and $\mathfrak{L}(\mathbf{v^\prime})$
have the same number of variables. So it is possible to denote by
$n(y_i)$ the number of variables in any corresponding expression
$\mathfrak{L}(\mathbf{v})$. Let $d(y_i)=M-n(y_i)+1$.
\begin{proposition}\label{lemma_positive_solutions}
The system $\mathfrak{L}$ is consistent if and only if it has a
solution where all the variables are assigned nonnegative real
values.
\end{proposition}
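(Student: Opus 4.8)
The plan is as follows. The ``if'' direction is immediate: a nonnegative solution is in particular a solution, so the system is consistent. The substance is the converse, and here I would exploit that every inequality of $\mathfrak{L}$ other than $p_1<p_2$ is \emph{homogeneous}, comparing two linear forms with no constant term. Consequently the solution set is invariant under several explicit transformations, and the idea is to start from an arbitrary solution and push it into the nonnegative orthant by composing three such transformations, each of which either leaves every value $\mathfrak{L}(\mathbf{v})$ and the pair $p_1,p_2$ unchanged or shifts $\mathfrak{L}(\mathbf{v}),p_1,p_2$ by a common constant; in either case solutions are mapped to solutions and the strict inequality $p_1<p_2$ is preserved.

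The structural observation driving the construction is that $\mathfrak{L}(\mathbf{v})=x_0+x_{\mathbf{v}[1]}+\dots+x_{\mathbf{v}}+y_{\mathbf{v}}$ reads as a ``path sum'' in the Boolean lattice on $A$. By the definition of $\sim$, each variable $x_{\mathbf{v}[j]}$ is determined by the covering pair $(\mathbf{v}[j-1]\omega,\mathbf{v}[j]\omega)$, so it is an \emph{edge} variable of the Hasse diagram, while the unique $y$-variable $y_{\mathbf{v}}$ is determined by $\mathbf{v}\omega$, so it is a \emph{node} variable. Thus $\mathfrak{L}(\mathbf{v})$ equals $x_0$, plus the sum of the edge variables along the maximal chain $\emptyset\subset\mathbf{v}[1]\omega\subset\dots\subset\mathbf{v}\omega$ determined by $\mathbf{v}$, plus the node variable $y_{\mathbf{v}\omega}$.

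With this picture I would use three moves, each verified directly from the path-sum description. (1) Translation: $x_0\mapsto x_0+c$, $p_1\mapsto p_1+c$, $p_2\mapsto p_2+c$; since $x_0$ occurs in every $\mathfrak{L}(\mathbf{v})$, this shifts every inequality by $c$ on both sides. (2) A trade between $x_0$ and the nodes: $x_0\mapsto x_0-\delta$ together with $y_T\mapsto y_T+\delta$ for every subset $T$; because each $\mathfrak{L}(\mathbf{v})$ contains $x_0$ once and exactly one node variable, every $\mathfrak{L}(\mathbf{v})$ is unchanged. (3) A potential move on edges and nodes: fixing any function $\phi$ on subsets with $\phi_\emptyset=0$, set $x_e\mapsto x_e+\phi_{S'}-\phi_S$ for each edge $e=(S,S')$ and $y_T\mapsto y_T-\phi_T$; the edge changes telescope along any chain to $T$ and cancel against the change in $y_T$, so again every $\mathfrak{L}(\mathbf{v})$ is preserved.

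Finally I would choose the parameters. Take $\phi_T:=-\min\{\,\text{(sum of the edge variables along the chain of }\mathbf{v})\ :\ \mathbf{v}\in\mathcal{F}(A),\ \mathbf{v}\omega=T\,\}$, with $\phi_\emptyset=0$. For a covering pair $(S,S')$ with edge $e$, prefixing a minimizing chain for $S$ by $e$ yields a chain to $S'$, which gives $\min_{S'}\le\min_{S}+x_e$ and hence the new edge value $x_e+\phi_{S'}-\phi_S\ge0$. Next pick $\delta$ large enough that $\delta+(\text{edge-sum of }\mathbf{v})+y_T\ge0$ for every $\mathbf{v}$, so that after move (2) each new node value $y_T+\delta-\phi_T$ is nonnegative. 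Lastly pick $c$ large enough that $x_0-\delta+c\ge0$ and $p_1+c\ge0$, whence $p_2+c>p_1+c\ge0$. The resulting assignment is a solution with all variables nonnegative. The one genuinely delicate point, which I expect to be the main obstacle, is making the edge variables and the node variables nonnegative \emph{simultaneously}: the potential $\phi$ clears the edges for free but would in general force some node below zero, and it is exactly the node-raising trade (2) that removes this obstruction (in graph-theoretic terms, raising the nodes kills the negative cycles that would otherwise block the existence of a feasible potential).
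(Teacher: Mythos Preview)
Your proof is correct, but it takes a genuinely different route from the paper's. The paper uses a single uniform translation: noting that every expression $\mathfrak{L}(\mathbf{v})$ has at most $M=|A|+2$ terms (one $x_0$, some edge variables $x_{\mathbf{v}[j]}$, and one $y_\mathbf{v}$), it shifts each $x_0$ and each edge variable by $C$, shifts each node variable $y_i$ by $Cd(y_i)$ where $d(y_i)=M-n(y_i)+1$ absorbs the ``missing'' terms in shorter expressions, and shifts $p_1,p_2$ by $CM$. Every inequality is then preserved exactly, and taking $C=-\min\{0,c_0,\dots,c_{N-1}\}$ makes all values nonnegative in one stroke (since $d(y_i)\geqslant1$).

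Your approach instead reads each $\mathfrak{L}(\mathbf{v})$ as a path sum in the Hasse diagram of $\mathcal{P}(A)$ and uses three commuting moves: a global translation on $x_0,p_1,p_2$; a trade between $x_0$ and the node variables; and a telescoping potential $\phi$ on edges that you choose as the negative of the minimum chain sum to each vertex. This is more structural: the potential makes all edge variables nonnegative for free, the node-raising trade is exactly what kills the residual negativity on the nodes, and the final translation cleans up $x_0,p_1,p_2$. What the paper's argument buys is economy --- one parameter, one line of arithmetic --- by exploiting that the $y$-variable can simply soak up the defect in the number of terms. What your argument buys is a more transparent explanation of \emph{why} a nonnegative solution exists (shortest-path potentials) and a technique that would generalize to settings where a single constant shift is not available.
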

\begin{proof}
Let $c_0,...,c_{N-1}$ be some real numbers. Let $C$ be any real
constant. Any inequality in the system can be written in one of the
three forms, namely,
\begin{eqnarray}
x_0+x_{\mathbf{v}[1]}+x_{\mathbf{v}[2]}+...+x_\mathbf{v}+y_\mathbf{v}&\geqslant&p_2,\\
x_0+x_{\mathbf{v}[1]}+x_{\mathbf{v}[2]}+...+x_\mathbf{v}+y_\mathbf{v}&\leqslant&p_1,\\
p_1&<&p_2.
\end{eqnarray}
The inequalities above are satisfied if and only the following
inequalities are satisfied;
\begin{eqnarray}
x_0+x_{\mathbf{v}[1]}+x_{\mathbf{v}[2]}+...+x_\mathbf{v}+y_\mathbf{v}+CM&\geqslant&p_2+CM,\\
x_0+x_{\mathbf{v}[1]}+x_{\mathbf{v}[2]}+...+x_\mathbf{v}+y_\mathbf{v}+CM&\leqslant&p_1+CM,\\
p_1+CM&<&p_2+CM.
\end{eqnarray}
Note that
\begin{eqnarray*}
x_0+x_{\mathbf{v}[1]}+x_{\mathbf{v}[2]}+...+x_\mathbf{v}+y_\mathbf{v}+CM=
(x_0+C)+(x_{\mathbf{v}[1]}+C)+\\+(x_{\mathbf{v}[2]}+C)+...
+(x_\mathbf{v}+C)+(y_\mathbf{v}+Cd(y_\mathbf{v}))
\end{eqnarray*}
Therefore the system $\mathfrak{L}$ has a solution
\begin{equation}\label{eq_sol}
\begin{array}{rll}
\{x_0&=&c_0,\\
z_1&=&c_1,\dots,\ z_s=c_s,\\
y_1&=&c_{s+1},\dots,\ y_t=c_{N-3},\\
p_1&=&c_{N-2},\ p_2=c_{N-1}\}
\end{array}
\end{equation}
if and only if it has a solution
\begin{equation}\label{eq_sol1}
\begin{array}{rll}
\{x_0&=&c_0+C,\\
z_1&=&c_1+C,\dots,\
z_s=c_s+C,\\
y_1&=&c_{s+1}+Cd(y_1),\dots,\ y_t=c_{N-3}+Cd(y_t),\\
p_1&=&c_{N-2}+CM,\ p_2=c_{N-1}+CM\}.
\end{array}
\end{equation}

Suppose the system $\mathfrak{L}$ is consistent and has a solution
(\ref{eq_sol}). Let $c_{\min}=\min\{0,c_0,...,c_{N-1}\}$. Take
$C=-c_{\min}$. By construction, $C$ is a nonnegative real number,
such that for all $i$, $c_i+C$ is also nonnegative. Now
(\ref{eq_sol1}) is the solution of the system $\mathfrak{L}$ such
that all the variables are assigned nonnegative values.\qed
\end{proof}
\begin{proposition}
\label{prop_sol_0_y_v}
The system $\mathfrak{L}$ is consistent if and only if it has a
solution where all the variables are assigned nonnegative real values and $x_0=0$, $y_A=0$.
\end{proposition}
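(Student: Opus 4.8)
The plan is to start from an arbitrary nonnegative solution of $\mathfrak{L}$, whose existence is guaranteed by Proposition \ref{lemma_positive_solutions}, and then rewrite it in two stages so as to force $x_0=0$ and afterwards $y_A=0$, each stage preserving nonnegativity and every inequality of the system. One direction of the equivalence is immediate: a nonnegative solution with $x_0=0$ and $y_A=0$ is in particular a solution, so it witnesses consistency. Only the converse needs work.

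First I would eliminate $x_0$. The relevant structural fact, noted just before the statement, is that each expression $\mathfrak{L}(\mathbf{v})=x_0+x_{\mathbf{v}[1]}+\dots+x_\mathbf{v}+y_\mathbf{v}$ contains $x_0$ with coefficient $1$ and contains \emph{exactly one} $y$-variable. Hence, replacing $x_0$ by $0$ while simultaneously replacing every $y$-variable $y_\mathbf{v}$ by $y_\mathbf{v}+x_0$ leaves each $\mathfrak{L}(\mathbf{v})$ numerically unchanged; consequently all the inequalities $\mathfrak{L}(\mathbf{v})\geqslant p_2$, $\mathfrak{L}(\mathbf{v})\leqslant p_1$ and $p_1<p_2$ still hold. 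Since $y_\mathbf{v}+x_0\geqslant0$, the new assignment remains nonnegative, and now $x_0=0$.

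The harder step is forcing $y_A=0$, which I expect to be the main obstacle, because naively lowering $y_A$ would break the inequalities $\mathfrak{L}(\mathbf{v})\geqslant p_2$ with $\mathbf{v}\omega=A$. The resolution is to redistribute the value of $y_A$ into the \emph{last} variables $x_\mathbf{v}$ of the full-alphabet expressions, i.e.\ those $x_{\mathbf{v}[j]}$ with $\mathbf{v}[j]\omega=A$; these correspond to the covering edges $B\to A$ in $\mathcal{P}(A)$. The key observation is that such a variable occurs in $\mathfrak{L}(\mathbf{v'})$ only when $\mathbf{v'}\omega=A$: a prefix satisfying $\mathbf{v'}[j']\omega=A$ forces $\mathbf{v'}\omega=A$, since $\mathbf{v'}[j']\omega\subseteq\mathbf{v'}\omega\subseteq A$. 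Writing $Y\geqslant0$ for the current value of $y_A$, I would set $y_A:=0$ and add $Y$ to every such top variable, leaving all other variables and $p_1,p_2$ fixed. Then every full-alphabet expression has its unique top summand raised by $Y$ and its $y_A$ lowered by $Y$, so its value is unchanged, while every other expression contains neither $y_A$ nor any top variable and is untouched. Nonnegativity is preserved because we only increase variables, and the condition $x_0=0$ is unaffected.

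Two points complete the argument. The order of the two stages is essential: the first stage adds $x_0$ to every $y$-variable, including $y_A$, so it must be carried out before the second stage, which then clears the updated value of $y_A$; performing them in the opposite order would reintroduce a nonzero $y_A$. Combining the two stages yields a nonnegative solution of $\mathfrak{L}$ with $x_0=0$ and $y_A=0$, which is exactly what the proposition asserts.
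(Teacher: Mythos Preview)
Your proof is correct, and the second stage (forcing $y_A=0$ by redistributing its value into the top variables $x_\mathbf{v}$ with $\mathbf{v}\omega=A$) is exactly what the paper does.

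Your first stage differs from the paper's and is in fact cleaner. To force $x_0=0$, the paper subtracts the value $c_0$ of $x_0$ from $p_1$ and $p_2$; this requires checking that $p_1-c_0\geqslant0$, which in turn forces a case split: when $L\neq A^*$ one uses some inequality $\mathfrak{L}(\mathbf{v})\leqslant p_1$ to get $c_0\leqslant p_1$, whereas when $L=A^*$ the paper resorts to an explicit ad~hoc solution. Your trick of absorbing $x_0$ into the $y$-variables instead exploits the structural fact that every $\mathfrak{L}(\mathbf{v})$ contains $x_0$ once and exactly one $y$-variable, so the transfer is uniform and no case analysis is needed. The trade-off is that your first stage disturbs $y_A$ (raising it by $x_0$), which is why you correctly observe that the two stages must be performed in the stated order; the paper's first stage leaves the $y$-variables alone, so there the order does not matter.
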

\begin{proof}
Suppose $\mathfrak{L}$ is consistent. By Proposition \ref{lemma_positive_solutions}, the system has a solution (\ref{eq_sol}), where for all $i$ $c_i\geqslant0$.
We first prove that there exists a solution where $x_0=0$.

If $L\neq A^*$, there exists $\mathbf{v}\in\mathcal{F}(A)$ such that the inequality $\mathfrak{L}(\mathbf{v})\leqslant p_1$ is part of the system $\mathfrak{L}$.
Since $x_0$ is part of $\mathfrak{L}(\mathbf{v})$, $c_0\leqslant c_{N-2}<c_{N-1}$. So the system has a solution $\{x_0=0,z_1=c_1,\dots,z_s=c_s,y_1=c_{s+1},\dots,y_t=c_{N-3},
p_1=c_{N-2}^\prime,p_2=c_{N-1}^\prime\}$, where $c_{N-2}^\prime=c_{N-2}-c_0$ and $c_{N-1}^\prime=c_{N-1}-c_0$.
Otherwise, if $L=A^*$, take the solution $\{x_0=0,z_1=0,\dots,z_s=0,y_1=1,\dots,y_t=1,p_1=0,p_2=1\}$.

Next, we prove that there exists a solution where $x_0=0$, $y_A=0$.
The left-hand side of any inequality that contains the variable $y_A$ is of the form $x_0+x_{\mathbf{v}[1]}+x_{\mathbf{v}[2]}+...+x_\mathbf{v}+y_A$, where $\mathbf{v}\omega=A$.
Any inequality in the system either contains the variable $y_A$ and a single variable $x_\mathbf{v}$, such that $\mathbf{v}\omega=A$, or contains none of them.
Assume that $y_t$ is the variable $y_A$ and $z_{s-|A|+1},...,z_s$ are all the variables of the form $x_\mathbf{v}$, such that $\mathbf{v}\omega=A$.
Since the system has a solution $\{x_0=0,z_1=c_1,\dots,z_{s-|A|}=c_{s-|A|},z_{s-|A|+1}=c_{s-|A|+1},\dots,z_s=c_s,y_1=c_{s+1},\dots,y_{t-1}=c_{N-4},y_t=c_{N-3},p_1=c_{N-2}^\prime,p_2=c_{N-1}^\prime\}$, it also
has a solution where $z_{s-|A|+1}=c_{s-|A|+1}+c_{N-3},\dots,z_s=c_s+c_{N-3},y_t=0$ and other variables keep their previous values.
\qed
\end{proof}
\begin{proposition}
\label{prop_sol_0_1}
The system $\mathfrak{L}$ is consistent if and only if it has a
solution where all the variables are assigned real values from $0$ to $1$ and $x_0=0$, $y_A=0$.
\end{proposition}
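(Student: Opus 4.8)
The plan is to reduce the statement to Proposition \ref{prop_sol_0_y_v} by exploiting the \emph{homogeneity} of the system $\mathfrak{L}$. First I would dispose of the easy direction: any assignment in which every variable lies in $[0,1]$ (with $x_0=0$, $y_A=0$) is in particular a solution of $\mathfrak{L}$, so its existence immediately shows $\mathfrak{L}$ is consistent.

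For the converse, I would begin from the nonnegative solution with $x_0=0$ and $y_A=0$ supplied by Proposition \ref{prop_sol_0_y_v}. The crucial observation is that every constraint of $\mathfrak{L}$ is homogeneous: each inequality has one of the forms $x_0+x_{\mathbf{v}[1]}+\dots+x_\mathbf{v}+y_\mathbf{v}\geqslant p_2$, or $x_0+x_{\mathbf{v}[1]}+\dots+x_\mathbf{v}+y_\mathbf{v}\leqslant p_1$, or $p_1<p_2$, none of which contains a constant term. Hence for any $\lambda>0$, scaling every variable by $\lambda$ sends solutions to solutions; the two non-strict families scale through verbatim, and the strict inequality $p_1<p_2$ survives precisely because $\lambda$ is taken strictly positive.

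Next I would rescale to push the values into $[0,1]$. Writing $c_{\max}$ for the largest value assigned by the chosen nonnegative solution, I would set $\lambda=1/c_{\max}$ and scale. Nonnegativity is preserved, the maximal entry becomes exactly $1$, and every other entry lands in $[0,1]$; the zero entries $x_0=0$ and $y_A=0$ are fixed under scaling. This produces the required solution and completes the equivalence.

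The only point needing care — more a checkpoint than a genuine obstacle — is ensuring $c_{\max}\neq 0$ so that $\lambda$ is well defined. This follows from the strict inequality $p_1<p_2$: since all variables are nonnegative, $p_2>p_1\geqslant 0$ forces $p_2>0$, so at least one variable is strictly positive and $c_{\max}>0$.
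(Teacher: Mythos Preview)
Your proposal is correct and matches the paper's own proof essentially line for line: start from the nonnegative solution with $x_0=0$, $y_A=0$ guaranteed by Proposition~\ref{prop_sol_0_y_v}, observe that $p_1<p_2$ forces the maximum assigned value to be positive, and divide through by that maximum to land every variable in $[0,1]$ while keeping $x_0=y_A=0$. Your explicit mention of homogeneity and the trivial direction are welcome clarifications but do not deviate from the paper's argument.
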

\begin{proof}
Suppose $\mathfrak{L}$ is consistent. By Proposition
\ref{prop_sol_0_y_v}, the system has a solution
(\ref{eq_sol}), where for all $i$ $c_i\geqslant0$ and $x_0=0$, $y_A=0$. Assume that $y_t$ is the variable $y_A$. Let
$D=\max\{c_i\}$. Since $p_1<p_2$, $D>0$. So the solution (\ref{eq_sol}) may be divided by $D$ and the system
$\mathfrak{L}$ has a solution
\begin{equation}\label{eq_sol2}
\begin{array}{rll}
\{x_0&=&0,\\
z_1&=&c_1/D,\dots,\ z_s=c_s/D,\\
y_1&=&c_{s+1}/D,\dots,\ y_{t-1}=c_{N-4}/D,\ y_A=0\\
p_1&=&c_{N-2}/D,\ p_2=c_{N-1}/D\}.
\end{array}
\end{equation}
The solution (\ref{eq_sol2}) assigns to all the variables real
values from $0$ to $1$.\qed
\end{proof}
\begin{proposition}
\label{prop_sol_0_A}
The system $\mathfrak{L}$ is consistent if and only if it has a
solution where  $x_0=0$,  $y_A=0$,  $0\leqslant p_1,p_2\leqslant1$ and all the other variables $z_1,...,z_s,y_1,...,$ $y_{t-1}$ are assigned real values from $0$ to $1/|A|$.
\end{proposition}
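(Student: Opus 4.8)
The plan is to deduce this from the already-established Proposition \ref{prop_sol_0_1} together with the single observation that the system $\mathfrak{L}$ is homogeneous, after which the right scaling factor turns out to be $1/|A|$. The backward implication is immediate: any assignment meeting the stated constraints is in particular a solution of $\mathfrak{L}$, so its mere existence makes the system consistent. All the content is therefore in the forward implication.

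For the forward implication I would argue as follows. Assume $\mathfrak{L}$ is consistent. By Proposition \ref{prop_sol_0_1} there is a solution in which every variable lies in $[0,1]$ and moreover $x_0=0$ and $y_A=0$. Next I would note that every inequality of $\mathfrak{L}$ is homogeneous of degree one in the complete list of variables (including $p_1$ and $p_2$): indeed each left-hand side is $\mathfrak{L}(\mathbf{v})=x_0+x_{\mathbf{v}[1]}+\dots+x_\mathbf{v}+y_\mathbf{v}$, which carries no constant term, and the remaining inequality is $p_1<p_2$. Hence multiplying an entire solution by any fixed positive scalar produces another solution. Scaling the $[0,1]$-solution by $1/|A|$ then maps the interval $[0,1]$ into $[0,1/|A|]$, so that every $z_i$ and every $y_i$ (in particular $y_1,\dots,y_{t-1}$) now lies in $[0,1/|A|]$ as required; the zero values $x_0=0$ and $y_A=0$ are fixed by the scaling; and $p_1,p_2$ are carried into $[0,1/|A|]\subseteq[0,1]$, which meets the weaker bound imposed on them. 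This is exactly the desired solution.

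I do not expect a genuine obstacle; the only points that need care are verifying that the system really is homogeneous (so that uniform scaling is legitimate) and checking that $1/|A|$ is the correct factor—it simultaneously forces the inner variables below $1/|A|$ while leaving $p_1,p_2$ inside $[0,1]$. I would also remark why $1/|A|$ is the natural target: once $x_0=0$ and $y_A=0$, each expression $\mathfrak{L}(\mathbf{v})$ has at most $|A|$ nonzero summands—if $\mathbf{v}\omega=A$ the term $y_A$ vanishes and there are at most $|A|$ terms $x_{\mathbf{v}[j]}$, while otherwise there are at most $|\mathbf{v}|+1\leqslant|A|$ terms—so with each bounded by $1/|A|$ every $\mathfrak{L}(\mathbf{v})$ stays at most $1$, which is precisely what is needed for these values to serve as acceptance probabilities in the construction of Section \ref{construct_DH_PRA}.
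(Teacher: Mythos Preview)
Your proof is correct and follows the same underlying idea as the paper: exploit the homogeneity of $\mathfrak{L}$ to rescale the $[0,1]$-solution from Proposition~\ref{prop_sol_0_1}. The only difference is in the choice of scaling factor. You scale uniformly by $1/|A|$, which is simpler and perfectly sufficient for the stated proposition. The paper instead lets $c=\max\{c_1',\dots,c_{N-4}'\}$ (the maximum among the inner variables) and scales by $c|A|$ only when $p_2<c|A|$; otherwise $c\leqslant 1/|A|$ already holds and no scaling is performed. The paper's conditional scaling preserves a larger gap $p_2-p_1$ when possible, which is marginally advantageous for the automaton construction that follows, but this optimization is not required by the proposition itself. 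Your closing remark explaining why $1/|A|$ is the natural bound (so that each $\mathfrak{L}(\mathbf{v})\leqslant 1$) is a nice addition that the paper does not make explicit.
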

\begin{proof}
Suppose $\mathfrak{L}$ is consistent. By Proposition \ref{prop_sol_0_1}, the system has a solution (\ref{eq_sol2}), where for all $i$ $0\leqslant c_i\leqslant1$. For any $i$, let $c_i^\prime=c_i/D$.
Let $c=\max\{c_1^\prime,...,c_{N-4}^\prime\}$.  If $c_{N-1}^\prime<c|A|$ then the solution is divided by $c|A|$. Otherwise, if $c_{N-1}^\prime\geqslant c|A|$
then $c\leqslant1/|A|$ and no scaling is necessary.
\qed
\end{proof}
\section{Construction of DH-PRA for $\mathcal{R}_1$ languages}\label{construct_DH_PRA}
In this section, a method will be provided that allows to construct a DH-PRA for any $\mathcal{R}_1$ language $L$ that generates a consistent system of linear inequalities.
Since MM-BQFA is a generalization of DH-PRA, this implies the construction of MM-BQFA as well. Recall $\sigma$ is a natural morphism from $\mathcal{F}(A)$ to $\mathcal{P}(A)$, defined in subsection \ref{subsec_R1}.

{\em Preparation of a linear programming problem.} Consider an $\mathcal{R}_1$ language $L$ over alphabet $A$. Construct the respective system of linear inequalities $\mathfrak{L}$.
Obtain a system $\mathfrak{L}_1$ by supplementing $\mathfrak{L}$ with additional inequalities that enforce the constraints expressed in Proposition \ref{prop_sol_0_A}, according to which $\mathfrak{L}$ is consistent
if and only if $\mathfrak{L}_1$ is consistent. Obtain a system $\mathfrak{L}_1^\prime$ by replacing in $\mathfrak{L}_1$ the inequality $p_1<p_2$ by $p_1\leqslant p_2$.
The linear programming problem, denoted $\mathfrak{P}$, is to maximize $p_2-p_1$ according to the constraints expressed by $\mathfrak{L}_1^\prime$.

Since $\mathfrak{L}_1^\prime$ is homogenous, it always has a solution where $p_1=p_2$. Since the solution polytope of $\mathfrak{L}_1^\prime$ is bounded, $\mathfrak{P}$ always has an optimal solution.
Obviously, if the optimal solution yields $p_1=p_2$, then $\mathfrak{L}_1$ is not consistent and therefore, by Theorem \ref{theor_sys_not_consist}, a DH-PRA that recognizes $L$ does not exist.
Otherwise, if the optimal solution yields $p_1<p_2$, then $\mathfrak{L}_1$ is consistent.

{\em Automata derived from the free semilattice $\mathcal{P}(A)$.} Assume $\mathfrak{L}_1$ is consistent, so we are able to obtain a solution of $\mathfrak{P}$ where $p_1<p_2$.
Given any expression $Z$ of variables from $\mathfrak{L}_1$, let $\mathfrak{P}(Z)$ - the value which is assigned to $Z$ by solving $\mathfrak{P}$.
First, we use the obtained solution to construct probabilistic automata $\mathcal{A}_i$, $1\leqslant i\leqslant|A|$. Those automata are not probabilistic reversible.
Similarly as in the "decide-and-halt" model, the constructed automata have accepting, rejecting and non-halting states. Any input word is appended by the end-marker $\$$.
The initial end-marker $\#$ is not used for those automata themselves.
Any automaton $\mathcal{A}_i$ is a tuple $(Q_i,A\cup\{\$\},s_i,\delta_i)$, where $Q_i$ is a set of states,
$s_i$ - an initial state and $\delta_i$ - a transition function $Q\times A\times Q\longrightarrow[0,1]$, so $\delta_i(q,a,q^\prime)$ is a probability of transit from $q$ to $q^\prime$ on reading input letter $a$.
\begin{figure}[b!]
    \begin{center}
        \unitlength=4pt
        \begin{picture}(70,54)(6.5,-4)
        \thinlines
        \gasset{Nw=3,Nh=3,Nmr=3,curvedepth=0}
        \node[Nmarks=i,iangle=90](S)(34.75,45){}
        \node(E1)(7,38){$_{\{\}}$}
        \node(Z1)(0,25){$_{acc}$}
        \node(A1)(4.67,25){$_{acc}$}
        \node(B1)(9.33,25){$_{acc}$}
        \node(C1)(14,25){$_{acc}$}
        \drawedge[ELside=r,ELdist=0,ELpos=40,curvedepth=0](E1,Z1){$_{\$,v_\varepsilon}$}
        \drawedge[ELside=l,ELdist=0.1,ELdistC=y,ELpos=50,curvedepth=0](E1,A1){$_{a,t_a}$}
        \drawedge[ELside=l,ELdist=0.2,ELdistC=y,ELpos=70,curvedepth=0](E1,B1){$_{b,t_b}$}
        \drawedge[ELside=l,ELdist=0,ELpos=40,curvedepth=0](E1,C1){$_{c,t_c}$}
        \gasset{Nw=3,Nh=3,Nmr=3,curvedepth=0}
        \node(E2)(34.75,38){$_{\{\}}$}
        \node(A2)(21.875,25){$_{\{a\}}$}
        \node(B2)(35,25){$_{\{b\}}$}
        \node(C2)(48.125,25){$_{\{c\}}$}
        \node(ZA2)(17.5,10){$_{acc}$}
        \node(AB2)(21.875,10){$_{acc}$}
        \node(AC2)(26.25,10){$_{acc}$}
        \node(ZB2)(30.625,10){$_{acc}$}
        \node(BA2)(35,10){$_{acc}$}
        \node(BC2)(39.375,10){$_{acc}$}
        \node(ZC2)(43.75,10){$_{acc}$}
        \node(CA2)(48.125,10){$_{acc}$}
        \node(CB2)(52.5,10){$_{acc}$}
        \drawloop[ELdist=0,loopdiam=1,loopangle=135](A2){$_a$}
        \drawloop[ELdist=0,loopdiam=1,loopangle=45](B2){$_b$}
        \drawloop[ELdist=0,loopdiam=1,loopangle=45](C2){$_c$}
        \drawedge[ELdist=0,ELside=r,curvedepth=0](E2,A2){$_a$}
        \drawedge[ELdist=0,ELpos=43,ELside=r,curvedepth=0](E2,B2){$_b$}
        \drawedge[ELdist=0,ELside=l,curvedepth=0](E2,C2){$_c$}
        \drawedge[ELpos=40,ELdist=0,ELdistC=n,ELside=r,curvedepth=0](A2,ZA2){$_{\$,v_a}$}
        \drawedge[ELpos=60,ELdist=0.6,ELdistC=y,ELside=l,curvedepth=0](A2,AB2){$_{b,t_{ab}}$}
        \drawedge[ELpos=40,ELdist=0,curvedepth=0](A2,AC2){$_{c,t_{ac}}$}
        \drawedge[ELpos=40,ELdist=0,ELdistC=n,ELside=r,curvedepth=0](B2,ZB2){$_{\$,v_b}$}
        \drawedge[ELpos=60,ELdist=0.5,ELdistC=y,ELside=l,curvedepth=0](B2,BA2){$_{a,t_{ba}}$}
        \drawedge[ELpos=40,ELdist=0,ELside=l,curvedepth=0](B2,BC2){$_{c,t_{bc}}$}
        \drawedge[ELpos=40,ELdist=0,ELdistC=n,ELside=r,curvedepth=0](C2,ZC2){$_{\$,v_c}$}
        \drawedge[ELpos=60,ELdist=0.5,ELdistC=y,ELside=l,curvedepth=0](C2,CA2){$_{a,t_{ca}}$}
        \drawedge[ELpos=40,ELside=l,ELdist=0,curvedepth=0](C2,CB2){$_{b,t_{cb}}$}
        \gasset{Nw=3,Nh=3,Nmr=3,curvedepth=0}
        \node(E3)(69,38){$_{\{\}}$}
        \node(A3)(59,25){$_{\{a\}}$}
        \node(B3)(69,25){$_{\{b\}}$}
        \node(C3)(79,25){$_{\{c\}}$}
        \gasset{Nw=4.5,Nh=4.5,Nmr=4,curvedepth=0}
        \node(AB3)(59,10){$_{\{a,b\}}$}
        \node(AC3)(69,10){$_{\{a,c\}}$}
        \node(BC3)(79,10){$_{\{b,c\}}$}
        \gasset{Nw=3,Nh=3,Nmr=3,curvedepth=0}
        \node(ZAB3)(56.5,-5){$_{acc}$}
        \node(ABC3)(61.5,-5){$_{acc}$}
        \node(ZAC3)(66.5,-5){$_{acc}$}
        \node(ACB3)(71.5,-5){$_{acc}$}
        \node(ZBC3)(76.5,-5){$_{acc}$}
        \node(BCA3)(81.5,-5){$_{acc}$}
        \drawloop[ELdist=0,loopdiam=1,loopangle=135](A3){$_a$}
        \drawloop[ELdist=0,loopdiam=1,loopangle=45](B3){$_b$}
        \drawloop[ELdist=0,loopdiam=1,loopangle=45](C3){$_c$}
        \drawloop[ELdist=0,loopdiam=1,loopangle=-135](AB3){$_{a,b}$}
        \drawloop[ELdist=0,loopdiam=1,loopangle=-135](AC3){$_{a,c}$}
        \drawloop[ELdist=0,loopdiam=1,loopangle=-135](BC3){$_{b,c}$}
        \drawedge[ELdist=0,ELside=r,curvedepth=0](E3,A3){$_a$}
        \drawedge[ELdist=0,ELpos=43,ELside=r,curvedepth=0](E3,B3){$_b$}
        \drawedge[ELdist=0,ELside=l,curvedepth=0](E3,C3){$_c$}
        \drawedge[ELdist=0,ELside=r,curvedepth=0](A3,AB3){$_b$}
        \drawedge[ELdist=0,ELpos=30,ELside=r,curvedepth=0](A3,AC3){$_c$}
        \drawedge[ELdist=0,ELpos=30,ELside=r,curvedepth=0](B3,AB3){$_a$}
        \drawedge[ELdist=0,ELpos=30,ELside=l,curvedepth=0](B3,BC3){$_c$}
        \drawedge[ELdist=0,ELpos=30,ELside=l,curvedepth=0](C3,AC3){$_a$}
        \drawedge[ELdist=0,ELside=l,curvedepth=0](C3,BC3){$_b$}
        \drawedge[ELpos=40,ELside=l,ELdist=0,ELdistC=n,curvedepth=0](AB3,ABC3){$_{c,t_{abc}}$}
        \drawedge[ELpos=60,ELside=r,ELdist=0,ELdistC=n,curvedepth=0](AB3,ZAB3){$_{\$,v_{ab}}$}
        \drawedge[ELpos=40,ELdist=0,ELside=l,ELdistC=n,curvedepth=0](AC3,ACB3){$_{b,t_{acb}}$}
        \drawedge[ELpos=60,ELside=r,ELdist=0,ELdistC=n,curvedepth=0](AC3,ZAC3){$_{\$,v_{ac}}$}
        \drawedge[ELpos=40,ELside=l,ELdist=0,curvedepth=0](BC3,BCA3){$_{a,t_{bca}}$}
        \drawedge[ELpos=60,ELside=r,ELdist=0,ELdistC=n,curvedepth=0](BC3,ZBC3){$_{\$,v_{bc}}$}
        \drawedge[ELdist=0,ELside=r,curvedepth=0](S,E1){$_{\#,1/3}$}
        \drawedge[ELpos=40,ELdist=0,ELside=l,curvedepth=0](S,E2){$_{\#,1/3}$}
        \drawedge[ELpos=40,ELdist=0,ELside=l,curvedepth=0](S,E3){$_{\#,1/3}$}
        \end{picture}
    \end{center}
    \caption{An automaton $\mathcal{A}$ over alphabet $\{a,b,c\}$, the rejecting states are not shown.}\label{fig.C}
\end{figure}
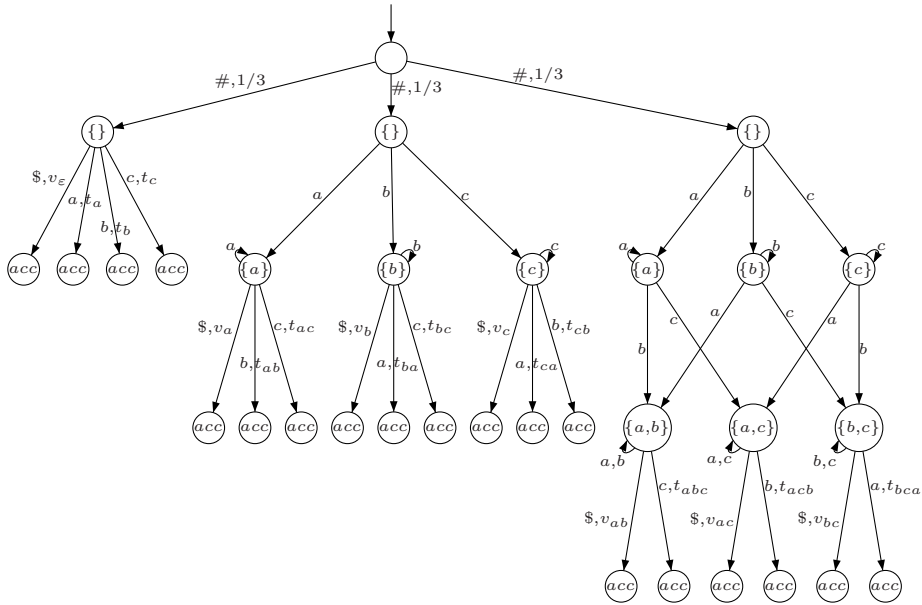
$\mathcal{A}_i$ is constructed as follows.
\begin{conditions}
\item Take the deterministic automaton $(\mathcal{P}(A),A,\emptyset,\cdp)$, remove all the states at level greater or equal to $i$. The remaining states are defined to be non-halting.
The state $\emptyset$ is initial, it is the only state of $\mathcal{A}_i$ at level $0$. For any  $a$ in $A$ and state $\mathbf{s}$ at levels $\{0,\dots,i-2\}$, $\delta_i(\mathbf{s},a,\mathbf{s}\cdp a)=1$.
For any state $\mathbf{s}$ at level $i-1$ and any $a$ in $\mathbf{s}$, $\delta_i(\mathbf{s},a,\mathbf{s})=1$.
\item \label{item_construct_DH_PRA_Ai_5} For any non-halting state $\mathbf{s}$ at levels $\{0,\dots,i-2\}$, add a rejecting state $(\mathbf{s}\$)_{rej}$.
Let $\delta_i(\mathbf{s},\$,(\mathbf{s}\$)_{rej})=1$.
\item For any state $\mathbf{s}$ at level $i-1$, add $|A|-|\mathbf{s}|+1$ accepting states $(\mathbf{s}a)_{acc}$, $a\in(A\setminus\mathbf{s})\cup\{\$\}$.
Also add $|A|-|\mathbf{s}|+1$ rejecting states $(\mathbf{s}a)_{rej}$, $a\in(A\setminus\mathbf{s})\cup\{\$\}$.
\item If $a\in A\setminus\mathbf{s}$, any element $\mathbf{s^\prime}a$ in $(\mathbf{s}\sigma^{-1})a$
defines s-variables in the same equivalence class $[\widehat{r}_{\mathbf{s^\prime}a}]$. Let $x_{\mathbf{s^\prime}a}=\mathfrak{L}(\widehat{r}_{\mathbf{s^\prime}a})$ and $c_{\mathbf{s^\prime}a}=\mathfrak{B}(x_{\mathbf{s^\prime}a})$.
Any element $\mathbf{s^\prime}$ in $\mathbf{s}\sigma^{-1}$ defines s-variables in the same equivalence class $[\widehat{g}_\mathbf{s^\prime}]$.
Let $y_\mathbf{s^\prime}=\mathfrak{L}(\widehat{g}_\mathbf{s^\prime})$
and $d_\mathbf{s^\prime}=\mathfrak{B}(y_\mathbf{s^\prime})$.
\item Define missing transitions for the states at level $i-1$.
\label{item_construct_DH_PRA_Ai_4} For any state $\mathbf{s}$ at level $i-1$ and any $a$ in $A\setminus\mathbf{s}$, let $t_{\mathbf{s^\prime}a}=\delta_i(\mathbf{s},a,(\mathbf{s}a)_{acc})=c_{\mathbf{s^\prime}a}|A|$
and $\delta_i(\mathbf{s},a,(\mathbf{s}a)_{rej})=1-t_{\mathbf{s^\prime}a}$.
Let
$v_\mathbf{s^\prime}=\delta_i(\mathbf{s},\$,(\mathbf{s}\$)_{acc})=d_\mathbf{s^\prime}|A|$ and $\delta_i(\mathbf{s},\$,(\mathbf{s}\$)_{rej})=1-v_\mathbf{s^\prime}$.
\item \label{item_construct_DH_PRA_Ai_6} Formally, we would need the transitions outgoing the halting states, those are left undefined.
\end{conditions}
Consider an automaton $\mathcal{A}$ (Figure \ref{fig.C}), which with the same probability $1/|A|$ executes any of the automata $\mathcal{A}_1,...,\mathcal{A}_{|A|}$
(i.e., it uses the initial end-marker $\#$ to transit to initial states of any of those automata).
By construction of $\mathcal{A}_1,...,\mathcal{A}_{|A|}$, the automaton $\mathcal{A}$ accepts any word $\mathbf{u}\in A^*$ with probability
$\mathfrak{P}(\mathfrak{L}(\mathbf{u}\tau))$. Since for any word $\mathbf{u}\in L$, $\mathfrak{P}(\mathfrak{L}(\mathbf{u}\tau))\geqslant\mathfrak{P}(p_2)$, and for any word $\mathbf{w}\notin L$,
$\mathfrak{P}(\mathfrak{L}(\mathbf{w}\tau))\leqslant\mathfrak{P}(p_1)$, the automaton $\mathcal{A}$ recognizes the language $L$.

{\em Construction of a DH-PRA.} In order to construct a DH-PRA recognizing $L$, it remains to demonstrate that any of the automata $\mathcal{A}_1,...,\mathcal{A}_{|A|}$
may be simulated by some DH probabilistic reversible automata, that is,
for any automaton $\mathcal{A}_i$, it is possible to construct a sequence of DH-PRA $\mathcal{S}_{i,n}$, where $n\geqslant1$,
such that $p_{\mathbf{w},\mathcal{S}_{i,n}}$ converges uniformly to $p_{\mathbf{w},\mathcal{A}_i}$ on $A^*$ as $n\to\infty$.

An automaton $\mathcal{A}_i=(Q_i,A\cup\{\$\},s_i,\delta_i)$ is used to construct
a DH-PRA $\mathcal{S}_{i,n}=(Q_{i,n},A\cup\{\$\},s_i,\delta_{i,n})$ as described next. Initially $Q_{i,n}$ is empty. Do the following.
\begin{conditions}
\item For any non-halting state $\mathbf{s}$ at level $j$, $0\leqslant j\leqslant i-1$, supplement $\mathcal{S}_{i,n}$ with non-halting states denoted $\mathbf{s}_k$, where $1\leqslant k\leqslant n^j$.
\item For any non-halting state $\mathbf{s}$ at level $j$, $0\leqslant j<i-1$, supplement $\mathcal{S}_{i,n}$ with rejecting states $(\mathbf{s}\$)_{rej,k}$, where $1\leqslant k\leqslant n^j$.
\item For any non-halting state $\mathbf{s}$ at level $i-1$, accepting state $(\mathbf{s}a)_{acc}$ and rejecting state $(\mathbf{s}a)_{rej}$, where $a\in(A\setminus\mathbf{s})\cup\{\$\}$,
supplement $\mathcal{S}_{i,n}$ with accepting states $(\mathbf{s}a)_{acc,k}$ and rejecting states $(\mathbf{s}a)_{rej,k}$, where $1\leqslant k\leqslant n^{i-1}$.
\end{conditions}
It remains to define the transitions. For any non-halting state $\mathbf{s}$ of $\mathcal{A}_i$ at level $j$, $1\leqslant j\leqslant i-1$, the states in
$\{\mathbf{s}_k\}$ are grouped into $n^{j-1}$ disjoint subsets with $n$ states in each,
so that any state in $\{\mathbf{s}_k\}$ may be denoted as $\mathbf{s}_{l,m}$, where $1\leqslant l\leqslant n^{j-1}$ and $1\leqslant m\leqslant n$.

For any letter $a$ in $A$, consider all pairs of non-halting states $\mathbf{s,t}$ of $\mathcal{A}_i$ such that $\mathbf{s}\neq\mathbf{t}$ and $\delta_i(\mathbf{s},a,\mathbf{t})=1$.
For any fixed $k$ and any $l$ and $m$, $1\leqslant l,m\leqslant n$, define
$\delta_{i,n}(\mathbf{s}_k,a,\mathbf{s}_k)=\delta_{i,n}(\mathbf{s}_k,a,\mathbf{t}_{k,m})=\delta_{i,n}(\mathbf{t}_{k,m},a,\mathbf{s}_k)=\delta_{i,n}(\mathbf{t}_{k,l},a,\mathbf{t}_{k,m})=1/(n+1)$.

For any non-halting state $\mathbf{s}$ of $\mathcal{A}_i$ at level $j$, $0\leqslant j<i-1$, $\delta_{i,n}(\mathbf{s}_k,\$,(\mathbf{s}\$)_{rej,k})=1$, $\delta_{i,n}((\mathbf{s}\$)_{rej,k},\$,\mathbf{s}_k)=1$.
For the same $(\mathbf{s}\$)_{rej,k}$ and any
other letter $b$ in $A\cup\{\$\}$, define $\delta_{i,n}((\mathbf{s}\$)_{rej,k},b,(\mathbf{s}\$)_{rej,k})=1$.

For any non-halting state $\mathbf{s}$ of $\mathcal{A}_i$ at level $i-1$ and $a\in(A\setminus\mathbf{s})\cup\{\$\}$, the transitions induced by $a$ among $\mathbf{s}_k$, $(\mathbf{s}a)_{acc,k}$, $(\mathbf{s}a)_{rej,k}$
are defined by the matrix
$\begin{pmatrix}
0&0&1\\
r_1&r_2&0\\
r_2&r_1&0\\
\end{pmatrix}$,
where $r_1=\delta_i(\mathbf{s},a,(\mathbf{s}a)_{acc})$, $r_2=\delta_i(\mathbf{s},a,(\mathbf{s}a)_{rej})$. The first, second and third rows and columns are indexed by
$\mathbf{s}_k$, $(\mathbf{s}a)_{acc,k}$, $(\mathbf{s}a)_{rej,k}$, respectively. Note that $r_1+r_2=1$.
For the same $(\mathbf{s}a)_{acc,k},(\mathbf{s}a)_{rej,k}$ and any
other letter $b$ in $A\cup\{\$\}$, define $\delta_{i,n}((\mathbf{s}a)_{acc,k},b,(\mathbf{s}a)_{acc,k})=\delta_{i,n}((\mathbf{s}a)_{rej,k},b,(\mathbf{s}a)_{rej,k})=1$.

We have defined all the non-zero transitions for $\mathcal{S}_{i,n}$. By construction, the transition matrices induced by any letter $a$ in $A\cup\{\$\}$ are doubly stochastic.
\begin{lemma}
\label{lemma_pra_unif_conv}
For any $i$, $1\leqslant i\leqslant|A|$, $p_{\mathbf{w},\mathcal{S}_{i,n}}$ converges uniformly to $p_{\mathbf{w},\mathcal{A}_i}$ on $A^*$ as $n\to\infty$.
\end{lemma}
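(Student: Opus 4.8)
The plan is to quotient out the copy structure of $\mathcal{S}_{i,n}$ and reduce the statement to a comparison of two probabilistic automata on the logical state space of $\mathcal{A}_i$ (the subsets $X\subseteq A$ with $|X|\leqslant i-1$, together with the accepting and rejecting states). First I would prove the invariant that throughout the computation of $\mathcal{S}_{i,n}$ on any input, the probability is spread \emph{uniformly} among the $n^{|X|}$ copies $X_k$ of every non-halting logical state $X$. This holds initially and is preserved by each letter $a$, because the matrix $a$ induces is block diagonal: on every mixing pair $(X,X\cup\{a\})$ it is the rank-one uniform doubly stochastic block with all entries $\tfrac1{n+1}$, which maps uniform-within-copies inputs to uniform-within-copies outputs, while the $3\times3$ acceptance blocks and the reversibility swaps act by the same matrix on each copy index $k$. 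Hence the aggregate masses $\nu_n(X)=\sum_k\nu_n(X_k)$, together with the accumulated acceptance and rejection probabilities, evolve autonomously: they are exactly the quantities computed by a (non-reversible) ``decide-and-halt'' probabilistic automaton $\mathcal{B}_n$ on the state space of $\mathcal{A}_i$, and $p_{\mathbf{w},\mathcal{S}_{i,n}}=p_{\mathbf{w},\mathcal{B}_n}$. It therefore suffices to prove $p_{\mathbf{w},\mathcal{B}_n}\to p_{\mathbf{w},\mathcal{A}_i}$ uniformly on $A^*$.

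The crux is a closed form for $\mathcal{B}_n$. I claim that as long as no halting transition has yet fired, after reading a word whose set of distinct letters is $D$ (with $m=|D|$) the aggregate distribution has the product form
\[
\nu_n(X)=\frac{n^{|X|}}{(n+1)^{m}}\ \ (X\subseteq D),\qquad \nu_n(X)=0\ \ (X\not\subseteq D),
\]
that is, each distinct letter is, at its first occurrence, ``registered'' independently with probability $\tfrac{n}{n+1}$ and the current state is the set of registered letters. This is proved by induction on the input, using the algebraic fact that in the product form every mixing pair already sits at the ratio $\nu_n(X):\nu_n(X\cup\{a\})=1:n$, which is exactly the fixed point of the idempotent block $\tfrac1{n+1}\left(\begin{smallmatrix}1&1\\ n&n\end{smallmatrix}\right)$. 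Consequently reading a letter already in $D$ is a no-op, while reading a genuinely new letter splits the mass of each $X$ in the ratio $1:n$ between $X$ and $X\cup\{a\}$ and reproduces the product form with $m$ replaced by $m+1$.

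Given the product form, the comparison with $\mathcal{A}_i$ is immediate. On any input $\mathcal{A}_i$ runs deterministically and can accept with nontrivial probability only at a single halting step, occurring either when the $i$-th distinct letter arrives at the level-$(i-1)$ state $\mathbf{s}$ formed by the first $i-1$ distinct letters, or when $\$$ is read at such an $\mathbf{s}$. At that moment the fully registered part of $\mathcal{B}_n$, which by the product form has mass \emph{exactly} $\left(\tfrac{n}{n+1}\right)^{i-1}$, occupies the very same state $\mathbf{s}$ and is split into accept and reject with exactly the probabilities prescribed by $\mathcal{A}_i$; the remaining lagging mass, of total weight at most $1-\left(\tfrac{n}{n+1}\right)^{i-1}$, sits at levels below $i-1$ and contributes an unknown but bounded amount. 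A short interval estimate then gives
\[
\bigl|p_{\mathbf{w},\mathcal{B}_n}-p_{\mathbf{w},\mathcal{A}_i}\bigr|\ \leqslant\ 1-\Bigl(\tfrac{n}{n+1}\Bigr)^{i-1}\ \leqslant\ \frac{i-1}{n+1}\ \leqslant\ \frac{|A|}{n+1},
\]
a bound independent of $\mathbf{w}$ that tends to $0$, which is precisely uniform convergence.

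The step I expect to be the main obstacle is establishing the product-form invariant, and in particular that it survives both repeated letters and the passage to the halting step. Naively each letter perturbs $\mathcal{B}_n$ by $O(1/n)$, so an arbitrarily long word would seem to accumulate an unbounded total error; what rescues uniformity is the idempotence of the mixing block, which forces repeated and ``stale'' letters to act trivially, so that only the at most $i-1$ genuine registrations ever degrade the synchronization with $\mathcal{A}_i$. The one place demanding care is checking that, once a level-$(i-1)$ state has been drained by a halting transition, the small ($O(1/n^2)$) lagging mass that can subsequently re-enter it does not accumulate enough to spoil the bound above. \qed
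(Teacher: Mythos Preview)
Your approach is correct and coincides with the paper's. The paper's proof is much terser: it simply asserts that after reading the shortest prefix $\mathbf{u}$ with $|\mathbf{u}\omega|=i-1$, the automaton $\mathcal{S}_{i,n}$ sits with equal probability $1/(n+1)^{i-1}$ in each of the $(n+1)^{i-1}$ non-halting states $\{\mathbf{x}_k\mid \mathbf{x}\subseteq\mathbf{u}\omega\}$, of which exactly $n^{i-1}$ lie at level $i-1$; from this it reads off the same sandwich $(\tfrac{n}{n+1})^{i-1}p\leqslant p_{\mathbf{w},\mathcal{S}_{i,n}}\leqslant 1-(\tfrac{n}{n+1})^{i-1}(1-p)$ that you obtain. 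Your uniformity-among-copies invariant and product-form closed formula are precisely what justify that bare assertion, so you have supplied the argument the paper elides rather than taken a different route.

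Your closing worry about lagging mass re-entering level $i-1$ is unnecessary and you can drop it. Once you know that the $(\tfrac{n}{n+1})^{i-1}$ mass at the correct level-$(i-1)$ state halts with the acceptance split of $\mathcal{A}_i$, the \emph{entire} remainder $1-(\tfrac{n}{n+1})^{i-1}$ is a fixed budget that can contribute at most its own size to either the accept or the reject total, regardless of how many times pieces of it drift up to level $i-1$ and halt later; total non-halting mass is non-increasing, so nothing accumulates. That is exactly what your displayed interval bound already says.
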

\begin{proof}
Let $\mathbf{w}\in A^*$ and $p=p_{\mathbf{w},\mathcal{A}_i}$. Assume $\mathbf{w}=\mathbf{u}\mathbf{y}$,
where $|\mathbf{u}\omega|=i-1$ and $|\mathbf{y}|\geqslant0$. After reading $\mathbf{u}$, $\mathcal{S}_{i,n}$
with the same probability $1/(n+1)^{i-1}$ is in one of the $(n+1)^{i-1}$ non-halting states in $\{\mathbf{x}_k\ |\ 1\leqslant k\leqslant n^{|\mathbf{x}|},\ \mathbf{x}\subseteq\mathbf{u}\omega\}$.
Among them, there are $\binom{i-1}{l}n^l$ states corresponding to the states of $\mathcal{A}_i$ at level $l$. So $\mathcal{S}_{i,n}$ has $n^{i-1}$ such states at level $i-1$.
Therefore $p_{\mathbf{w},\mathcal{S}_{i,n}}\geqslant{(\frac{n}{n+1})}^{i-1}p$. Also, $\mathbf{w}$ is rejected with probability
$q_{\mathbf{w},\mathcal{S}_{i,n}}\geqslant{(\frac{n}{n+1})}^{i-1}(1-p)$.
Hence ${(\frac{n}{n+1})}^{i-1}p\leqslant p_{\mathbf{w},\mathcal{S}_{i,n}}\leqslant1-{(\frac{n}{n+1})}^{i-1}(1-p)$.
If $|\mathbf{w}\omega|<i-1$, $p_{\mathbf{w},\mathcal{S}_{i,n}}=p=0$.

Since for any $j$, $0\leqslant j\leqslant|A|$, $\lim\limits_{n\to\infty}(\frac{n}{n+1})^j=1$, $p_{\mathbf{w},\mathcal{S}_{i,n}}$
converges uniformly to $p_{\mathbf{w},\mathcal{A}_i}$.
\qed
\end{proof}
Now it is possible to construct a DH-PRA $\mathcal{S}=(Q,A\cup\{\#,\$\},s,\delta)$, which with the same probability $1/|A|$ executes the automata $\mathcal{S}_{1,n},\dots,\mathcal{S}_{|A|,n}$.
The set of states $Q$ is a disjoint union of $Q_1,...,Q_{|A|}$. Take the initial state $s_i$ of any $\mathcal{S}_{i,n}$ as the initial state $s$.
For any $a\in A\cup\{\$\}$ and $q_1,q_2\in Q_i$, $\delta(q_1,a,q_2)=\delta_i(q_1,a,q_2)$. For any initial states $s_i$ and $s_j$ of $\mathcal{S}_{i,n}$ and $\mathcal{S}_{j,n}$,
$\delta(s_i,\#,s_j)=1/|A|$. For any other state $q$, $\delta(q,\#,q)=1$. So the transition matrices of $\mathcal{S}$ induced by any letter are doubly stochastic.
By Lemma \ref{lemma_pra_unif_conv}, $\mathcal{S}$ recognizes $L$ if $n$ is sufficiently large.

Hence we have established the main result of this section:
\begin{theorem}
\label{theor_sys_consist}
Suppose $L$ is an $\mathcal{R}_1$ language.
If the linear system $\mathfrak{L}$ is consistent, then $L$ can be recognized by a DH-PRA.
\end{theorem}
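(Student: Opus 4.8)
The plan is to assemble the construction developed throughout this section into a single DH-PRA and verify that it recognizes $L$ with a bounded-error gap. First I would use the hypothesis that $\mathfrak{L}$ is consistent. By Proposition \ref{prop_sol_0_A}, consistency is equivalent to the existence of a solution in which every variable is bounded (with $x_0=0$, $y_A=0$, $0\leqslant p_1,p_2\leqslant1$ and the remaining variables in $[0,1/|A|]$); consequently the solution polytope of the homogeneous system $\mathfrak{L}_1^\prime$ is bounded and the linear program $\mathfrak{P}$ attains an optimum. Since $\mathfrak{L}_1$ is consistent exactly when $\mathfrak{L}$ is, the optimal value of $p_2-p_1$ is strictly positive, so solving $\mathfrak{P}$ yields a nonnegative assignment with $\mathfrak{P}(p_1)<\mathfrak{P}(p_2)$.

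Next I would substitute these optimal values into the automata $\mathcal{A}_1,\dots,\mathcal{A}_{|A|}$ built from the free semilattice $\mathcal{P}(A)$ and form the automaton $\mathcal{A}$ that branches with equal probability $1/|A|$ into each $\mathcal{A}_i$ on the initial end-marker $\#$. The transition probabilities $t_{\mathbf{s^\prime}a}$ and $v_{\mathbf{s^\prime}}$ were chosen precisely so that $\mathcal{A}$ accepts any word $\mathbf{u}$ with probability $\mathfrak{P}(\mathfrak{L}(\mathbf{u}\tau))$. As $\mathfrak{P}(\mathfrak{L}(\mathbf{u}\tau))\geqslant\mathfrak{P}(p_2)$ for $\mathbf{u}\in L$ and $\mathfrak{P}(\mathfrak{L}(\mathbf{w}\tau))\leqslant\mathfrak{P}(p_1)$ for $\mathbf{w}\notin L$, the automaton $\mathcal{A}$ recognizes $L$ with interval $(\mathfrak{P}(p_1),\mathfrak{P}(p_2))$.

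The principal obstacle is that $\mathcal{A}$ is not reversible, since its transition matrices need not be doubly stochastic. To circumvent this I would replace each $\mathcal{A}_i$ by the DH-PRA $\mathcal{S}_{i,n}$ constructed above, whose transition matrices are doubly stochastic by design, and invoke Lemma \ref{lemma_pra_unif_conv} to obtain the uniform convergence $p_{\mathbf{w},\mathcal{S}_{i,n}}\to p_{\mathbf{w},\mathcal{A}_i}$ on $A^*$. Assembling the $\mathcal{S}_{i,n}$ into the single DH-PRA $\mathcal{S}$ that branches with probability $1/|A|$ via the end-marker $\#$, this convergence transfers to $p_{\mathbf{w},\mathcal{S}}\to p_{\mathbf{w},\mathcal{A}}$ uniformly on $A^*$. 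Hence for all sufficiently large $n$ the separation between the acceptance probabilities on $L$ and on its complement remains bounded away from zero, so $\mathcal{S}$ recognizes $L$ with bounded error, which completes the proof.
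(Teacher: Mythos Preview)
Your proposal is correct and follows essentially the same approach as the paper: the theorem is stated as the culmination of the section's construction, and your summary---solve the linear program $\mathfrak{P}$ to get a gap $\mathfrak{P}(p_2)-\mathfrak{P}(p_1)>0$, build the non-reversible automaton $\mathcal{A}$ from the $\mathcal{A}_i$, approximate each $\mathcal{A}_i$ by the doubly-stochastic $\mathcal{S}_{i,n}$, invoke Lemma~\ref{lemma_pra_unif_conv} for uniform convergence, and assemble the resulting $\mathcal{S}$---matches the paper's argument point for point.
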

Therefore, if the linear system $\mathfrak{L}$ is consistent, then $L$ can be recognized
by a MM-BQFA as well. Moreover, since all of the transition matrices of the constructed DH-PRA are also unitary stochastic, by \cite[Theorem 5.2]{GK02} $L$ can be recognized by an EQFA.
\begin{corollary}
\label{cor_not_closed}
The class $\boldsymbol{\mathcal{L}}(\text{MM-BQFA})$ is not closed under union and intersection.
\end{corollary}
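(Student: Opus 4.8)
The plan is to exploit the exact characterization just obtained: by Theorems \ref{theor_sys_not_consist} and \ref{theor_sys_consist}, an $\mathcal{R}_1$ language is recognized by a MM-BQFA if and only if its system $\mathfrak{L}$ is consistent. It therefore suffices to exhibit two MM-BQFA-recognizable $\mathcal{R}_1$ languages whose union is \emph{not} MM-BQFA-recognizable; non-closure under intersection then follows from closure under complement. I would reuse the language $L=\{\mathbf{ab},\mathbf{bac}\}$ over $A=\{a,b,c\}$ from Corollary \ref{cor_subset}, whose system $\mathfrak{L}$ was already shown to be inconsistent, so that $L\notin\boldsymbol{\mathcal{L}}(\text{MM-BQFA})$ by Theorem \ref{theor_sys_not_consist}.

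The key step is to write $L=L_1\cup L_2$ with $L_1=\{\mathbf{ab}\}$ and $L_2=\{\mathbf{bac}\}$, and to verify that each singleton language has a consistent system. For $L_1$ the separating solution sets all variables to $0$ except $x_{\mathbf{ab}}=1$ and $y_{\mathbf{ab}}=1/2$, with $p_1=1<p_2=3/2$: the end-marker variable $y_{\mathbf{ab}}$, attached to the letter-set $\{a,b\}$, keeps the proper extension $\mathbf{abc}$ strictly below $\mathbf{ab}$, while the transition variable $x_{\mathbf{ab}}$ separates $\mathbf{ab}$ from $\mathbf{ba}$. For $L_2$ one checks consistency with $x_{\mathbf{ba}}=1/2$ and $x_{\mathbf{abc}}=1$, all other variables $0$, and again $p_1=1<p_2=3/2$. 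By Theorem \ref{theor_sys_consist} both $L_1$ and $L_2$ are recognized by a DH-PRA, hence by a MM-BQFA, while their union $L$ is not, which proves non-closure under union.

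The hard part is precisely this consistency check, and its delicate point is that the s-variable governing the \emph{last} letter depends only on the letter-sets before and after reading it; in particular $\widehat{r}_{\mathbf{bac}}$ and $\widehat{r}_{\mathbf{abc}}$ are identified, so $\mathbf{bac}$ cannot be separated from $\mathbf{abc}$ by weighting the final transition. One must instead place weight on an earlier transition (here $x_{\mathbf{ba}}$) while keeping it below $p_1$, so that the genuine prefix $\mathbf{ba}\leqslant\mathbf{bac}$ is not accidentally accepted, and rely on the shared final variable $x_{\mathbf{abc}}$ together with $y_A=0$ to supply the remaining gap. Verifying that no inequality of $\mathfrak{L}(L_2)$ is violated by the above assignment is the routine but essential computation.

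Finally, for intersection I would invoke closure under complement (Theorem \ref{theor_closed_compl}): $\overline{L_1}$ and $\overline{L_2}$ are then MM-BQFA-recognizable, and $\overline{L_1}\cap\overline{L_2}=\overline{L_1\cup L_2}$. Were the class closed under intersection, $\overline{L_1\cup L_2}$ would be recognizable and hence so would $L_1\cup L_2=L$, contradicting the above. Thus $\boldsymbol{\mathcal{L}}(\text{MM-BQFA})$ is closed under neither union nor intersection.
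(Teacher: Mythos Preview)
Your proposal is correct and follows essentially the same approach as the paper: the same decomposition $L=\{\mathbf{ab},\mathbf{bac}\}=L_1\cup L_2$ with $L_1=\{\mathbf{ab}\}$, $L_2=\{\mathbf{bac}\}$, the same appeal to Theorems \ref{theor_sys_not_consist} and \ref{theor_sys_consist}, and the same reduction of intersection to union via Theorem \ref{theor_closed_compl}. The only difference is cosmetic: the paper exhibits the solutions $x_a=y_{ab}=1/2$, $p_1=1/2$, $p_2=1$ for $\mathfrak{L}(L_1)$ and $x_b=x_{abc}=1/2$, $p_1=1/2$, $p_2=1$ for $\mathfrak{L}(L_2)$, whereas you chose different (equally valid) assignments.
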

\begin{proof}
Consider the $\mathcal{R}_1$ language $L=\{\mathbf{ab,bac}\}$ over alphabet $A=\{a,b,c\}$.
By Corollary \ref{cor_subset}, $L$ can't be recognized by MM-BQFA.

On the other hand, consider the languages $L_1=\{\mathbf{ab}\}$ and $L_2=\{\mathbf{bac}\}$.
Systems $\mathfrak{L}(L_1)$ and $\mathfrak{L}(L_2)$ have the same variables as $\mathfrak{L}(L)$.
The system $\mathfrak{L}(L_1)$ has a solution where $x_a=1/2,y_{ab}=1/2,p_1=1/2,p_2=1$, and all the other variables equal to $0$.
The system $\mathfrak{L}(L_2)$ has a solution where $x_b=1/2,x_{abc}=1/2,p_1=1/2,p_2=1$, and all the other variables equal to $0$.
Therefore by Theorem \ref{theor_sys_consist} the languages $L_1,L_2$ are recognized by MM-BQFA.
Hence $\boldsymbol{\mathcal{L}}(\text{MM-BQFA})$ is not closed under union. The non-closure under intersection is now implied
by closure under complement (Theorem \ref{theor_closed_compl}).
\qed
\end{proof}
\section{Construction of MM-QFA for $\mathcal{R}_1$ languages}\label{section_construct_MM-QFA}
The construction of MM-QFA for $\mathcal{R}_1$ languages has some peculiarities which have to be addressed separately. Specifically, contrary to DH-PRA, EQFA and MM-BQFA,
there exist semilattice languages that MM-QFA do not recognize with probability $1-\epsilon$ \cite[Theorem 5]{AK03} and therefore they can't simulate with the same accepting probabilities the automata
$\mathcal{A}_1,...,\mathcal{A}_{|A|}$ from Section \ref{construct_DH_PRA}. Nevertheless, MM-QFA still recognize any semilattice language and so a modified construction is still possible.

For any $m\in\mathbb{N}$, let $\alpha(m)$ be the least common multiple of $\{1,2,...,m\}$. Also define $\alpha(0)=0$.
Let $O_n$ - $n\times n$ zero matrix. Let $M_n=(m_{rs})=(\frac{1}{n})$ and $U_n=(u_{rs})=\frac{1}{\sqrt{n}}(e^\frac{2\pi i rs}{n})$, where $0\leqslant r,s\leqslant n-1$.
$M_n$ is a doubly stochastic matrix and $U_n$ is a unitary
matrix that represents the discrete Fourier transform. Obtain an $n\times(n-1)$ matrix $V_n$ from $U_n$ by removing in $U_n$ its first column. Let $V_n^*$ - the conjugate transpose of $V_n$.
The following lemma will be useful in the construction of MM-QFA.
\begin{lemma}
\label{lemma_2n-1_unitary}
The $(2n-1)\times(2n-1)$ matrix $H_n=
\begin{pmatrix}
M_n&V_n\\
V_n^*&O_{n-1}
\end{pmatrix}$
is unitary.
\end{lemma}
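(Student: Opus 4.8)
The plan is to exploit the unitarity of $U_n$, whose columns form an orthonormal basis of $\mathbb{C}^n$, and to recognize $M_n$ as the rank-one projection onto the very column that was deleted to form $V_n$. Write $c_0,c_1,\dots,c_{n-1}$ for the columns of $U_n$. The zeroth column is $c_0=\frac{1}{\sqrt n}(1,1,\dots,1)^T$, so $c_0c_0^*=\frac{1}{n}(1,\dots,1)^T(1,\dots,1)$ is exactly the all-$\frac1n$ matrix $M_n$; thus $M_n=c_0c_0^*$ is the orthogonal projection onto $\operatorname{span}(c_0)$, while $V_n=[\,c_1\mid\dots\mid c_{n-1}\,]$ collects the remaining columns. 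The unitarity of $U_n$ supplies the two facts I will use repeatedly: orthonormality, $c_i^*c_j=\delta_{ij}$, and completeness, $\sum_{i=0}^{n-1}c_ic_i^*=I_n$.

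First I would observe that $H_n$ is Hermitian: its diagonal blocks $M_n$ and $O_{n-1}$ are Hermitian, and its off-diagonal blocks $V_n$ and $V_n^*$ are conjugate transposes of one another. Consequently $H_n^*H_n=H_n^2$, so it suffices to check $H_n^2=I_{2n-1}$; a Hermitian involution is automatically unitary.

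Then I would compute the four blocks of $H_n^2$ directly from the block product
\[
H_n^2=\begin{pmatrix} M_nM_n+V_nV_n^* & M_nV_n \\ V_n^*M_n & V_n^*V_n\end{pmatrix}.
\]
The bottom-right block is $V_n^*V_n$, whose $(i,j)$ entry is $c_i^*c_j=\delta_{ij}$ for $i,j\in\{1,\dots,n-1\}$, hence $V_n^*V_n=I_{n-1}$. The off-diagonal blocks vanish, since $M_nV_n=c_0c_0^*[\,c_1\mid\dots\mid c_{n-1}\,]=c_0[\,c_0^*c_1\mid\dots\mid c_0^*c_{n-1}\,]=0$ by orthogonality of $c_0$ to the later columns, and $V_n^*M_n=(M_nV_n)^*=0$. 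For the top-left block, $M_nM_n=M_n$ because $M_n$ is a projection, and $V_nV_n^*=\sum_{i=1}^{n-1}c_ic_i^*=I_n-c_0c_0^*=I_n-M_n$ by completeness; adding these gives $M_n+(I_n-M_n)=I_n$. Hence $H_n^2=I_{2n-1}$.

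There is no genuine obstacle here: the only thing one must spot is the identity $M_n=c_0c_0^*$ tying the all-$\frac1n$ block to the deleted first column of $U_n$, after which the computation is forced by the orthonormality and completeness relations of an orthonormal basis. One could instead verify $H_n^*H_n=I$ block-by-block without invoking Hermiticity, but noting that $H_n$ is a Hermitian involution reduces the bookkeeping to the single squared product above.
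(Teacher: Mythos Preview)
Your proof is correct. The paper states this lemma without proof, so there is nothing to compare against; your argument---recognizing $M_n=c_0c_0^*$ as the projection onto the deleted DFT column, observing that $H_n$ is Hermitian, and then verifying $H_n^2=I_{2n-1}$ block by block via the orthonormality and completeness of the columns of $U_n$---is exactly the natural way to fill the gap.
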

Suppose $L$ is an $\mathcal{R}_1$ language over alphabet $A$ such that $\mathfrak{L}(L)$ is consistent.
As prescribed in Section \ref{construct_DH_PRA}, we construct the automata $\mathcal{A}_1,...,\mathcal{A}_{|A|}$, which are the components of the probabilistic automaton $\mathcal{A}$
recognizing $L$.

{\em Construction of a MM-QFA.} For any automaton $\mathcal{A}_i$, we construct a sequence of MM-QFA $\mathcal{U}_{i,n}$, where $n\geqslant1$, such that $n^{\alpha(|A|-1)}p_{\mathbf{w},\mathcal{U}_{i,n}}$ converges
uniformly to $p_{\mathbf{w},\mathcal{A}_i}$ on $A^*$ as $n\to\infty$.

An automaton $\mathcal{A}_i=(Q_i,A\cup\{\$\},s_i,\delta_i)$ is used to construct
a MM-QFA $\mathcal{U}_{i,n}=(Q_{i,n},A\cup\{\$\},s_i,\delta_{i,n})$ as described next. If $i>1$, let $c=\frac{\alpha(|A|-1)}{i-1}$, otherwise let $c=0$. Initially $Q_{i,n}$ is empty.
Do the following.
\begin{conditions}
\item For any non-halting state $\mathbf{s}$ at level $j$, $0\leqslant j\leqslant i-1$, supplement $\mathcal{U}_{i,n}$ with non-halting states $\mathbf{s}_k$, where $1\leqslant k\leqslant n^{cj}$.
If $|\mathbf{s}|>0$, new rejecting states $\mathbf{s}_k^\prime$, $1\leqslant k\leqslant n^{cj}$, are added to $\mathcal{U}_{i,n}$ as well.
\item For any non-halting state $\mathbf{s}$ at level $j$, $0\leqslant j<i-1$, supplement $\mathcal{U}_{i,n}$ with rejecting states $(\mathbf{s}\$)_{rej,k}$, where $1\leqslant k\leqslant n^{cj}$.
\item For any non-halting state $\mathbf{s}$ at level $i-1$, accepting state $(\mathbf{s}a)_{acc}$ and rejecting state $(\mathbf{s}a)_{rej}$, where $a\in(A\setminus\mathbf{s})\cup\{\$\}$,
supplement $\mathcal{U}_{i,n}$ with accepting states $(\mathbf{s}a)_{acc,k}$ and rejecting states $(\mathbf{s}a)_{rej,k}$, where $1\leqslant k\leqslant n^{c(i-1)}$.
\end{conditions}
It remains to define the transitions. For any non-halting state $\mathbf{s}$ of $\mathcal{A}_i$ at level $j$, $1\leqslant j\leqslant i-1$, the states in $\{\mathbf{s}_k\}$ are
grouped into $n^{c(j-1)}$ disjoint subsets with $n^c$ states in each, so that any state in $\{\mathbf{s}_k\}$ may be denoted as $\mathbf{s}_{l,m}$, where $1\leqslant l\leqslant n^{c(j-1)}$ and
$1\leqslant m\leqslant n^c$. The states in $\{\mathbf{s}_k^\prime\}$ are grouped in the same way, so that
any state in $\{\mathbf{s}_k^\prime\}$ may be denoted as $\mathbf{s}_{l,m}^\prime$.

For any letter $a$ in $A$, consider all pairs of non-halting states $\mathbf{s,t}$ of $\mathcal{A}_i$ such that $\mathbf{s}\neq\mathbf{t}$ and $\delta_i(\mathbf{s},a,\mathbf{t})=1$.
For any fixed $k$ and any $m$, $1\leqslant m\leqslant n^c$, the transitions induced by $a$ among the states $\mathbf{t}_{k,m},\mathbf{t}_{k,m}^\prime$ and the state $\mathbf{s}_k$ are defined by
the matrix $H_{n^c+1}$; the first row and column is indexed by $\mathbf{s}_k$, the next $n^c$ rows and columns by $\mathbf{t}_{k,m}$, and the last $n^c$ rows and columns by $\mathbf{t}_{k,m}^\prime$.

For any non-halting state $\mathbf{s}$ of $\mathcal{A}_i$ at level $j$, $0\leqslant j<i-1$, $\delta_{i,n}(\mathbf{s}_k,\$,(\mathbf{s}\$)_{rej,k})=1$, $\delta_{i,n}((\mathbf{s}\$)_{rej,k},\$,\mathbf{s}_k)=1$.
For the same $(\mathbf{s}\$)_{rej,k}$ and any other letter $b$ in $A\cup\{\$\}$, define $\delta_{i,n}((\mathbf{s}\$)_{rej,k},b,(\mathbf{s}\$)_{rej,k})=1$.

Consider any non-halting state $\mathbf{s}$ of $\mathcal{A}_i$ at level $i-1$ and $a\in(A\setminus\mathbf{s})\cup\{\$\}$.
Let $r_1=\delta_i(\mathbf{s},a,(\mathbf{s}a)_{acc})$, $r_2=\delta_i(\mathbf{s},a,(\mathbf{s}a)_{rej})$.
If $i=1$, let $u_1=r_1(\frac{1}{n})^{\alpha(|A|-1)}$ and $u_2=1-u_1$. Otherwise, if $i>1$, let $u_1=r_1$, $u_2=r_2$. Note that $u_1+u_2=1$.
The transitions induced by $a$ among $\mathbf{s}_k$, $(\mathbf{s}a)_{acc,k}$, $(\mathbf{s}a)_{rej,k}$
are defined by the matrix
$\begin{pmatrix}
0&0&1\\
\sqrt{u_1}&\sqrt{u_2}&0\\
\sqrt{u_2}&-\sqrt{u_1}&0\\
\end{pmatrix}$.
The first, second and third rows and columns are indexed by
$\mathbf{s}_k$, $(\mathbf{s}a)_{acc,k}$, $(\mathbf{s}a)_{rej,k}$, respectively.
For the same $(\mathbf{s}a)_{acc,k},(\mathbf{s}a)_{rej,k}$ and any
other letter $b$ in $A\cup\{\$\}$, define $\delta_{i,n}((\mathbf{s}a)_{acc,k},b,(\mathbf{s}a)_{acc,k})=\delta_{i,n}((\mathbf{s}a)_{rej,k},b,(\mathbf{s}a)_{rej,k})=1$.

We have defined all the non-zero transitions for $\mathcal{U}_{i,n}$. By construction, the transition matrices induced by any letter $a$ in $A\cup\{\$\}$ are unitary.
\begin{lemma}
\label{lemma_qfa_unif_conv}
For any $i$, $1\leqslant i\leqslant|A|$, $n^{\alpha(|A|-1)}p_{\mathbf{w},\mathcal{U}_{i,n}}$ converges uniformly to $p_{\mathbf{w},\mathcal{A}_i}$ on $A^*$ as $n\to\infty$.
\end{lemma}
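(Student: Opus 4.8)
The plan is to follow the template of Lemma~\ref{lemma_pra_unif_conv}, but to track the evolution of the sub-normalized \emph{pure} state of $\mathcal{U}_{i,n}$ in its non-halting subspace. Since the computation starts in a pure state and each step is a unitary followed by a projective measurement, after any prefix the non-halting part is a single sub-normalized vector $|\psi\rangle$; reading a letter $a$ replaces $|\psi\rangle$ by $P_{non}U_a|\psi\rangle$ and increments the accepting probability by $\|P_{acc}U_a|\psi\rangle\|^2$. Write $N=n^c$, so a level-$j$ state has $N^j$ copies and, since $c(i-1)=\alpha(|A|-1)$ for $i>1$, the scaling factor is $n^{\alpha(|A|-1)}=N^{i-1}$.

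The first step is the key algebraic property of $H_{N+1}$: the all-ones vector on the $M_{N+1}$-block is fixed and leaks nothing, i.e. $M_{N+1}\mathbf{1}=\mathbf{1}$ and $V_{N+1}^{*}\mathbf{1}=0$ (the latter because $\mathbf{1}/\sqrt{N+1}$ is precisely the first column of $U_{N+1}$ deleted to form $V_{N+1}$, hence orthogonal to the columns of $V_{N+1}$). From this I would prove by induction the \emph{uniform amplitude invariant}: after a prefix whose distinct-letter set $S$ has $|S|=d\leqslant i-1$, every copy of every non-halting state $T\subseteq S$ carries the common amplitude $(N+1)^{-d}$ and no accepting mass has been produced. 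A repeated letter leaves this state unchanged (each block gets input $\beta\mathbf{1}$, which maps to $\beta\mathbf{1}$ and $0$ on the rejecting twins), while a genuinely new letter applies $H_{N+1}$ to an input $(\beta,0,\dots,0)$, spreading $\beta/(N+1)$ onto the source and each of its $N$ targets and losing the rest to rejection; this multiplies the common amplitude by $(N+1)^{-1}$, giving the invariant at level $d+1$.

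The second step extracts the main term. Let $\mathbf{s}_0$ be the level-$(i-1)$ state of the first $i-1$ distinct letters of $\mathbf{w}$. If $|\mathbf{w}\omega|\geqslant i-1$, then just before the $i$-th distinct letter (or the end-marker, if $|\mathbf{w}\omega|=i-1$) the invariant gives $N^{i-1}$ copies of $\mathbf{s}_0$, each of amplitude $(N+1)^{-(i-1)}$; the $3\times3$ exit matrix sends each copy to acceptance with amplitude $\sqrt{u_1}\,(N+1)^{-(i-1)}$, where $u_1=r_1=p_{\mathbf{w},\mathcal{A}_i}$. Summing over copies yields the principal contribution $p_{\mathbf{w},\mathcal{A}_i}\cdot N^{i-1}/(N+1)^{2(i-1)}$, which after multiplication by $N^{i-1}$ becomes $p_{\mathbf{w},\mathcal{A}_i}\bigl(\tfrac{N}{N+1}\bigr)^{2(i-1)}\to p_{\mathbf{w},\mathcal{A}_i}$. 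For $i=1$ the climb is absent ($N=1$) and the factor $(1/n)^{\alpha(|A|-1)}$ built into $u_1$ supplies the attenuation, so the main term is exact.

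The main obstacle, and the heart of the argument, is to control the \emph{spurious} acceptance that $\mathcal{U}_{i,n}$ produces but $\mathcal{A}_i$ does not: once $\mathbf{s}_0$ halts, further level-$(i-1)$ states (those containing later letters) keep forming and exiting on subsequent new or repeated letters and at the end-marker. I would bound all of this at once by the total non-halting probability left immediately after the principal exit. A direct count using the invariant shows this residual mass equals $(N+1)^{-i}-N^{i-1}(N+1)^{-2i}\leqslant(N+1)^{-i}$, so the whole spurious accepting probability is at most $(N+1)^{-i}$; scaled by $N^{i-1}$ it is at most $N^{i-1}/(N+1)^{i}\leqslant 1/(N+1)\to0$. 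Combining the two estimates gives, for every $\mathbf{w}$, the bound $\bigl|N^{i-1}p_{\mathbf{w},\mathcal{U}_{i,n}}-p_{\mathbf{w},\mathcal{A}_i}\bigr|\leqslant\bigl(1-(\tfrac{N}{N+1})^{2(i-1)}\bigr)+1/(N+1)$, which is independent of $\mathbf{w}$, whence uniform convergence as $N=n^c\to\infty$. The only remaining case is $|\mathbf{w}\omega|<i-1$, where neither automaton ever reaches level $i-1$ and both probabilities are $0$.
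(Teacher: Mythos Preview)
Your proof is correct and follows essentially the same route as the paper: establish that after the prefix containing the first $i-1$ distinct letters every non-halting copy carries the common amplitude $(N+1)^{-(i-1)}$, read off the principal accepting contribution $p\,N^{i-1}(N+1)^{-2(i-1)}$ from the $N^{i-1}$ copies of $\mathbf{s}_0$, and bound the leftover (``spurious'') acceptance by the complementary non-halting mass. The paper states the uniform-amplitude fact without justification; your explicit verification via $M_{N+1}\mathbf{1}=\mathbf{1}$ and $V_{N+1}^{*}\mathbf{1}=0$ is exactly what is needed to fill that in. The only difference is cosmetic: the paper bounds the spurious part by the non-halting mass \emph{before} the $i$-th distinct letter minus the principal exit, getting the scaled error $(N/(N+1))^{i-1}-(N/(N+1))^{2(i-1)}$, whereas you use the (smaller) non-halting mass \emph{after} that letter. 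Your displayed value for the residual mass is slightly off---a direct count gives $(N+1)^{-i}-N^{i-1}(N+1)^{-(2i-1)}$ rather than $(N+1)^{-i}-N^{i-1}(N+1)^{-2i}$---but since you only use the cruder bound $(N+1)^{-i}$ and hence $N^{i-1}(N+1)^{-i}\leqslant 1/(N+1)$, the argument is unaffected.
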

\begin{proof}
Let $\mathbf{w}\in A^*$ and $p=p_{\mathbf{w},\mathcal{A}_i}$. If $i=1$, $p_{\mathbf{w},\mathcal{U}_{1,n}}=(\frac{1}{n})^{\alpha(|A|-1)}p$.

Consider the case $i>1$. Assume $\mathbf{w}=\mathbf{u}\mathbf{y}$,
where $|\mathbf{u}\omega|=i-1$ and $|\mathbf{y}|\geqslant0$. After reading $\mathbf{u}$, $\mathcal{U}_{i,n}$ has rejected
the input with probability $1-1/(n^c+1)^{i-1}$ and with the same amplitude $1/(n^c+1)^{i-1}$ is in one of the $(n^c+1)^{i-1}$ non-halting states in
$\{\mathbf{x}_k\ |\ 1\leqslant k\leqslant n^{c|\mathbf{x}|},\ \mathbf{x}\subseteq\mathbf{u}\omega\}$.
Among them, there are $\binom{i-1}{l}n^{cl}$ states corresponding to the states of $\mathcal{A}_i$ at level $l$. So $\mathcal{U}_{i,n}$ has $n^{c(i-1)}$ such states at level $i-1$.
Therefore $p_{\mathbf{w},\mathcal{U}_{i,n}}\geqslant{(\frac{n^c}{(n^c+1)^2})}^{i-1}p$. Also, $\mathbf{w}$ is rejected with probability
$q_{\mathbf{w},\mathcal{U}_{i,n}}\geqslant{(\frac{n^c}{(n^c+1)^2})}^{i-1}(1-p)+1-(\frac{1}{n^c+1})^{i-1}$.
Hence ${(\frac{n^c}{(n^c+1)^2})}^{i-1}p\leqslant p_{\mathbf{w},\mathcal{U}_{i,n}}\leqslant{(\frac{n^c}{(n^c+1)^2})}^{i-1}p+(\frac{1}{n^c+1})^{i-1}-{(\frac{n^c}{(n^c+1)^2})}^{i-1}$.
Note that $n^{\alpha(|A|-1)}=n^{c(i-1)}$, therefore
${(\frac{n^c}{n^c+1})}^{2(i-1)}p\leqslant n^{\alpha(|A|-1)}p_{\mathbf{w},\mathcal{U}_{i,n}}\leqslant{(\frac{n^c}{n^c+1})}^{2(i-1)}p+(\frac{n^c}{n^c+1})^{i-1}-{(\frac{n^c}{n^c+1})}^{2(i-1)}$.
If $|\mathbf{w}\omega|<i-1$, $p_{\mathbf{w},\mathcal{U}_{i,n}}=p=0$.

Since $\lim\limits_{n\to\infty}(\frac{n^c}{n^c+1})^{i-1}=1$, $n^{\alpha(|A|-1)}p_{\mathbf{w},\mathcal{U}_{i,n}}$
converges uniformly to $p_{\mathbf{w},\mathcal{A}_i}$.
\qed
\end{proof}
Construct a MM-QFA $\mathcal{U}_n=(Q,A\cup\{\#,\$\},s,\delta)$, which with the same amplitude $1/\sqrt{|A|}$ executes the automata $\mathcal{U}_{1,n},\dots,\mathcal{U}_{|A|,n}$.
The set of states $Q$ is a disjoint union of $Q_1,...,Q_{|A|}$. Take the initial state $s_i$ of any $\mathcal{U}_{i,n}$ as the initial state $s$.
For any $a\in A\cup\{\$\}$ and $q_1,q_2\in Q_i$, $\delta(q_1,a,q_2)=\delta_i(q_1,a,q_2)$. The transitions induced by initial end-marker $\#$ among the initial states $s_i$ of $\mathcal{U}_{i,n}$, $1\leqslant i\leqslant|A|$,
are defined by the discrete Fourier transform $U_{|A|}$.
For any other state $q$, $\delta(q,\#,q)=1$. So the transition matrices of $\mathcal{U}_n$ induced by any letter are unitary.

We are ready to state the main result of the section.
\begin{theorem}
\label{theor_sys_consist_QFA}
Suppose $L$ is an $\mathcal{R}_1$ language.
If the linear system $\mathfrak{L}$ is consistent, then $L$ can be recognized by a MM-QFA.
\end{theorem}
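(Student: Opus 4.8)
The plan is to assemble the components already built in this section: the MM-QFA $\mathcal{U}_n$ is designed so that, up to a single word-independent scaling factor, it reproduces the probabilistic automaton $\mathcal{A}$ of Section~\ref{construct_DH_PRA}, which recognizes $L$. First I would record that, since $\mathfrak{L}$ is consistent, so is $\mathfrak{L}_1$, and the linear program $\mathfrak{P}$ has an optimal solution with $\mathfrak{P}(p_1)<\mathfrak{P}(p_2)$. By the analysis of $\mathcal{A}$ in Section~\ref{construct_DH_PRA}, every $\mathbf{w}\in A^*$ is accepted by $\mathcal{A}$ with probability $p_{\mathbf{w},\mathcal{A}}=\mathfrak{P}(\mathfrak{L}(\mathbf{w}\tau))$, whence $p_{\mathbf{w},\mathcal{A}}\geqslant\mathfrak{P}(p_2)$ for $\mathbf{w}\in L$ and $p_{\mathbf{w},\mathcal{A}}\leqslant\mathfrak{P}(p_1)$ otherwise. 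Write $\gamma=\mathfrak{P}(p_2)-\mathfrak{P}(p_1)>0$ for this separation gap.

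The crucial step is to express the acceptance probability of $\mathcal{U}_n$ through its components $\mathcal{U}_{i,n}$. Reading the initial end-marker $\#$ applies the Fourier transform $U_{|A|}$ among the states $s_i$; since every column of $U_{|A|}$ has entries of constant modulus $1/\sqrt{|A|}$, this places each branch $\mathcal{U}_{i,n}$ at an initial amplitude of modulus $1/\sqrt{|A|}$. Because the state sets $Q_1,\dots,Q_{|A|}$ are pairwise disjoint and no transition on a letter of $A\cup\{\$\}$ links distinct $Q_i$, the branches evolve and are measured independently, with no interference, so the weights combine as moduli squared: $p_{\mathbf{w},\mathcal{U}_n}=\frac{1}{|A|}\sum_{i=1}^{|A|}p_{\mathbf{w},\mathcal{U}_{i,n}}$. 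Multiplying by the word-independent constant $n^{\alpha(|A|-1)}$ and applying Lemma~\ref{lemma_qfa_unif_conv} term by term, using that a finite sum of uniformly convergent sequences converges uniformly, gives $n^{\alpha(|A|-1)}p_{\mathbf{w},\mathcal{U}_n}\to\frac{1}{|A|}\sum_i p_{\mathbf{w},\mathcal{A}_i}=p_{\mathbf{w},\mathcal{A}}$ uniformly on $A^*$.

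It then remains to transfer the gap. By uniform convergence I would fix a single $n$ large enough that $|n^{\alpha(|A|-1)}p_{\mathbf{w},\mathcal{U}_n}-p_{\mathbf{w},\mathcal{A}}|<\gamma/3$ for all $\mathbf{w}$. For words in $L$ the scaled probabilities then exceed $\mathfrak{P}(p_2)-\gamma/3$, while for words outside $L$ they stay below $\mathfrak{P}(p_1)+\gamma/3$, and these thresholds are separated because $2\gamma/3<\gamma$. Dividing through by the fixed positive constant $n^{\alpha(|A|-1)}$ preserves the separation, so $\mathcal{U}_n$ recognizes $L$ with bounded error on the interval with endpoints $(\mathfrak{P}(p_1)+\gamma/3)/n^{\alpha(|A|-1)}$ and $(\mathfrak{P}(p_2)-\gamma/3)/n^{\alpha(|A|-1)}$.

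The main obstacle is the decomposition in the second paragraph: one must justify that the amplitude-level superposition created by $U_{|A|}$ contributes to acceptance as an honest probabilistic mixture $\frac{1}{|A|}\sum_i p_{\mathbf{w},\mathcal{U}_{i,n}}$, i.e. that no cross-branch interference arises in the measure-many dynamics. This rests entirely on the disjointness of the $Q_i$ and the absence of inter-branch transitions, which has to be read off from the construction; everything else is bookkeeping around the uniform limit of Lemma~\ref{lemma_qfa_unif_conv}. I would flag one subtlety distinguishing this from the DH-PRA case: the surviving gap is only $\gamma/(3\,n^{\alpha(|A|-1)})$, so the argument delivers bounded-error recognition but not recognition with probability $1-\epsilon$ — in line with the known failure of the latter for some semilattice languages.
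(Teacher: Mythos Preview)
Your proposal is correct and follows essentially the same route as the paper: assemble $\mathcal{U}_n$, invoke Lemma~\ref{lemma_qfa_unif_conv} to get uniform convergence of $n^{\alpha(|A|-1)}p_{\mathbf{w},\mathcal{U}_n}$ to $p_{\mathbf{w},\mathcal{A}}$, then transfer the gap $\gamma=\mathfrak{P}(p_2)-\mathfrak{P}(p_1)$ with a $\gamma/3$ cushion and divide through by the scaling constant. You are in fact more explicit than the paper about why the branch decomposition $p_{\mathbf{w},\mathcal{U}_n}=\frac{1}{|A|}\sum_i p_{\mathbf{w},\mathcal{U}_{i,n}}$ holds (the paper simply asserts the uniform convergence for $\mathcal{U}_n$ as a consequence of the lemma), and your closing remark on the $1-\epsilon$ limitation is apt.
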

\begin{proof}
If the linear system $\mathfrak{L}$ is consistent, it is possible to construct the corresponding automaton $\mathcal{U}_n$ from above.
By Lemma \ref{lemma_qfa_unif_conv}, $n^{\alpha(|A|-1)}p_{\mathbf{w},\mathcal{U}_n}$ converges uniformly to $p_{\mathbf{w},\mathcal{A}}$ on $A^*$ as $n\to\infty$.

Take $z=\frac{1}{3}(\mathfrak{P}(p_2)-\mathfrak{P}(p_1))$.
If $n$ is sufficiently large, for any word $\mathbf{u}\in L$ $n^{\alpha(|A|-1)}p_{\mathbf{u},\mathcal{U}_n}\geqslant\mathfrak{P}(p_2)-z$ and for any word $\mathbf{w}\notin L$
$n^{\alpha(|A|-1)}p_{\mathbf{w},\mathcal{U}_n}\leqslant\mathfrak{P}(p_1)+z$.
Hence for all $\mathbf{u}\in L$ $p_{\mathbf{u},\mathcal{U}_n}\geqslant n^{-\alpha(|A|-1)}(\mathfrak{P}(p_2)-z)$ and for all
$\mathbf{w}\notin L$ $p_{\mathbf{w},\mathcal{U}_n}\leqslant n^{-\alpha(|A|-1)}(\mathfrak{P}(p_1)+z)$. So for any $\mathbf{u}\in L$ and $\mathbf{w}\notin L$
$p_{\mathbf{u},\mathcal{U}_n}-p_{\mathbf{w},\mathcal{U}_n}\geqslant n^{-\alpha(|A|-1)}z$

Therefore for a sufficiently large fixed $n$, $\mathcal{U}_n$ recognizes $L$ with bounded error.
\qed
\end{proof}

In summary, we have obtained the following two results:
\begin{theorem}
\label{theor_iff1}
Suppose $L$ is an $\mathcal{R}_1$ language. $L$ can be recognized by MM-QFA if and only if the linear system $\mathfrak{L}(L)$ is consistent.
\end{theorem}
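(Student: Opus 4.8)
The plan is to prove the biconditional by splitting it into its two implications, each of which should follow directly from a theorem already established earlier in the paper, so that the work here is one of assembly rather than of fresh argument.

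First I would dispatch the direction stating that consistency of $\mathfrak{L}(L)$ implies recognizability. This is exactly the content of Theorem \ref{theor_sys_consist_QFA}: under the assumption that $\mathfrak{L}(L)$ is consistent, that theorem produces the MM-QFA $\mathcal{U}_n$ (obtained by superposing the component automata $\mathcal{U}_{i,n}$ via the discrete Fourier transform on the initial end-marker, with the building block $H_{n^c+1}$) which, by Lemma \ref{lemma_qfa_unif_conv}, recognizes $L$ with bounded error once $n$ is chosen sufficiently large. So this half requires only a citation.

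For the converse I would argue by contrapositive: if $\mathfrak{L}(L)$ is not consistent, then $L$ is recognizable by no MM-QFA. Theorem \ref{theor_sys_not_consist} already furnishes the statement that an inconsistent system precludes recognition by any MM-BQFA. The observation I would then invoke is the one recorded in Section \ref{sec_automata}, namely that MM-BQFA generalizes MM-QFA (a unitary operation followed by an orthogonal measurement is a bistochastic completely positive map, so every MM-QFA is in particular an MM-BQFA). Consequently a language recognizable by no MM-BQFA is \emph{a fortiori} recognizable by no MM-QFA, which closes the contrapositive.

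Taken together the two implications yield the asserted equivalence. I do not expect any genuine obstacle at this stage, since all of the hard work has already been done: the derivation of the linear system and the impossibility result resting on Theorem \ref{theor_bist_EJ}, on one side, and the explicit automaton construction with its uniform-convergence estimate, on the other. The single point demanding a moment of care is making explicit that the impossibility half of Theorem \ref{theor_sys_not_consist}, phrased for MM-BQFA, transfers to MM-QFA through the generalization established in the automata section; once that embedding is spelled out, the proof is complete.
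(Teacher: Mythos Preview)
Your proposal is correct and matches the paper's own proof, which simply cites Theorems \ref{theor_sys_not_consist} and \ref{theor_sys_consist_QFA}. The one point you take care to spell out---that the impossibility for MM-BQFA transfers to MM-QFA because the latter is a special case---is exactly the remark the paper makes immediately after Theorem \ref{theor_sys_not_consist}, so there is no divergence in approach.
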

\begin{proof}
By Theorems \ref{theor_sys_not_consist} and \ref{theor_sys_consist_QFA}.
\qed
\end{proof}
\begin{theorem}
\label{theor_iff2}
MM-QFA, DH-PRA, EQFA and MM-BQFA recognize exactly the same $\mathcal{R}_1$ languages.
\end{theorem}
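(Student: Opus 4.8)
The plan is to route recognizability by every model through a single model-independent condition, namely consistency of the linear system $\mathfrak{L}(L)$, and then observe that a biconditional shared by all four models forces their $\mathcal{R}_1$-language classes to coincide. Concretely, for an arbitrary $\mathcal{R}_1$ language $L$ I would establish that each of MM-QFA, DH-PRA, EQFA and MM-BQFA recognizes $L$ if and only if $\mathfrak{L}(L)$ is consistent. Since the right-hand side of this equivalence never refers to the automaton model, all four recognized classes must be identical, which is exactly the claim.

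First I would dispatch the direction ``recognition $\Rightarrow$ consistency.'' Theorem \ref{theor_sys_not_consist} states that inconsistency of $\mathfrak{L}(L)$ prevents recognition by any MM-BQFA, and the remark following it extends the same conclusion to MM-QFA, DH-PRA and EQFA (the latter three because MM-BQFA generalizes them, as recorded in Section \ref{sec_automata}). Taking contrapositives, recognition of $L$ by any of the four models implies that $\mathfrak{L}(L)$ is consistent; for the MM-QFA case this is also precisely the forward half of Theorem \ref{theor_iff1}.

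Next I would assemble the converse ``consistency $\Rightarrow$ recognition'' from the two construction theorems. Theorem \ref{theor_sys_consist} supplies a DH-PRA recognizing $L$ whenever $\mathfrak{L}(L)$ is consistent; because MM-BQFA generalizes DH-PRA, the same language is then recognized by an MM-BQFA, and because the constructed DH-PRA has unitary-stochastic transition matrices, the cited result \cite[Theorem 5.2]{GK02} yields an EQFA as well (both remarks appear immediately after Theorem \ref{theor_sys_consist}). The MM-QFA case is handled directly by the separate construction of Theorem \ref{theor_sys_consist_QFA}. Hence consistency of $\mathfrak{L}(L)$ implies recognition by all four models.

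Combining the two directions gives, for each model, the same biconditional with the fixed condition ``$\mathfrak{L}(L)$ is consistent,'' so the four recognized $\mathcal{R}_1$-language classes coincide. I expect no genuine obstacle in this concluding step: all the substantive work lives in the preceding limitation theorem (which rests on the completely positive map machinery and Theorem \ref{theor_bist_EJ}) and in the explicit constructions. The only points demanding care are the inclusion facts being invoked, that MM-BQFA subsumes DH-PRA and EQFA, and that the DH-PRA of Theorem \ref{theor_sys_consist} is unitary-stochastic and therefore simulable by an EQFA, each of which is already justified in the text.
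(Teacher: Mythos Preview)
Your proposal is correct and follows essentially the same approach as the paper: the paper's proof simply cites Theorems \ref{theor_sys_not_consist}, \ref{theor_sys_consist} and \ref{theor_sys_consist_QFA}, which is exactly the combination you invoke to show that each of the four models recognizes an $\mathcal{R}_1$ language $L$ if and only if $\mathfrak{L}(L)$ is consistent. Your additional remarks about the model inclusions and the unitary-stochastic nature of the constructed DH-PRA are precisely the justifications the paper records around those theorems.
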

\begin{proof}
By Theorems \ref{theor_sys_not_consist}, \ref{theor_sys_consist} and \ref{theor_sys_consist_QFA}.
\qed
\end{proof}
\section{"Forbidden Constructions"}\label{section_forb_constr}
In \cite[Theorem 4.3]{AV00}, \c Kikusts has proposed "forbidden constructions" for MM-QFA; any regular language whose minimal deterministic finite automaton contains any of these
constructions cannot be recognized by MM-QFA. It is actually implied by Theorem \ref{theor_bist_EJ} that the same is true for MM-BQFA and other "decide-and-halt" models from Table \ref{table_automata}.
Also, by Theorem \ref{theor_MM_BQFA_subset_ER} any language that is recognized by a MM-BQFA is contained in $\boldsymbol{\mathcal{ER}}$.
Therefore it is legitimate to ask whether all the $\boldsymbol{\mathcal{ER}}$ languages that do not contain any of the "forbidden constructions" can be recognized by
MM-BQFA. In this section, we give a {\em negative} answer to this question; we provide an example of an $\mathcal{R}_1$ language that does not contain any
of the "forbidden constructions" and still cannot be recognized by MM-BQFA (and by other "decide-and-halt" models from Table \ref{table_automata}).

First, we need a lemma.
\begin{lemma}
\label{lemma_forb_constr}
An $\mathcal{R}_1$ language $L$ has a "forbidden construction" with $n+1$ levels if and only if there exist $m,n$ and words $\mathbf{w}_i$, $\mathbf{x}_{i,k}$, $1\leqslant i\leqslant2m$, $1\leqslant k\leqslant n$,
such that
\begin{conditions}
\item \label{cond_m_1} $\mathbf{w}_1,...,\mathbf{w}_m\in L;$
\item $\mathbf{w}_{m+1},...,\mathbf{w}_{2m}\notin L;$
\item \label{cond_m_3} for any $i$, $\mathbf{w}_i=\mathbf{x}_{i,1}...\mathbf{x}_{i,n}$;
\item for any $i,k$ $\mathbf{x}_{i,k}=\mathbf{x}_{i,k}\tau$;
\item \label{cond_m_4} for any $i,j,k$, if $1\leqslant k<n$ then $\mathbf{x}_{i,k}\sim_\omega\mathbf{x}_{j,k}$;
\item \label{cond_m_6} for any $i$ $\mathbf{w}_i=\mathbf{w}_i\tau$;
\item \label{cond_m_5} for any $k$ the tuple $(\mathbf{x}_{1,k},...,\mathbf{x}_{m,k})$ is a permutation of $(\mathbf{x}_{m+1,k},...,\mathbf{x}_{2m,k})$.
\end{conditions}
\end{lemma}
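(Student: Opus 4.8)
The plan is to unfold the definition of a ``forbidden construction'' with $n+1$ levels from \cite[Theorem 4.3]{AV00} and to translate its graph-theoretic description directly into the combinatorial conditions (\ref{cond_m_1})--(\ref{cond_m_5}). The key simplification, available because $L$ is an $\mathcal{R}_1$ language, is that membership in $L$ depends only on the $\tau$-image of a word: by Theorem \ref{R1lang} the language is determined by the subset $L\tau$ of the free left regular band, and its minimal automaton is a quotient of the automaton built from $\mathcal{F}(A)$ (subsection \ref{subsec_R1}). I would use the defining identity $xyx=xy$ of $\mathbf{R_1}$ as a normalisation tool: reading a word and reading its $\tau$-image drive the syntactic monoid to the same element, so every word occurring in the construction may be replaced by an equivalent distinct-letter word. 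This is exactly what the conditions $\mathbf{x}_{i,k}=\mathbf{x}_{i,k}\tau$ and $\mathbf{w}_i=\mathbf{w}_i\tau$ record.

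For the forward direction I would start from the $n+1$ levels sitting inside the minimal DFA of $L$. Each accepting and each rejecting path through the construction yields a word; the portion read while traversing level $k$ and moving into level $k+1$ produces the $k$-th block. Replacing each block by its $\tau$-image (legitimate by the remark above) gives distinct-letter blocks $\mathbf{x}_{i,k}$, and their concatenation gives $\mathbf{w}_i=\mathbf{x}_{i,1}\dots\mathbf{x}_{i,n}$, establishing (\ref{cond_m_3}). Condition (\ref{cond_m_4}) — that for $k<n$ all the $k$-th blocks carry the same set of letters — is the statement that at step $k$ every branch sits at the same level of $\mathcal{F}(A)$, i.e. has accumulated the same set of letters so far; condition (\ref{cond_m_6}) records that passing to the next level genuinely adds a fresh letter, so no repetition occurs along a path. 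Finally, the symmetry built into the construction — that its accepting and rejecting sides are the same up to relabelling at each level — is precisely the permutation condition (\ref{cond_m_5}), including at the last level $k=n$ where the letter sets may already differ.

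The converse reverses this. Given words and blocks satisfying (\ref{cond_m_1})--(\ref{cond_m_5}), I would take as the levels of the construction the $\mathcal{F}(A)$-states (equivalently the syntactic classes) reached by the prefixes $\mathbf{x}_{i,1}\dots\mathbf{x}_{i,k}$ for $0\le k\le n$. Condition (\ref{cond_m_4}) guarantees that each block acts as an honest level transition, condition (\ref{cond_m_3}) together with $\mathbf{v}\sim_\omega\mathbf{v}'$ arguments guarantees that the states reached at step $k<n$ all lie at one level, and the permutation condition (\ref{cond_m_5}) supplies the required symmetry between the accepting assemblies $\mathbf{w}_1,\dots,\mathbf{w}_m\in L$ and the rejecting assemblies $\mathbf{w}_{m+1},\dots,\mathbf{w}_{2m}\notin L$ from one and the same multiset of blocks at every position. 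Checking that the resulting labelled layered graph satisfies the defining clauses of a forbidden construction of \cite{AV00} is then routine verification against (\ref{cond_m_1})--(\ref{cond_m_3}).

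The step I expect to be the main obstacle is the bookkeeping in the forward direction: one must show that the words read along the distinct paths can be \emph{simultaneously} normalised to distinct-letter blocks that agree in letter set across all branches at every level $k<n$ (condition (\ref{cond_m_4})) while still realising the permutation symmetry (\ref{cond_m_5}) at all levels, the last one included. This forces a uniform, branch-independent choice of the cycle and transition words at each level, together with a careful level-by-level account of which letters have already been consumed so that $\mathbf{w}_i=\mathbf{w}_i\tau$ survives the concatenation. The identity $xyx=xy$ of $\mathbf{R_1}$ is exactly what makes these simultaneous normalisations mutually consistent, and reconciling it with the minimal-automaton description of the forbidden construction is the delicate point of the argument.
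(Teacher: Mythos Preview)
Your proposal is correct and follows essentially the same approach as the paper: extract paths from the forbidden construction and normalize the transition blocks via $\tau$ in the forward direction (the paper handles your anticipated ``simultaneous normalisation'' by first obtaining words $\mathbf{w}'_i$ satisfying (\ref{cond_m_1})--(\ref{cond_m_5}) and then applying $\tau$ globally to the whole word to secure (\ref{cond_m_6}), using the equal letter sets at intermediate levels to ensure the same letters are deleted in every branch), and rebuild the construction from the words in the converse. The one minor difference is that for the converse the paper uses abstract states $q_{i,k}$ indexed by path and level rather than the $\mathcal{F}(A)$-states you propose, noting explicitly that distinct $q_{i,k}$ may represent the same state of the minimal automaton; this sidesteps any argument about collapses among $\mathcal{F}(A)$-states, but both routes work.
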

\begin{proof}
Assume an $\mathcal{R}_1$ language $L$ has a "forbidden construction" of $n+1$ levels. Let $l_k$ - the number of different labels (words) for transitions between levels $k$ and $k+1$.
Let $\mathbf{z}_{1,k},...,\mathbf{z}_{l_k,k}$ - the words labeling the transitions from the states at level $k$ to the states at level $k+1$.
If $k<n$ there exist transitions labeled $\mathbf{z}_{1,k}^\prime,...,\mathbf{z}_{l_k,k}^\prime$ between the states at level $k$ to the states at level $k+1$
such that $\mathbf{z}_{1,k}^\prime\sim_\omega...\sim_\omega\mathbf{z}_{l_k,k}^\prime$ and for all $i$ $\mathbf{z}_{i,k}^\prime\tau=\mathbf{z}_{i,k}^\prime$.
If $k=n$ there exist transitions labeled
$\mathbf{z}_{1,n}^\prime,...,\mathbf{z}_{l_n,n}^\prime$ between the states at level $n$ to the states at level $n+1$ such that for all $i$ $\mathbf{z}_{i,n}^\prime\tau=\mathbf{z}_{i,n}^\prime$.
The states at level $n+1$ are a disjoint union of the sets $D_{1,n},...,D_{l_n,n}$. Therefore the last level
has $m$ accepting and $m$ rejecting states, where $m>0$. Hence there are $m$ words $\mathbf{w}_i^\prime$ in $L$, $1\leqslant i\leqslant m$, and $m$ words $\mathbf{w}_j^\prime$ not in $L$, $m+1\leqslant j\leqslant2m$.
For any $i$, $1\leqslant i\leqslant2m$, $\mathbf{w}_i^\prime=\mathbf{x}_{i,1}^\prime...\mathbf{x}_{i,n}^\prime$, where $\mathbf{x}_{i,k}^\prime$ is equal to some label
$\mathbf{z}_{s,k}^\prime$, where $1\leqslant s\leqslant l_k$. So the words $\mathbf{w}_i^\prime$ satisfy the conditions (\ref{cond_m_1}-\ref{cond_m_4}).
Consider the set $D_{s,k}$. Since it has the same number of accepting and rejecting states, $\mathbf{z}_{s,k}^\prime$ occurs the same number of times in the tuples
$(\mathbf{x}_{1,k}^\prime,...,\mathbf{x}_{m,k}^\prime)$ and $(\mathbf{x}_{m+1,k}^\prime,...,\mathbf{x}_{2m,k}^\prime)$. This implies the condition (\ref{cond_m_5}).
For all $i$, let $\mathbf{w}_i=\mathbf{w}_i^\prime\tau$. Since for all $i,j$ and for all $k$ less than $n$ $\mathbf{x}_{i,k}^\prime\sim_\omega\mathbf{x}_{j,k}^\prime$, the application of $\tau$ to the words $\mathbf{w}_i^\prime$
will delete for any $k$ the same letters in $\mathbf{x}_{1,k}^\prime,...,\mathbf{x}_{2m,k}^\prime$, thus producing words $\mathbf{x}_{1,k},...,\mathbf{x}_{2m,k}$. So for any $i$
$\mathbf{w}_i=\mathbf{x}_{i,1}...\mathbf{x}_{i,n}$. The words $\mathbf{w}_1,...,\mathbf{w}_{2m}$ satisfy all the conditions (\ref{cond_m_1}-\ref{cond_m_5}).

Now suppose the language $L$ satisfies the conditions (\ref{cond_m_1}-\ref{cond_m_5}). It is possible to construct a following "forbidden construction".
Level $1$ consists of a state $q_1$ and the words $\mathbf{x}_{1,1},...,\mathbf{x}_{2m,1}$. Level $2$ consists of states $q_{1,2},...,q_{2m,2}$, such that for any $i,j$ $q_1\mathbf{x}_{i,1}=q_{i,2}$,
$q_{i,2}\mathbf{x}_{j,1}=q_{i,2}$.
Level $2$ also has the words $\mathbf{x}_{1,2},...,\mathbf{x}_{2m,2}$. Level $k$, $3\leqslant k\leqslant n$, consists of states $q_{1,k},...,q_{2m,k}$, such that for any $i,j$ $q_{i,k-1}\mathbf{x}_{i,k-1}=q_{i,k}$,
$q_{i,k}\mathbf{x}_{j,k-1}=q_{i,k}$.
Level $k$ also has the words $\mathbf{x}_{1,k},...,\mathbf{x}_{2m,k}$. Level $n+1$ consists of states $q_{1,n+1},...,q_{2m,n+1}$, such that for any $i$ $q_{i,n}\mathbf{x}_{i,n}=q_{i,n+1}$.
(Within a "forbidden construction", two states may represent the same state in a minimal deterministic automaton, so it is legible to have the same label
in two transitions outgoing a single state.)
The states $q_{1,n+1},...,q_{m,n+1}$ are accepting and the states $q_{m+1,n+1},...,q_{2m,n+1}$ are rejecting.
\qed
\end{proof}
\begin{theorem}
\label{theor_forb_constr}
There exists an $\boldsymbol{\mathcal{ER}}$ language that does not contain any of the "forbidden constructions" and still cannot be recognized by MM-BQFA.
\end{theorem}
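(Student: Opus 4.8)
The plan is to reuse the concrete language $L=\{\mathbf{ab},\mathbf{bac}\}$ over $A=\{a,b,c\}$ from Corollary~\ref{cor_subset}. Being $\mathcal{R}_1$, it lies in $\boldsymbol{\mathcal{ER}}$, and Corollary~\ref{cor_subset} combined with Theorem~\ref{theor_sys_not_consist} already gives that its system $\mathfrak{L}$ is inconsistent, hence $L$ is recognized by no MM-BQFA. The entire force of the theorem thus reduces to showing that $L$ contains \emph{no} forbidden construction, which by Lemma~\ref{lemma_forb_constr} amounts to proving that the conditions (\ref{cond_m_1}--\ref{cond_m_5}) fail for every $m$ and $n$.

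First I would record the structural constraints that Lemma~\ref{lemma_forb_constr} imposes. Condition~(\ref{cond_m_6}) forces every witness word $\mathbf{w}_i$ to be repetition-free, so the accepting words $\mathbf{w}_1,\dots,\mathbf{w}_m$ must all belong to $L\cap\mathcal{F}(A)=\{\mathbf{ab},\mathbf{bac}\}$. Because a forbidden construction labels the transition between two consecutive levels by a nonempty word, each block $\mathbf{x}_{i,k}$ is nonempty, and by~(\ref{cond_m_6}) the $n$ blocks of a single word are pairwise letter-disjoint; hence $n\leqslant|\mathbf{w}_i|$. Finally, condition~(\ref{cond_m_4}) fixes, for every $k<n$, a common letter set for all level-$k$ blocks, and since $\mathbf{ab}$ and $\mathbf{bac}$ have rigidly ordered letters this determines their factorizations uniquely once the cumulative letter sets are prescribed.

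With these in hand I would split on whether some accepting word equals $\mathbf{ab}$. If it does, then $n\leqslant2$. The case $n=1$ is impossible, since~(\ref{cond_m_5}) would equate the accepting and rejecting word multisets although one sits in $L$ and the other in its complement. For $n=2$ the only factorization of $\mathbf{ab}$ is $\mathbf{a}\mid\mathbf{b}$, so~(\ref{cond_m_4}) forces every level-$1$ block to equal $\mathbf{a}$; this rules $\mathbf{bac}$ out of the accepting side, pins every accepting word to $\mathbf{a}\mid\mathbf{b}$, and then~(\ref{cond_m_5}) forces every rejecting word to equal $\mathbf{ab}\in L$, a contradiction. In the remaining case every accepting word equals $\mathbf{bac}$; condition~(\ref{cond_m_4}) and the fixed order $b,a,c$ force all accepting words to share one factorization, whereupon~(\ref{cond_m_5}) applied level by level forces each rejecting word to equal $\mathbf{bac}\in L$, again a contradiction. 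Since no admissible pair $m,n$ survives, $L$ carries no forbidden construction.

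I expect the delicate point to be the nonemptiness of the inter-level labels, which is exactly what makes the bound $n\leqslant|\mathbf{w}_i|$ bite: were empty blocks permitted, $L$ would spuriously satisfy the two-level pattern $\mathbf{ab}=\mathbf{ab}\mid\varepsilon$, $\mathbf{ba}=\mathbf{ba}\mid\varepsilon$, $\mathbf{bac}=\mathbf{ba}\mid\mathbf{c}$, $\mathbf{abc}=\mathbf{ab}\mid\mathbf{c}$, whose level multisets $\{\mathbf{ab},\mathbf{ba}\}$ and $\{\varepsilon,\mathbf{c}\}$ match. Ruling this out by appealing to the genuine level structure of a forbidden construction, and then carrying the letter-set and permutation bookkeeping through the short case analysis, is where the real work lies.
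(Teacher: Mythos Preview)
Your approach has a genuine gap: the language $L=\{\mathbf{ab},\mathbf{bac}\}$ \emph{does} carry a forbidden construction. The decomposition you yourself exhibit at the end, namely $\mathbf{w}_1=\mathbf{ab}=\mathbf{ab}\mid\varepsilon$, $\mathbf{w}_2=\mathbf{bac}=\mathbf{ba}\mid\mathbf{c}$ on the accepting side and $\mathbf{w}_3=\mathbf{ba}=\mathbf{ba}\mid\varepsilon$, $\mathbf{w}_4=\mathbf{abc}=\mathbf{ab}\mid\mathbf{c}$ on the rejecting side, satisfies every hypothesis of Lemma~\ref{lemma_forb_constr} with $m=n=2$: conditions (\ref{cond_m_1})--(\ref{cond_m_6}) are immediate, and for (\ref{cond_m_5}) the level-$1$ tuple $(\mathbf{ab},\mathbf{ba})$ is a permutation of $(\mathbf{ba},\mathbf{ab})$ while the level-$2$ tuple $(\varepsilon,\mathbf{c})$ equals $(\varepsilon,\mathbf{c})$. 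Nothing in the lemma's statement forbids empty blocks, and its ``if'' direction builds the construction while explicitly allowing two states at different levels to represent the same state of the minimal automaton, so an empty last label is not excluded. Your attempt to ``rule this out by appealing to the genuine level structure'' is an assertion, not an argument.

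This is precisely why the paper cannot get away with the $3$-letter example from Corollary~\ref{cor_subset} and instead builds a language over five letters, $L=\{\mathbf{aedbc},\mathbf{beca},\mathbf{beda},\mathbf{bedac},\mathbf{eacb},\mathbf{eacbd},\mathbf{eadbc},\mathbf{ebca}\}$. The inconsistency of $\mathfrak{L}(L)$ now requires a carefully chosen subsystem of eight inequalities (grouped via auxiliary sums $a_i,b_i,c_i$) rather than four, and the verification that no forbidden construction exists demands checking the conditions of Lemma~\ref{lemma_forb_constr} against all levels from $3$ up to $6$. The extra complexity is the content of the theorem: the simplest languages with inconsistent $\mathfrak{L}$ are already captured by the constructions of \cite{AV00}, and one must work harder to exhibit a language that escapes them.
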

\begin{proof}
Consider an $\mathcal{R}_1$ language $L=\{\mathbf{aedbc}$, $\mathbf{beca}$, $\mathbf{beda}$, $\mathbf{bedac}$, $\mathbf{eacb}$, $\mathbf{eacbd}$, $\mathbf{eadbc}$, $\mathbf{ebca}\}$
over alphabet $A=\{a,b,c,d,e\}$.
Among others, the system $\mathfrak{L}$ has the following inequalities:
$$
\begin{array}{lclclclclclcrcl}
\mathfrak{L}(\mathbf{aedbc})&=&x_0+x_a&+&x_{ae}&+&x_{aed}&+&x_{adeb}&+&x_{abdec}&+&y_{abcde}&\geqslant&p_2\\
\mathfrak{L}(\mathbf{beda})&=&x_0+x_b&+&x_{be}&+&x_{bed}&+&x_{bdea}&+&y_{abde}&&&\geqslant&p_2\\
\mathfrak{L}(\mathbf{eacbd})&=&x_0+x_e&+&x_{ea}&+&x_{aec}&+&x_{aceb}&+&x_{abced}&+&y_{abcde}&\geqslant&p_2\\
\mathfrak{L}(\mathbf{ebca})&=&x_0+x_e&+&x_{eb}&+&x_{bec}&+&x_{bcea}&+&y_{abce}&&&\geqslant&p_2\\
\mathfrak{L}(\mathbf{aecb})&=&x_0+x_a&+&x_{ae}&+&x_{aec}&+&x_{aceb}&+&y_{abce}&&&\leqslant&p_1\\
\mathfrak{L}(\mathbf{becad})&=&x_0+x_b&+&x_{be}&+&x_{bec}&+&x_{bcea}&+&x_{abced}&+&y_{abcde}&\leqslant&p_1\\
\mathfrak{L}(\mathbf{eadb})&=&x_0+x_e&+&x_{ea}&+&x_{aed}&+&x_{adeb}&+&y_{abde}&&&\leqslant&p_1\\
\mathfrak{L}(\mathbf{ebdac})&=&x_0+x_e&+&x_{eb}&+&x_{bed}&+&x_{bdea}&+&x_{abdec}&+&y_{abcde}&\leqslant&p_1\\
&&&&&&&&&&&&p_1&<&p_2
\end{array}
$$
$
\begin{array}{rclrclrcl}
\text{Let\qquad}a_1&=&x_0+x_a+x_{ae},&\qquad b_1&=&x_{aed}+x_{adeb},&\qquad c_1&=&x_{abdec}+y_{abcde},\\
a_2&=&x_0+x_b+x_{be},&\qquad b_2&=&x_{bed}+x_{bdea},&\qquad c_2&=&y_{abde},\\
a_3&=&x_0+x_e+x_{ea},&\qquad b_3&=&x_{aec}+x_{aceb},&\qquad c_3&=&x_{abced}+y_{abcde},\\
a_4&=&x_0+x_e+x_{eb},&\qquad b_4&=&x_{bec}+x_{bcea},&\qquad c_4&=&y_{abce}.\\
\end{array}
$

$$
\begin{array}{rclrclr}
\text{We obtain inequalities\quad}a_1+b_1+c_1&\geqslant&p_2,&\qquad a_1+b_3+c_4&\leqslant&p_1,&\qquad p_1<p_2,\\
a_2+b_2+c_2&\geqslant&p_2,&\qquad a_2+b_4+c_3&\leqslant&p_1,\\
a_3+b_3+c_3&\geqslant&p_2,&\qquad a_3+b_1+c_2&\leqslant&p_1,\\
a_4+b_4+c_4&\geqslant&p_2,&\qquad a_4+b_2+c_1&\leqslant&p_1,
\end{array}
$$
which define a system that is not consistent. Hence $\mathfrak{L}$ is not consistent as well.
So by Theorem \ref{theor_sys_not_consist} $L$ cannot be recognized by MM-BQFA.

Let us check if $L$ contains any of the "forbidden constructions" from \cite[Theorem 4.3]{AV00}.
Since $L$ is $\mathcal{R}$-trivial idempotent and $|A|=5$, if $L$ contains some "forbidden construction",
by Lemma \ref{lemma_forb_constr}, it also must contain a construction with number of levels not larger than $6$.
Therefore it remains to check the conditions of Lemma \ref{lemma_forb_constr} against constructions with number of levels equal to $3,4,5$ and $6$.
In case of $3$ levels, it is sufficient to verify that any subset of  $\{\mathbf{aedbc}$, $\mathbf{beca}$, $\mathbf{beda}$, $\mathbf{bedac}$, $\mathbf{eacb}$, $\mathbf{eacbd}$, $\mathbf{eadbc}$, $\mathbf{ebca}\}$
with at least two elements does not form the words $\mathbf{w}_1,...,\mathbf{w}_m$ satisfying all the conditions of Lemma \ref{lemma_forb_constr}. Actually, it is sufficient to check only the subsets
of $\{\mathbf{aedbc}$, $\mathbf{eacb}$, $\mathbf{eacbd}$, $\mathbf{eadbc}\}$, $\{\mathbf{beca}$, $\mathbf{beda}$, $\mathbf{bedac}$, $\mathbf{ebca}\}$,
$\{\mathbf{aedbc}$, $\mathbf{beda}$, $\mathbf{bedac}$, $\mathbf{eadbc}\}$ and $\{\mathbf{beca}$, $\mathbf{eacb}$, $\mathbf{eacbd}$, $\mathbf{ebca}\}$.
None of these subsets satisfy the conditions of the lemma.
The cases with $4,5$ and $6$ levels are checked in the same way. So $L$ does not contain any of the "forbidden constructions".
\qed
\end{proof}


\begin{thebibliography}{99}
\bibitem{A94}
J. Almeida. Finite Semigroups and Universal Algebra. {\em World
Scientific, Singapore}, 1994.
\bibitem{AT04}
A. Ambainis, M. Beaudry, M. Golovkins, A. \c Kikusts, M. Mercer, D.
Th\'erien. Algebraic Results on Quantum Automata. {\em Theory of
Computing Systems, Vol. 39(1), pp. 165-188}, 2006.
\bibitem{AF98}
A. Ambainis, R. Freivalds. 1-Way Quantum Finite Automata: Strengths,
Weaknesses and Generalizations. {\em Proc. 39th FOCS, pp. 332-341},
1998.
\bibitem{AK03}
A. Ambainis, A. \c Kikusts. Exact Results for Accepting Probabilities
of Quantum Automata. {\em Theoretical Computer Science, Vol. 295, pp. 3-25},
2003.
\bibitem{AV00}
A. Ambainis, A. \c Kikusts, M. Valdats. On the Class of Languages
Recognizable by 1-Way Quantum Finite Automata. {\em STACS 2001,
Lecture Notes in Computer Science, Vol. 2010, pp. 75-86}, 2001.
\bibitem{AW02}
A. Ambainis, J. Watrous.
Two-Way Finite Automata with Quantum and Classical States.
{\em Theoretical Computer Science, Vol. 287, pp. 299-311}, 2002.
\bibitem{B09}
D.S. Bernstein. Matrix Mathematics: Theory, Facts, and Formulas.
{\em Princeton University Press}, 2009.
\bibitem{BP03}
A. Bertoni, C. Mereghetti, B. Palano. Quantum Computing: 1-Way Quantum Automata.
{\em DLT 2003, Lecture Notes in Computer Science, Vol. 2710, pp. 1-20}, 2003.
\bibitem{BP10}
M.P. Bianchi, B. Palano. Behaviours of Unary Quantum Automata. {\em
Fundamenta Informaticae, Vol. 104, pp. 1-15}, 2010.
\bibitem{BP99}
A. Brodsky, N. Pippenger. Characterizations of 1-Way Quantum Finite
Automata. {\em SIAM Journal on Computing, Vol. 31(5), pp.
1456-1478}, 2002.
\bibitem{C75}
M.D. Choi. Completely Positive Linear Maps on Complex Matrices. {\em
Linear Algebra and Applications, Vol. 10, pp. 285-290}, 1975.
\bibitem{C01}
M.P. Ciamarra. Quantum Reversibility and a New Model of Quantum Automaton.
{\em FCT 2001, Lecture Notes in Computer Science, Vol. 2138, pp. 376-379}, 2001.
\bibitem{D03}
I. Dzelme. Gal\=\i gs Kvantu Autom\=ats ar Jauktiem St\=avok\c liem. {\em Bakalaura darbs.
Latvijas Universit\=ate, Fizikas un matem\=atikas fakult\=ate}, 2003.
\bibitem{E76} S. Eilenberg. Automata, Languages and Machines, Vol.
B. {\em Academic Press, New York}, 1976.
\bibitem{GK02}
M. Golovkins, M. Kravtsev. Probabilistic Reversible Automata and
Quantum Automata. {\em COCOON 2002, Lecture Notes in Computer
Science, Vol. 2387, pp. 574-583}, 2002.
\bibitem{GK09}
M. Golovkins, M. Kravtsev, V. Kravcevs. On a Class of Languages
Recognizable by Probabilistic Reversible Decide-and-Halt Automata.
{\em Theoretical Computer Science, Vol. 410(20), pp. 1942-1951},
2009.
\bibitem{GP10}
M. Golovkins, J.E. Pin. Varieties Generated by Certain Models of
Reversible Finite Automata. {\em Chicago Journal of Theoretical
Computer Science, Vol. 2010, Article 2}, 2010.
\bibitem{H62}
I. Halperin. The Product of Projection Operators.
{\em Acta Scientiarum Mathematicarum (Szeged), Vol. 23(1-2), pp. 96-99}, 1962.
\bibitem{HW79}
G.H. Hardy, E.M. Wright. An Introduction to the Theory of Numbers. Fifth
Edition. {\em Oxford University Press}, 1979.
\bibitem{HS66}
J. Hartmanis, R.E. Stearns. Algebraic Structure Theory of Sequential
Machines. {\em Prentice Hall}, 1966.
\bibitem{H10}
M. Hirvensalo. Quantum Automata with Open Time Evolution.
{\em International Journal of Natural Computing Research, Vol. 1(1), pp. 70-85}, 2010.
%
%
\bibitem{KW97}
A. Kondacs, J. Watrous. On The Power of Quantum Finite State
Automata. {\em Proc. 38th FOCS, pp. 66-75}, 1997.
\bibitem{K03}
G. Kuperberg. The Capacity of Hybrid Quantum Memory.
{\em IEEE Transactions on Information Theory, Vol. 49-6, pp. 1465 - 1473}, 2003.
\bibitem{LM10}
L.~Li, D.~Qiu, X.~Zou, L.~Li, L.~Wu, P.~Mateus.
Characterizations of One-Way General Quantum Finite Automata.
\email{http://arxiv.org/abs/0911.3266}, 2010.
\bibitem{LP10}
C.~Liu, N.~Petulante.
On Limiting Distributions of Quantum Markov Chains.\\
\email{http://arxiv.org/abs/1010.0741}, 2011.
\bibitem{M08}
M. Mercer. Lower Bounds for Generalized Quantum Finite Automata.
{\em LATA 2008, Lecture Notes in Computer Science, Vol. 5196, pp.
373-384}, 2008.
%
\bibitem{MC97}
C. Moore, J.P. Crutchfield. Quantum Automata and Quantum Grammars.
{\em Theoretical Computer Science, Vol. 237(1-2), pp. 275-306},
2000.
\bibitem{N99}
A. Nayak. Optimal Lower Bounds for Quantum Automata and Random
Access Codes. {\em Proc. 40th FOCS, pp. 369-377}, 1999.
\bibitem{NC00}
M.A. Nielsen, I.L. Chuang. Quantum Computation and Quantum
Information. {\em Cambridge University Press}, 2000.
\bibitem{NAJ10}
J. Novotn\'{y}, G. Alber, I. Jex. Asymptotic Evolution of Random Unitary Operations.
{\em Central European Journal of Physics, Vol. 8(6), pp. 1001-1014}, 2010.
\bibitem{P00}
K. Paschen. Quantum Finite Automata using Ancilla Qubits.
{\em Interner Bericht. Universit\"at Karlsruhe, Fakult\"at f\"ur Informatik}, 2000.
\bibitem{PG06}
D. P\'{e}rez-Garc\'{i}a, M.M. Wolf, D. Petz, M.B. Ruskai.
Contractivity of Positive and Trace Preserving Maps under $L_p$ Norms.
{\em Journal of Mathematical Physics, Vol. 47-8, 083506}, 2006.
\bibitem{P86}
J.E. Pin. Varieties of Formal Languages. {\em North Oxford, London
and Plenum, New-York}, 1986.
\bibitem{P95}
J.E. Pin. BG = PG, a Success Story. {\em NATO Advanced Study
Institute. Semigroups, Formal Languages and Groups, J. Fountain
(ed.), pp. 33-47, Kluwer Academic Publishers}, 1995.
\bibitem{PT84}
J.E. Pin, H. Straubing, D. Th\'erien. Small Varieties of Finite
Semigroups and Extensions. {\em J. Austral. Math. Soc. (Series A),
Vol. 37, pp. 269-281}, 1984.
\bibitem{R63}
M.O. Rabin. Probabilistic Automata.
{\em Information and Control, Vol. 6(3), pp. 230-245}, 1963.
\bibitem{T68}
G. Thierrin. Permutation Automata. {\em Mathematical Systems Theory,
Vol. 2(1), pp. 83-90}, 1968.
\bibitem{Z99}
F. Zhang. Matrix Theory: Basic Results and Techniques. {\em
Springer-Verlag}, 1999.
\end{thebibliography}
\end{document}